\newenvironment{reminder}[1]{\smallskip

\noindent {\bf Reminder of #1 }\em}{\smallskip}
\newtheorem{theorem}{Theorem}
\newtheorem{lemma}{Lemma}
\newtheorem{claim}{Claim}
\newtheorem{definition}{Definition}
 \gdef\xxxmark{%
   \expandafter\ifx\csname @mpargs\endcsname\relax 
     \expandafter\ifx\csname @captype\endcsname\relax 
       \marginpar{xxx}
     \else
       xxx 
     \fi
   \else
     xxx 
   \fi}
 \gdef\xxx{\@ifnextchar[\xxx@lab\xxx@nolab}
 \long\gdef\xxx@lab[#1]#2{{\bf [\xxxmark #2 ---{\sc #1}]}}
 \long\gdef\xxx@nolab#1{{\bf [\xxxmark #1]}}
\def \eps {\varepsilon}
\begin{document}

\def \isnotin {\nsubseteq}

\def \eps {\varepsilon}
\title{If the Current Clique Algorithms are Optimal, \\ so is Valiant's Parser}
\author{
	Amir Abboud\\ Stanford University \\ \texttt{\small abboud@cs.stanford.edu}
	\and Arturs Backurs\\ MIT \\ \texttt{ \small backurs@mit.edu}
	\and Virginia Vassilevska Williams\\ Stanford University \\ \texttt{\small virgi@cs.stanford.edu}
}
\date{}
\begin{titlepage}
\clearpage\maketitle
\thispagestyle{empty}

\abstract{
The CFG recognition problem is: given a context-free grammar $\mathcal{G}$ and a string $w$ of length $n$, decide if $w$ can be obtained from $\mathcal{G}$.
This is the most basic parsing question and is a core computer science problem.
Valiant's parser from 1975 solves the problem in $O(n^{\omega})$ time, where $\omega<2.373$ is the matrix multiplication exponent.
Dozens of parsing algorithms have been proposed over the years, yet Valiant's upper bound remains unbeaten.
The best \emph{combinatorial} algorithms have mildly subcubic $O(n^3/\log^3{n})$ complexity.



Lee (JACM'01) provided evidence that fast matrix multiplication is needed for CFG parsing, and that very efficient and practical algorithms might be hard or even impossible to obtain. Lee showed
that any algorithm for a more general parsing problem with running time $O(|\mathcal{G}|\cdot n^{3-\eps})$ can be converted into a surprising subcubic algorithm for Boolean Matrix Multiplication. Unfortunately, Lee's hardness result required that the grammar size be $|\mathcal{G}|=\Omega(n^6)$.
Nothing was known for the more relevant case of constant size grammars.

In this work, 
we prove that \emph{any} improvement on Valiant's algorithm, even for constant size grammars, either in terms of runtime or by avoiding the inefficiencies of fast matrix multiplication, would imply a breakthrough algorithm for the $k$-Clique problem: given a graph on $n$ nodes, decide if there are $k$ that form a clique.

Besides classifying the complexity of a fundamental problem, our reduction has led us to similar lower bounds for more modern and well-studied cubic time problems for which faster algorithms are highly desirable in practice: \emph{RNA Folding}, a central problem in computational biology, and \emph{Dyck Language Edit Distance}, answering an open question of Saha (FOCS'14).
}
\end{titlepage}

\section{Introduction}
Context-free grammars (CFG) and languages (CFL), introduced by Chomsky in 1956 \cite{chomsky}, play a fundamental role in computability theory \cite{sipserbook}, formal language theory \cite{HopcroftAutomata}, programming languages \cite{Aho86}, natural language processing \cite{Jurafsky2000}, and computer science in general with applications in diverse areas such as computational biology \cite{biobook} and databases \cite{Korn13}.
They are essentially a sweet spot between very expressive languages (like natural languages) that computers cannot parse well, and the more restrictive languages (like regular languages) that even a DFA can parse.
%

In this paper, we will be concerned with the following very basic definitions.
A CFG $\mathcal{G}$ in \emph{Chomsky Normal Form} over a set of terminals (i.e. alphabet) $\Sigma$ consists of a set of nonterminals $\mathcal{T}$, including a specified starting symbol $\mathbf{S} \in \mathcal{T}$, and a set of productions (or derivation rules) of the form $\mathbf{A \to B \ C}$ or $\mathbf{A} \to \sigma$ for some $\mathbf{A,B,C} \in \mathcal{T}$ and $\sigma \in \Sigma$.
Each CFG $\mathcal{G}$ defines a CFL $\mathcal{L(G)}$ of strings in $\Sigma^*$ that can be obtained by starting with $\mathbf{S}$ and recursively applying arbitrary derivation rules from the grammar.
The \emph{CFG recognition} problem is: given a CFG $\mathcal{G}$ and a string $w \in \Sigma^*$ determine if $w$ can be obtained from $\mathcal{G}$ (i.e. whether $w \in \mathcal{L(G)}$).
The problem is of most fundamental and practical interest when we restrict $\mathcal{G}$ to be of fixed size and let the length of the string $n= |w|$ to be arbitrary.

The main question we will address in this work is: \emph{what is the time complexity of the CFG recognition problem?}

Besides the clear theoretical importance of this question, the practical motivation is overwhelming.
CFG recognition is closely related to the \emph{parsing} problem in which we also want to output a possible derivation sequence of the string from the grammar (if $w \in \mathcal{L(G)}$). Parsing is essential: this is how computers understand our programs, scripts, and algorithms.
Any algorithm for parsing solves the recognition problem as well, and Ruzzo~\cite{Ruzzo} showed that CFG recognition is at least as hard as parsing, at least up to logarithmic factors, making the two problems roughly equivalent.


Not surprisingly, the critical nature of CFG recognition has led to the development of a long list of clever algorithms for it, including classical works \cite{Valiant,earley,cocke,younger,kasami,knuthlr,deremer,insideoutside,clrparser}, and the search for practical parsing algorithms, that work well for varied applications, is far from over \cite{PK09,RSCJ10,Soc13,CSC13}.
For example, the canonical CYK algorithm from the 1960's \cite{cocke,kasami,younger} constructs a dynamic programming table $D$ of size $n\times n$ such that cell $D(i,j)$ contains the list of all nonterminals that can produce the substring of $w$ from position $i$ to position $j$.
The table can be computed with linear time per entry, by enumerating all derivation rules $\mathbf{A \to B \ C}$ and checking whether for some $i\leq k\leq j$, $D(i,k)$ contains $\mathbf{B}$ and $D(k+1,j)$ contains $\mathbf{C}$ (and if so, add $\mathbf{A}$ to $D(i,j)$).
This gives an upper bound of $O(n^3)$ for the problem.
Another famous algorithm is Earley's from 1970 \cite{earley} which proceeds by a top-down dynamic programming approach and could perform much faster when the grammar has certain properties.
Variants of Earley's algorithm were shown to run in mildly subcubic $O(n^3/\log^2 n)$ time \cite{Graham80,Rytter85}. \footnote{As typical, we distinguish between ``truly subcubic'' runtimes, $O(n^{3-\eps})$ for constant $\eps>0$, and ``mildly subcubic'' for all other subcubic runtimes.} 

In 1975 a big theoretical breakthrough was achieved by Valiant~\cite{Valiant} who designed a sophisticated recursive algorithm that is able to utilize many fast boolean matrix multiplications to speed up the computation of the dynamic programming table from the CYK algorithm.
The time complexity of the CFG problem decreased to $O(g^2 n^{\omega})$, where $\omega<2.373$ is the matrix multiplication exponent~\cite{v12,legallmult} and $g$ is the size of the grammar; because in most applications $g=O(1)$, Valiant's runtime is often cited as $O(n^\omega)$.
In 1995 Rytter~\cite{Rytter95} described this algorithm as ``\emph{probably the most interesting algorithm related to formal languages}" and it is hard to argue with this quote even today, 40 years after Valiant's result.
Follow-up works 
proposed simplifications of the algorithm \cite{Rytter95}, generalized it to stochastic CFG parsing \cite{BS07}, and applied it to other problems~\cite{akutsu,ZTZ10}.

Despite its vast academic impact, Valiant's algorithm has enjoyed little success in practice.
The theoretically fastest matrix multiplication algorithms are not currently practical, and Valiant's algorithm can often be outperformed by \emph{``combinatorial''} methods in practice, even if the most practical truly subcubic fast matrix multiplication algorithm (Strassen's~\cite{strassen}) is used.
Theoretically, the fastest combinatorial algorithms for Boolean Matrix Multiplication (BMM) run in time $O(n^3/\log^4{n})$ \cite{Chan15,huachengbmm}.
To date, no combinatorial algorithm for BMM or CFG recognition with truly subcubic running time is known.

In the absence of efficient algorithms and the lack of techniques for proving
superlinear unconditional lower bounds for any natural problem, researchers have turned to conditional lower bounds for CFG recognition and parsing.
Since the late 1970's, Harrison and Havel \cite{HH74} observed that any algorithm for the problem would imply an algorithm that can verify a Boolean matrix multiplication of two $\sqrt{n} \times \sqrt{n}$ matrices.
This reduction shows that a combinatorial $O(n^{1.5-\eps})$ recognition algorithm would imply a breakthrough subcubic algorithm for BMM.
Ruzzo \cite{Ruzzo} showed that a parsing algorithm that says whether each prefix of the input string is in the language, could even compute the BMM of two $\sqrt{n}\times \sqrt{n}$ matrices.
Even when only considering combinatorial algorithms, these $\Omega(n^{1.5})$ lower bounds left a large gap compared to the cubic upper bound.
A big step towards \emph{tight} lower bounds was in the work of Satta \cite{Satta} on parsing Tree Adjoining Grammars, which was later adapted by Lee \cite{Lee} to prove her famous conditional lower bound for CFG parsing.
Lee proved that BMM of two $n \times n$ matrices can be reduced to ``parsing" a string of length $O(n^{1/3})$ with respect to a CFG of size $\Theta(n^2)$, where the parser is required to say for each nonterminal $T$ and substring $w[i:j]$ whether $T$ can derive $w[i:j]$ in a valid derivation of $w$ from the grammar.
This reduction proves that such parsers cannot be combinatorial and run in $O(gn^{3-\eps})$ time without implying a breakthrough in BMM algorithms. 

Lee's result is important, however suffers from significant limitations which have been pointed out by many researchers (e.g. \cite{Ruzzo,Lee,SahaDyckApprox,SahaArxiv}). We describe a few of these below. Despite the limitations, however, the only progress after Lee's result is a recent observation by Saha \cite{Saha15} that one can replace BMM in Lee's proof with APSP by augmenting the production rules with probabilities, thus showing an APSP-based lower bound for \emph{Stochastic CFG} parsing.
Because Saha uses Lee's construction, her lower bound suffers from exactly the same limitations.



%

The first (and most major) limitation of Lee's lower bound is that it is irrelevant unless the size of the grammar is much larger than the string, in particular it is cubic only when $g=\Omega(n^6)$.
A CFG whose description needs to grow with the input string does not really define a CFL, and as Lee points out, this case can be unrealistic in many applications.
In programming languages, for instance, the grammar size is much smaller than the programs one is interested in, and in fact the grammar can be hardcoded into the parser.
A parsing algorithm that runs in time $O(g^{3} n)$, which is \emph{not} ruled out by Lee's result, could be much more appealing than one that runs in  $O(gn^{2.5})$ time.

The second limitation of both Lee's and Ruzzo's lower bounds is the quite demanding requirement from the parser to provide extra information besides returning some parse tree.
These lower bounds do not hold for recognizers nor any parser with minimal but meaningful output.

Theoretically, it is arguably more fundamental to ask: what is the time complexity of CFG recognition and parsing that can be obtained by \emph{any} algorithm, not necessarily combinatorial?
Lee's result cannot give a meaningful answer to this question.
To get a new upper bound for BMM via Lee's reduction, one needs a parser that runs in near-linear time.
Lee's result does not rule out, say, an $O(n^{1.11})$ time parser that uses fast matrix multiplication; such a parser would be an amazing result.


Our first observation is that to answer these questions and understand the complexity of CFG recognition, we may need to find a problem other than BMM to reduce from.
Despite the apparent similarity in complexities of both problems - we expect both to be cubic for combinatorial algorithms and $O(n^{\omega})$ for unrestricted ones - there is a big gap in complexities because of the input size.
When the grammar is fixed, a reduction cannot encode any information in the grammar and can only use the $n$ letters of the string, i.e. $O(n)$ bits of information, while an instance of BMM requires $\Theta(n^2)$ bits to specify. Thus, at least with respect to reductions that produce a single instance of CFG recognition, we do not expect BMM to imply a higher than $\Omega(n^{1.5})$ lower bound for parsing a fixed size grammar.

\paragraph{Main Result}
In this paper we present a tight reduction from the $k$-Clique problem to the recognition of a fixed CFG and prove a new lower bound for CFG recognition that overcomes all the above limitations of the previously known lower bounds.
Unless a breakthrough $k$-Clique algorithm exists, our lower bound completely matches Valiant's 40-year-old upper bound for unrestricted algorithms and completely matches CYK and Earley's for combinatorial algorithms, thus resolving the complexity of CFG recognition even on fixed size grammars.

Before formally stating our results, let us give some background on $k$-Clique.
This fundamental graph problem asks whether a given undirected unweighted graph on $n$ nodes and $O(n^2)$ edges contains a clique on $k$ nodes. 
This is the parameterized version of the famously NP-hard Max-Clique (or equivalently, Max-Independent-Set) \cite{Karp72}.
$k$-Clique is amongst the most well-studied problems in theoretical computer science, and it is the canonical intractable (W[1]-complete) problem in parameterized complexity.

A naive algorithm solves $k$-Clique in $O(n^k)$ time.
By a reduction from 1985 to BMM on matrices of size $n^{k/3} \times n^{k/3}$ it can be solved with fast matrix multiplication in $O(n^{\omega k/3})$ time \cite{NP85} whenever $k$ is divisible by $3$ (otherwise, more ideas are needed \cite{EG04}). No better algorithms are known, and researchers have wondered if improvements are possible~\cite{woeginger,babai}.
As is the case for BMM, obtaining faster than trivial \emph{combinatorial} algorithms, by more than polylogarithmic factors, for $k$-Clique is a longstanding open question.
The fastest combinatorial algorithm runs in $O(n^k/\log^k{n})$ time \cite{VClique}.

Let $0 \leq F \leq \omega$ and $0 \leq C \leq 3$ be the smallest numbers such that $3k$-Clique can be solved combinatorially in $O(n^{Ck})$ time and in $O(n^{Fk})$ time by any algorithm, for any (large enough) constant $k\geq 1$.
A conjecture in graph algorithms and parameterized complexity is that $C=3$ and $F = \omega$.
It is known that an algorithm refuting this conjecture immediately implies a faster exact algorithm for MAX-CUT \cite{williams2005new,woeginger2008open}. 
Note that even a linear time algorithm for BMM ($\omega = 2$) would not prove that $F < 2$.
A well known result by Chen et al. \cite{ChenCFHJKX05,ChenHKX06} shows that $F>0$ under the Exponential Time Hypothesis. A plausible conjecture about the parameterized complexity of Subset-Sum implies that $F \geq 1.5$ \cite{ALW14}. 
There are many other negative results that intuitively support this conjecture:
Vassilevska W. and Williams proved that a truly subcubic combinatorial algorithm for $3$-Clique implies such algorithm for BMM as well \cite{VW10}.
Unconditional lower bounds for $k$-Clique are known for various computational models, such as $\Omega(n^k)$ for monotone circuits \cite{AB87}.
The planted Clique problem has also proven to be very challenging (e.g. \cite{AlonAKMRX07,AlonKS98,HK11,Jerrum92}).
Max-Clique is also known to be hard to efficiently approximate within nontrivial factors \cite{hastad}.

Formally, our reduction from $k$-Clique to CFG recognition proves the following theorem.

\begin{theorem}
\label{thm:cfg}
There is context-free grammar $\mathcal{G}_C$ of constant size such that if we can determine if a string of length $n$ can be obtained from $\mathcal{G}_C$ in $T(n)$ time, then $k$-Clique on $n$ node graphs can be solved in $O\left(T \left(n^{k/3+1} \right) \right)$ time, for any $k\geq 3$. 
Moreover, the reduction is combinatorial.
\end{theorem}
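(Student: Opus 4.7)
The plan is to construct a constant-size grammar $\mathcal{G}_C$ together with a string encoding that converts any $n$-vertex $k$-Clique instance into a string of length $O(n^{k/3+1})$ whose membership in $\mathcal{L}(\mathcal{G}_C)$ is equivalent to the existence of the $k$-clique. The grammar is fixed and does not depend on $n$ or $k$; both parameters are absorbed into the encoding and into how the grammar's recursive rules unfold.

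I would first reduce $k$-Clique to a tripartite ``super-triangle'' problem in the style of the standard $k$-clique-to-triangle reduction. Assume $k=3\ell$ (other residues mod $3$ need only minor adjustments). Three-colour the vertices of $G$ and rephrase the problem as finding three $\ell$-cliques $C_1\subseteq V_1$, $C_2\subseteq V_2$, $C_3\subseteq V_3$ such that every pairwise union $C_i\cup C_j$ is a $(2\ell)$-clique in $G$. Each colour class contributes at most $N=O(n^\ell)=O(n^{k/3})$ candidate super-vertices.

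Next I would build the string $w=W_1\cdot W_2\cdot W_3$ of length $O(Nn)=O(n^{k/3+1})$ over a constant alphabet. Block $W_i$ enumerates the $N$ super-vertices on side $i$, each expanded into a length-$\Theta(n)$ descriptor listing the $\ell$ adjacency rows of its constituent vertices in $G$. The listing in $W_3$ is reversed relative to that of $W_1$, so that a palindromic derivation of $\mathcal{G}_C$ pairs one super-vertex from $W_1$ with one from $W_3$, and a bit-by-bit palindromic match of their descriptors encodes the verification of the $C_1$-$C_3$ adjacencies. An outer family of nonterminals wraps this inner match with a super-vertex from $W_2$ and verifies its compatibility with the super-vertex of $W_1$ on the boundary. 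The fixed set of nonterminals---one symbol for each phase of the nested match plus a handful of helpers for a bit-level automaton over descriptors---is independent of $n$ and $k$.

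The main obstacle is that a triangle imposes three pairwise constraints that mutually cross, whereas CFG derivations are strictly nested and can naturally enforce only two such pairings via palindromic matching. My plan to resolve this is to let the outer derivation simultaneously read the $W_2$-descriptor and the adjacent $W_3$-descriptor that has just been produced by the inner match, using a constant-state automaton folded into the outer nonterminals; this is exactly the point at which the grammar has local access to both strings, and it suffices to verify the $C_2$-$C_3$ edges with only constantly many nonterminal states. Once this ``two-sided reading'' is arranged, the length bound $O(n^{k/3+1})$, the constant size of $\mathcal{G}_C$, and the combinatorial nature of the reduction follow by routine counting; combinatoriality is automatic because the only operations in the construction are copying symbols and reading bits of the adjacency matrix of $G$.
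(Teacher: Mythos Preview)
Your high-level plan---reduce to a tripartite ``super-triangle'' on $N=O(n^{k/3})$ super-vertices and build a string $W_1\,W_2\,W_3$ of length $O(Nn)$---matches the paper exactly. You also correctly identify the central obstacle: a CFG derivation is nested, so it can naturally enforce two palindromic pairings but not three mutually crossing ones.

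The gap is in your proposed resolution of that obstacle. You suggest that after the inner palindromic match has produced the $C_1$- and $C_3$-descriptors, the outer derivation can ``simultaneously read the $W_2$-descriptor and the adjacent $W_3$-descriptor'' via a constant-state automaton folded into the nonterminals. This cannot work for two reasons. First, once the $C_3$-descriptor has been emitted as terminals, the grammar has no access to it anymore; a CFG cannot re-read a substring it has already derived. Second, even at the moment of derivation, a nonterminal carries only $O(1)$ bits of state, whereas verifying that $C_2\cup C_3$ is a $2\ell$-clique requires comparing $\Theta(n)$-bit adjacency information. A finite-state mechanism can check only a regular property of the two descriptors, not bitwise compatibility of arbitrary length. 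The same objection applies to your proposed $C_1$--$C_2$ check ``on the boundary'': after the palindromic $C_1$--$C_3$ match, the grammar retains no record of which $C_1$ was chosen.

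The paper's fix is purely structural and avoids any attempt to carry non-constant information in the state: each clique gadget contains \emph{two} copies of the clique's data, separated by a type-specific midpoint marker. Concretely, $CG_\alpha(t)=\mathtt{a_{start}}\,CNG(t)\,\mathtt{a_{mid}}\,CNG(t)\,\mathtt{a_{end}}$, and similarly for $\beta,\gamma$ with the appropriate mix of node- and list-encodings. With the three gadgets laid out in order, the six half-gadgets interleave so that all three pairwise checks become \emph{nested} palindromic matchings: the outermost matches the first half of $\alpha$ against the second half of $\gamma$; inside that, two sibling matchings handle the second half of $\alpha$ against the first half of $\beta$, and the second half of $\beta$ against the first half of $\gamma$. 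No information needs to cross a nesting boundary. This duplication trick is the missing idea in your plan.
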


To see the tightness of our reduction, let $1 \leq f \leq \omega$ and $1 \leq c \leq 3$ denote the smallest numbers such that CFG recognition can be solved in $O(n^f)$ time and combinatorially in $O(n^c)$ time.
An immediate corollary of Theorem~\ref{thm:cfg} is that $f\geq F$ and $c\geq C$.
Under the plausible assumption that current $k$-Clique algorithms are optimal, up to $n^{o(1)}$ improvements,  our theorem implies that $f \geq \omega$ and $c \geq 3$.
Combined with Valiant's algorithm we get that $f = \omega$ and with standard CFG parsers we get that $c=3$.
Because our grammar size $g$ is fixed, we also rule out $O(h(g) \cdot n^{3-\eps})$ time combinatorial CFG parsers for any computable function $h(g)$.

In other words, we construct a single fixed context-free grammar $\mathcal{G}_C$ for which the recognition problem (and therefore any parsing problem) cannot be solved any faster than by Valiant's algorithm and any combinatorial recognizer will not run in truly subcubic time, without implying a breakthrough algorithm for the Clique problem.
This (conditionally) proves that these algorithms are optimal general purpose CFG parsers, and more efficient parsers will only work for CFL with special restricting properties. 
On the positive side, our reduction might hint at what a CFG should look like to allow for efficient parsing.

The \emph{online} version of CFG recognition is as follows: preprocess a CFG such that given a string $w$ that is revealed one letter at a time, so that at stage $i$ we get $w[1\cdots i]$, we can say as quickly as possible whether $w[1\cdots i]$ can be derived from the grammar (before seeing the next letters).
One usually tries to minimize the total time it takes to provide all the $|w|=n$ answers.
This problem has a long history of algorithms \cite{Wei76,Graham80,Rytter85} and lower bounds \cite{HK65,Gal69,Sei86}.
The current best upper bound is $O(n^3/\log^2 n)$ total running time, and the best lower bound is $\Omega(n^2/\log{n})$.
Since this is a harder problem, our lower bound for CFG recognition also holds for it.


\subsection{More Results}

The main ingredient in the proof of Theorem~\ref{thm:cfg} is a lossless encoding of a graph into a string that belongs to a simple CFL iff the graph contains a $k$-Clique.
Besides classifying the complexity of a fundamental problem, this construction has led us to new lower bounds for two more modern and well-studied cubic time problems for which faster algorithms are highly desirable in practice.

\paragraph{RNA Folding}
The RNA folding problem is a version of maximum matching and is one of the most relevant problems in computational biology.
Its most basic version can be neatly defined as follows.
Let $\Sigma$ be a set of letters and let $\Sigma' = \{ \sigma' \mid \sigma \in \Sigma \}$ be the set of ``matching" letters, such that for every letter $\sigma \in \Sigma$ the pair $\sigma,\sigma'$ match.
Given a sequence of $n$ letters over  $\Sigma \cup \Sigma'$ the RNA folding problem asks for the maximum number of \emph{non-crossing} pairs $\{i,j\}$ such that the $i^{th}$ and $j^{th}$ letter in the sequence match.

The problem can be viewed as a \emph{Stochastic CFG} parsing problem in which the CFG is very restricted.
This intuition has led to many mildly subcubic algorithms for the problem \cite{RNAsub1,akutsu,RNAsub2,RNAsub3,RNAsub4,RNAsub5,ZTZ10}.
The main idea is to adapt Valiant's algorithm by replacing his BMM with a $(min,+)$-matrix product computation (i.e. distance or tropical product).
Using the fastest known algorithm for $(min,+)$-matrix multiplication it is possible to solve RNA folding in $O(n^3/2^{\sqrt{\log{n}}})$ time \cite{ryan-apsp}.
The fastest \emph{combinatorial} algorithms, however, run in roughly $O(n^3/\log^2{n})$ time \cite{RNAsub1,RNAsub2}, and we show that a truly subcubic such algorithm would imply a breakthrough Clique algorithm.
Our result gives a negative partial answer to an open question raised by Andoni \cite{bertinoro}.

\begin{theorem}
\label{thm:RNA}
If RNA Folding on a sequence of length $n$ can be solved in $T(n)$ time, then $k$-Clique on $n$ node graphs can be solved in $O\left(T \left(n^{k/3+O(1)} \right) \right)$ time, for any $k\geq 3$. 
Moreover, the reduction is combinatorial.
\end{theorem}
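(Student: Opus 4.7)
The plan is to adapt the graph encoding used in Theorem~\ref{thm:cfg} to the matching setting of RNA folding. Whereas the CFG reduction only needs ``the string is in $\mathcal{L}(\mathcal{G}_C)$ iff there is a $k$-clique,'' here we need a quantitative upgrade: the maximum non-crossing matching of the encoded string should attain a specific target value $M^*$ precisely when the graph has a $k$-clique, and be strictly smaller otherwise. Because the set of non-crossing matchings is itself governed by the balanced-parentheses grammar, there is a natural parallel between the recursive derivations used for CFG recognition and the nested structure of an optimal matching, which is what makes the adaptation plausible.

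I would proceed in three steps. First, reuse the symmetric ``left selects vertices / right verifies vertices'' layout: place $k/3$ vertex-selection gadgets on the left of a central pivot and their mirrors on the right, so any non-crossing matching is forced to pair the $i$-th left gadget with the $i$-th right gadget. Second, design vertex and edge gadgets in the $\Sigma\cup\Sigma'$ alphabet: within each gadget almost all positions are ``free'' padding letters that always contribute to the matching regardless of other choices, while a small ``critical'' substring encodes the vertex index and the incident edges using letters that can be matched only when the two sides are consistent with an actual edge of the graph. Third, carry out the counting. If the graph contains a $k$-clique, pick the corresponding selection and compute the matching value, which we take to be $M^*$. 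If no $k$-clique exists, show by a local exchange argument that every non-crossing matching must leave at least one critical letter unpaired, strictly losing at least one pair compared to $M^*$. A threshold test against $M^*$ then decides $k$-Clique, and since each vertex gadget has length roughly $n^{O(1)}$ and we use $k/3$ of them, the resulting string has length $n^{k/3+O(1)}$, yielding the claimed runtime bound.

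The main obstacle is the lower-bound direction: ruling out ``cheating'' matchings. Unlike CFG recognition, the RNA-folding algorithm can always match some pairs, and may do so by pairing letters across different gadgets or ignoring the intended left-right symmetry in order to recoup matches elsewhere. I would need either gadgets rigid enough that any cross-pairing provably loses at least as many pairs as it gains (for instance by using distinct alphabets per layer so mismatched layers simply cannot be paired), or a global accounting argument showing that every non-intended matching incurs a strict deficit against $M^*$. A secondary obstacle is keeping the reduction combinatorial, so the gadgets must avoid any algebraic trick (hashing, fingerprinting) and rely purely on symbol-by-symbol comparisons, which is consistent with the combinatorial nature of the Theorem~\ref{thm:cfg} reduction we are piggybacking on.
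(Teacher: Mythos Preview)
Your proposal has a genuine gap at the most basic structural level. You write that ``each vertex gadget has length roughly $n^{O(1)}$ and we use $k/3$ of them, the resulting string has length $n^{k/3+O(1)}$,'' but $(k/3)\cdot n^{O(1)} = n^{O(1)}$ for constant $k$, not $n^{k/3+O(1)}$: concatenating a constant number of polynomial-length gadgets gives a polynomial-length string. You have not explained where the $n^{k/3}$ comes from, and the layout you describe cannot produce it. Relatedly, pairing the $i$-th left gadget with the $i$-th right gadget checks only $k/3$ adjacency constraints, whereas a $k$-clique requires $\binom{k}{2}$ of them; nothing in your sketch forces the different ``selected vertices'' to be pairwise adjacent.

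The paper's reduction is structured quite differently. It does not select vertices at all: it \emph{enumerates all $k$-cliques} of $G$ (there are $O(n^k)$ of them) and writes a gadget of length $n^{O(1)}$ for each, so the concatenation has length $n^{k+O(1)}$; the reduction is from $3k$-Clique, which is why the exponent reads $k/3+O(1)$ in the theorem's parameterization. The RNA folder must then pick \emph{three} $k$-cliques $t_\alpha,t_\beta,t_\gamma$ whose union is a $3k$-clique. Two devices make this work and are missing from your plan: (i) a \emph{weighted} RNA formulation (later reduced to unweighted by repeating symbols), in which separator symbols are so expensive that any optimal folding must match all of them, rigidly aligning node gadgets against neighbor-list gadgets; and (ii) a tripartite alphabet tagging $\{\alpha\beta,\alpha\gamma,\beta\gamma\}$ so a clique gadget cannot match itself or gadgets of its own type. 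Finally, the ``selection'' of exactly one gadget per type is done by sandwiching the list of $\alpha$-gadgets (and similarly $\beta,\gamma$) between long runs of an extremely expensive symbol and its primed counterpart, so that matching all those copies traps every $\alpha$-gadget but one inside a region where it has no partners. The $k^2$ pairwise adjacencies between two $k$-cliques are then checked inside the clique gadgets themselves.
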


Besides a tight lower bound for combinatorial algorithms, our result also shows that a faster than $O(n^{\omega})$ algorithm for RNA Folding is unlikely.
Such an upper bound is not known for the problem, leaving a small gap in the complexity of the problem after this work.
However, we observe that the known reductions from RNA Folding to $(min,+)$-matrix multiplication produce matrices with very special ``bounded monotone" structure: the entries are bounded by $n$ and every row and column are monotonically increasing.
This exact structure has allowed Chan and Lewenstein to solve the ``bounded monotone" $(min,+)$-convolution problem in truly subquadratic time \cite{add_comb}.
Their algorithm uses interesting tools from additive combinatorics and it seems very plausible that the approach will lead to a truly subcubic (non-combinatorial) algorithm for ``bounded monotone" $(min,+)$ product and therefore for RNA Folding.

\paragraph{Dyck Edit Distance}
The lower bound in Theorem~\ref{thm:cfg} immediately implies a similar lower bound for the \emph{Language Edit Distance} problem on CFLs, in which we want to be able to return the minimal edit distance between a given string $w$ and a string in the language.
That is, zero if $w$ is in the language and the length of the shortest sequence of insertions, deletions, substitutions that is needed to convert $w$ to a string that is in the language otherwise.
This is a classical problem introduced by Aho and Peterson in the early 70's with cubic time algorithms \cite{AP,Myers} and many diverse applications \cite{Korn13,G+00,Fis80}.
Very recently, Rajasekaran and Nicolae \cite{RN14} and Saha~\cite{SahaArxiv} obtained truly subcubic time \emph{approximation} algorithms for the problem for arbitrary CFGs.

However, in many applications the CFG we are working with is very restricted and therefore easy to parse in linear time.
One of the simplest CFGs with big practical importance is the Dyck grammar which produces all strings of well-balanced parenthesis.
The Dyck recognition problem can be easily solved in linear time with a single pass on the input.
Despite the grammar's very special structure, the Dyck Edit Distance problem is not known to have a subcubic algorithm.
In a recent breakthrough, Saha \cite{SahaDyckApprox} presented a near-linear time algorithm that achieves a logarithmic approximation for the problem.
Dyck edit distance can be viewed as a generalization of the classical string edit distance problem whose complexity is essentially quadratic \cite{clrs,edit_hardness}, and Saha's approximation algorithm nearly matches the best known approximation algorithms for string edit distance of Andoni, Krauthgamer, and Onak \cite{AOK10}, both in terms of running time and approximation factor.
This naturally leads one to wonder whether the complexity of (exact) Dyck edit distance might be also quadratic.
We prove that this is unlikely unless $\omega = 2$ or there are faster clique finding algorithms.

\begin{theorem}
\label{thm:dyck}
If Dyck edit distance on a sequence of length $n$ can be solved in $T(n)$ time, then $3k$-Clique on $n$ node graphs can be solved in $O\left(T \left(n^{k+O(1)} \right) \right)$ time, for any $k\geq 1$. 
Moreover, the reduction is combinatorial.
\end{theorem}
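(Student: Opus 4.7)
The plan is to construct, for a $3k$-Clique instance $G$ on $n$ vertices, a single string $w$ over a constant-size parentheses alphabet of length $N = n^{k + O(1)}$ together with a threshold $B$ so that the Dyck edit distance of $w$ is at most $B$ exactly when $G$ contains a $3k$-clique. This parallels the construction used in Theorem~\ref{thm:cfg}, but unlike the CFG reduction we cannot design custom production rules: the grammar is fixed to be the Dyck grammar, so all graph information must be encoded in the string while the edit distance (rather than membership) serves as the decision quantity.

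First, I would partition $V(G)$ into $3k$ groups $V_1,\ldots,V_{3k}$ of size $n$ each (by color-coding or by enumerating balanced partitions, losing only a constant factor). A $3k$-clique then corresponds to a selection $(v_1,\ldots,v_{3k})$ with $v_i\in V_i$ whose every cross-group pair is an edge. I would split the $3k$ groups into three consecutive blocks of $k$ groups and build $w$ as a nested sequence of three zones of size roughly $n^k$ each. The leftmost zone consists of opening brackets indexed by $k$-tuples from the first block, the rightmost zone of closing brackets indexed by $k$-tuples from the third block, and the middle zone enumerates, for every $k$-tuple from the middle groups, a gadget that simultaneously records incidences with the outer $k$-tuples. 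Within a zone, each vertex identity is represented by a short gadget of bracket pairs whose nested matching cost is zero when the corresponding edges are present and otherwise must pay a strictly positive penalty.

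Correctness would be shown in two directions. For the forward direction, given a clique $(v_1^*,\ldots,v_{3k}^*)$ I would exhibit an explicit edit sequence: pair the canonical opening zone entry encoding $(v_1^*,\ldots,v_k^*)$ with the canonical middle and closing zone entries encoding the other two $k$-tuples, using the clique hypothesis to guarantee every gadget on this ``honest'' path matches perfectly for a total cost of exactly $B$. For the converse, any edit sequence of cost at most $B$ induces a non-crossing matching of the surviving brackets; the product structure of the three zones would force this matching to single out one $k$-tuple per block and align them consistently, and cost strictly below $B+1$ would require every aligned pair of tuples to witness an edge in $G$, giving back a $3k$-clique.

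The hard part will be the converse: I must design the gadgets so that no ``cheating'' matching can beat the honest one by picking inconsistent tuples or by matching opportunistically across gadget boundaries. The standard cure is to tag each block with its own parenthesis type, pad each gadget with protective brackets so that leakage across gadget boundaries is immediately penalized, and confirm by a case analysis over possible cross-gadget matchings that the minimum cost is attained only on honest pairings. Since the entire construction is polynomial-time and uses only elementary operations, the reduction is combinatorial, so running a $T(n)$-time algorithm for Dyck edit distance on the resulting string of length $N = n^{k+O(1)}$ yields a $3k$-Clique algorithm in time $O(T(n^{k+O(1)}))$, as required.
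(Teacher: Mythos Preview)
Your proposal is a reasonable high-level plan, but it leaves open precisely the step that carries essentially all of the technical weight: the \emph{selection mechanism}. You assert that ``the product structure of the three zones would force this matching to single out one $k$-tuple per block,'' but this is the crux and you do not justify it. If the left zone is simply a concatenation of $n^k$ blocks of opening brackets and the right zone of $n^k$ blocks of closing brackets, an optimal Dyck alignment will happily match \emph{all} of them in nested fashion rather than pick out a single pair; nothing in the cost function penalizes pairing many $k$-tuples at once. The paper spends most of Section~\ref{sec:dyck} on exactly this issue: each clique gadget $CG_x(t)$ is made self-contained (it carries both its own $\mathtt{a}^{\ell_4}$ and $(\mathtt{a}')^{\ell_4}$) and is flanked by expensive $(\mathtt{x}_\alpha')^{\ell_5}, \mathtt{y}_\alpha^{\ell_5}$ blocks whose only counterparts sit at the extreme ends of the zone. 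A careful calibration $\ell_3 \ll \ell_4 \ll \ell_5$ together with a long case analysis (Lemmas~\ref{allmismatches}--\ref{rightleft} and Claims~\ref{close}--\ref{exactone}) then shows that in any optimal alignment all but one gadget per type must ``self-close'' by matching its own $\mathtt{a}$'s to its own $\mathtt{a}'$'s, leaving exactly one triple free to interact. Your remark ``pad each gadget with protective brackets so that leakage across gadget boundaries is immediately penalized'' gestures at this but supplies neither the mechanism nor the analysis; in Dyck, unlike RNA, mismatches between two unrelated symbols cost only $1$, so padding alone does not prevent beneficial cross-gadget mismatches.

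A secondary gap is the edge-checking machinery. You say each vertex is ``represented by a short gadget of bracket pairs whose nested matching cost is zero when the corresponding edges are present,'' but the paper needs an asymmetric pair of gadgets --- a node gadget $NG(v)$ and a neighbor-list gadget $LG(u)$ --- with $\$$ and $\#$ padding at graded lengths $\ell_0,\ell_1,\ell_2$ to force $\bar v$ to align against exactly one $p(\bar z)^R$ inside $LG(u)$ (Claim~\ref{claim:neighb}); without this, mismatches can be spread over several list entries and the cost ceases to be a sharp adjacency indicator. Finally, the color-coding step is harmless but unnecessary: the paper simply enumerates all $k$-cliques of $G$ and looks for three that together form a $3k$-clique, which avoids the $2^{O(k)}$ overhead and the multiple-instance issue entirely.
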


Our result gives an answer to an open question of Saha \cite{SahaDyckApprox}, who asked if Lee's lower bound holds for the Dyck Edit Distance problem, and shows that the search for good approximation algorithms for the problem is justified since efficient exact algorithms are unlikely.


\paragraph{Remark}
A simple observation shows that the \emph{longest common subsequence} (LCS) problem on two sequences $x,y$ of length $n$ over an alphabet $\Sigma$ can be reduced to RNA folding on a sequence of length $2n$:
 if $y=y_1 \ldots y_n$ then let $\hat{y}:=y_n' \ldots y_1'$ and then $RNA(x \circ \hat{y})=LCS(x,y)$.
A quadratic lower bound for LCS was recently shown under the Strong Exponential Time Hypothesis (SETH) \cite{ABV15,BK15}, which implies such a lower bound for RNA folding as well (and, with other ideas from this work, for CFG recognition and Dyck Edit Distance).
However, we are interested in higher lower bounds, ones that match Valiant's algorithm and basing such lower bounds on SETH would imply that faster matrix multiplication algorithms refute SETH - a highly unexpected breakthrough.
Instead, we base our hardness on $k$-Clique and devise more delicate constructions that use the cubic-time nature of our problems.

\paragraph{Proof Outlines}
In our three proofs, the main approach is the following.
We will first preprocess a graph $G$ in $O(n^{k+O(1)})$ time in order to construct an encoding of it into a string of length $O(n^{k+O(1)})$.
This will be done by enumerating all $k$-cliques and representing them with carefully designed gadgets such that a triple of clique gadgets will ``match well" if and only if the triple make a $3k$-clique together, that is, if all the edges between them exists.
We will use a fast (subcubic) CFG recognizer, RNA folder, or an Dyck Edit Distance algorithm to speed up the search for such a ``good" triple and solve $3k$-Clique faster than $O(n^{3k})$.
These clique gadgets will be constructed in similar ways in all of our proofs.
The main differences will be in the combination of these cliques into one sequence.
The challenge will be to find a way to combine $O(n^k)$ gadgets into a string in a way that a ``good" triple will affect the overall score or parse-ability  of the string.


\paragraph{Notation and Preliminaries}
All graphs in this paper will be on $n$ nodes and $O(n^2)$ undirected and unweighted edges.
We associate each node with an integer in $[n]$ and let $\bar{v}$ denote the encoding of $v$ in binary and we will assume that it has length exactly $2\log{n}$ for all nodes in $V(G)$.
When a graph $G$ is clear from context, we will denote the set of all $k$-cliques of $G$ by $\mathcal{C}_k$.
We will denote concatenation of sequences with $x \circ y$, and the reverse of a sequence $x$ by $x^R$.
Problem definitions and additional problem specific preliminaries will be given in the corresponding section.

\section{Clique to CFG Recognition}
\label{sec:cfg}
\def \start {\text{start}}
\def \endd {\text{end}}
\def \midd {\text{mid}}

This section we show our reduction from Clique to CFG recognition and prove Theorem~\ref{thm:cfg}.


Given a graph $G=(V,E)$, we will construct a string $w$ of length $O(k^2 \cdot n^{k+1})$ that encodes $G$.
The string will be constructed in $O(k^2 \cdot n^{k+1})$ time which is linear in its length.
Then, we will define our context free grammar $\mathcal{G}_C$ which will be independent of $G$ or $k$ and it will be of constant size, such that our string $w$ will be in the language defined by $\mathcal{G}_C$ if and only if $G$ contains a $3k$ clique.
This will prove Theorem~\ref{thm:cfg}.

Let $\Sigma = \{ \mathtt{0,1, \$, \# , a_{\start}, a_{\midd}, a_{\endd}, b_{\start}, b_{\midd}, b_{\endd}, c_{\start}, c_{\midd}, c_{\endd}} \}$  be our set of $13$ terminals (alphabet).
As usual, $\eps$ will denote the empty string.
We will denote the derivation rules of a context free grammar with $ \to $  and the derivation (by applying one or multiple rules) with $\implies$.


\paragraph{The string}
First, we will define \emph{node} and \emph{list} gadgets:
 \[ 
 NG(v) = \# \  \bar{v} \  \#  \ \ \ \ \text{  and   }  \ \ \ \ 
 LG(v) = \# \ \bigcirc_{u \in N(v)} (\$ \ \bar{u}^R \ \$) \ \#
 \]
 
Consider some $t = \{ v_1,\ldots,v_k\} \in \mathcal{C}_k$.
We now define ``clique node" and ``clique list" gadgets.
\[
CNG(t) =  \bigcirc_{v \in t}  (NG(v))^k \ \ \text {and } \ \ CLG(t) =  (\bigcirc_{v \in t}  LG(v))^k
\]
and our main clique gadgets will be:
\[
CG_\alpha(t) = \mathtt{a_{\start}} \ CNG(t) \  \mathtt{a_{\midd}} \ CNG(t) \ \mathtt{a_{\endd}}
\]
\[
CG_\beta(t) = \mathtt{b_{\start}} \ CLG(t) \  \mathtt{b_{\midd}} \ CNG(t) \ \mathtt{b_{\endd}}
\]
\[
CG_\gamma(t) = \mathtt{c_{\start}} \ CLG(t) \  \mathtt{c_{\midd}} \ CLG(t) \ \mathtt{c_{\endd}}
\]

Finally, our encoding of a graph into a sequence is the following:  
\[
w =  \left( \bigcirc_{t \in \mathcal{C}_k }  CG_\alpha(t) \right)
\left( \bigcirc_{t \in \mathcal{C}_k}  CG_\beta(t) \right)
\left( \bigcirc_{t \in \mathcal{C}_k}  CG_\gamma(t) \right)
\]


\paragraph{The Clique Detecting Context Free Grammar.}
The set of non-terminals in our grammar $\mathcal{G}_C$ is: 

$$\mathcal{T} = \{ \mathbf{
S,W,W',V,
S_{\alpha\gamma}, S_{\alpha\beta}, S_{\beta\gamma}, 
S^{\star}_{\alpha\gamma}, S^{\star}_{\alpha\beta}, S^{\star}_{\beta\gamma}, 
N_{\alpha\gamma}, N_{\alpha\beta}, N_{\beta\gamma}
}
\}.
$$
 The ``main" rules are:
\begin{align*}
&\mathbf{S}  \to \  \mathbf{W} \ \mathtt{a_{\start}}  \  \mathbf{S_{\alpha\gamma}} \  \mathtt{c_{\endd}}  \  \mathbf{W} \\
&\mathbf{S^{\star}_{\alpha\gamma}}  \to \  \mathtt{a_{\midd}} \ \mathbf{S_{\alpha\beta}} \ \mathtt{b_{\midd}} \mathbf{S_{\beta\gamma}} \ \mathtt{c_{\midd}}   \\
&\mathbf{S^{\star}_{\alpha\beta}} \to \  \mathtt{a_{\endd}} \ \mathbf{W} \ \mathtt{b_{\start}}   \\
&\mathbf{S^{\star}_{\beta\gamma}}  \to \  \mathtt{b_{\endd}} \ \mathbf{W} \ \mathtt{c_{\start}}  
\end{align*}

And for every $xy \in \{ \alpha\beta, \alpha\gamma,\beta\gamma\}$ we will have the following rules in our grammar.
These rules will be referred to as ``listing" rules.
\begin{align*}
& \mathbf{S}_{xy} \to \  \mathbf{S^{\star}}_{xy} \\
& \mathbf{S}_{xy} \to \  \# \ \mathbf{N}_{xy} \ \$ \ \mathbf{V} \ \# \\
& \mathbf{N}_{xy} \to \   \#  \  \mathbf{S}_{xy} \ \# \ \mathbf{V} \ \$  \\
& \mathbf{N}_{xy} \to \ \mathtt{\sigma} \  \mathbf{N}_{xy} \ \mathtt{\sigma} \ & \forall \mathtt{\sigma} \in \{\mathtt{0,1}\} 
\end{align*}

Then we also add ``assisting" rules:
\begin{align*}
& \mathbf{W} \to \ \mathtt{\eps} \ \ | \  \sigma \  \mathbf{W} \ & \forall \sigma \in \Sigma \\
& \mathbf{W'} \to \ \eps \ | \  \sigma \  \mathbf{W'} \ & \forall \sigma \in \{\mathtt{0,1}\} \\
& \mathbf{V} \to  \ \ \eps  \ \ |  \ \$ \ \mathbf{W'}  \ \$ \ \mathbf{V} &
\end{align*}

Our Clique Detecting grammar $\mathcal{G}_C$ has $13$ non-terminals $\mathcal{T}$, $13$ terminals $\Sigma$, and $38$ derivation rules.
The \emph{size} of $\mathcal{G}_C$, i.e. the sum of the lengths of the derivation rules, is $132$.

\paragraph{The proof.}
This proof is essentially by following the derivations of the CFG, starting from the starting symbol $\mathbf{S}$ and ending at some string of terminals, and showing that the resulting string must have certain properties.
Any encoding of a graph into a string as we describe will have these properties iff the graph contains a $3k$-clique. 
The correctness of the reduction will follow from the following two claims.

\begin{claim}
\label{cfg:hard}
If $\mathcal{G}_C\implies w$ then $G$ contains a $3k$-clique.
\end{claim}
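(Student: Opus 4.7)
My plan is to trace any derivation $\mathcal{G}_C \implies w$ and read off three $k$-cliques $t_1, t_2, t_3 \in \mathcal{C}_k$ from the choices made during the derivation; the claim then reduces to verifying that $t_1 \cup t_2 \cup t_3$ forms a $3k$-clique of $G$.

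I first identify the three cliques from the top-down structure. The only $\mathbf{S}$-production, $\mathbf{S} \to \mathbf{W}\ \mathtt{a_{\start}}\ \mathbf{S_{\alpha\gamma}}\ \mathtt{c_{\endd}}\ \mathbf{W}$, forces the derivation to commit to specific occurrences of $\mathtt{a_{\start}}$ and $\mathtt{c_{\endd}}$ in $w$, and since these terminals appear only as outer markers of $\alpha$- and $\gamma$-gadgets, this picks out cliques $t_1, t_3 \in \mathcal{C}_k$. Applying the listing rules inside $\mathbf{S_{\alpha\gamma}}$ and then $\mathbf{S^{\star}_{\alpha\gamma}} \to \mathtt{a_{\midd}}\ \mathbf{S_{\alpha\beta}}\ \mathtt{b_{\midd}}\ \mathbf{S_{\beta\gamma}}\ \mathtt{c_{\midd}}$, together with the analogous expansions inside $\mathbf{S_{\alpha\beta}}$ and $\mathbf{S_{\beta\gamma}}$ (whose $\star$-rules commit to $\mathtt{a_{\endd}}, \mathtt{b_{\start}}, \mathtt{b_{\endd}}, \mathtt{c_{\start}}$), then selects a $\beta$-gadget and thus a clique $t_2$.

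The technical core is a structural lemma for the listing rules, proved by induction on the number of applications: any derivation from $\mathbf{S}_{xy}$ to a sentential form in which the innermost $\mathbf{S}_{xy}$ has just been replaced by $\mathbf{S^{\star}_{xy}}$ has the palindromic shape
\[
\# b_1 \# \# b_2 \# \cdots \# b_m \#\ \mathbf{S^{\star}_{xy}}\ \# y_m \$ b_m^R \$ y'_m \# \# y_{m-1} \$ b_{m-1}^R \$ y'_{m-1} \# \cdots \# y_1 \$ b_1^R \$ y'_1 \#,
\]
for some $m \ge 0$, binary strings $b_i \in \{\mathtt 0, \mathtt 1\}^*$, and arbitrary $\mathbf{V}$-derived strings $y_i, y'_i$. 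Applying this to $\mathbf{S_{\alpha\gamma}}$ against the substring of $w$ between the committed $\mathtt{a_{\start}}$ and $\mathtt{c_{\endd}}$: the prefix $\# b_1 \# \cdots \# b_m \#$ uses only characters in $\{\#, \mathtt 0, \mathtt 1\}$ and so must terminate at the very first $\mathtt{a_{\midd}}$, which belongs to $CG_\alpha(t_1)$. The prefix therefore equals $CNG(t_1)$ exactly, forcing $m = k^2$ and fixing each $b_i = \bar v$ where $v$ is the $i$-th node listed in $CNG(t_1)$. A symmetric argument forces the suffix to equal $CLG(t_3)$, so its $i$-th back block (carrying $\$ b_{m-i+1}^R \$$) must match the $i$-th $LG$-factor $LG(v')$ of $CLG(t_3)$. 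Since $\mathbf{V}$-derivations contain no $\#$, a single $\$ b^R \$$ embedded inside $LG(v') = \# \bigcirc_{u \in N(v')}(\$ \bar u^R \$) \#$ forces $b = \bar u$ for some $u \in N(v')$. The offset $k$-repetition patterns of $CNG(t) = \bigcirc_{v \in t}(NG(v))^k$ versus $CLG(t) = (\bigcirc_{v \in t} LG(v))^k$ guarantee that as $i$ ranges over $[k^2]$, every pair in $t_1 \times t_3$ is tested, so $t_1 \cup t_3$ is a biclique in $G$. Identical reasoning applied to $\mathbf{S_{\alpha\beta}}$ (pairing the second $CNG$ of $CG_\alpha(t_1)$ with the first $CLG$ of $CG_\beta(t_2)$) and to $\mathbf{S_{\beta\gamma}}$ gives bicliques between $t_1, t_2$ and between $t_2, t_3$. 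These bicliques also force the $t_i$'s to be pairwise disjoint (otherwise a node would have to be a neighbor of itself), so with each $t_i$ already a $k$-clique, $t_1 \cup t_2 \cup t_3$ is a $3k$-clique.

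The main obstacle I anticipate is the careful verification that the listing-rule prefix and suffix really are forced into the intended positions: that the non-$\{\#, \$, \mathtt 0, \mathtt 1\}$ terminals scattered through the $\alpha\gamma$-region truly prevent the prefix from extending past $CNG(t_1)$ (and symmetrically for the suffix and $CLG(t_3)$), and that the flexibility of $\mathbf{V}$ does not allow a $\$ b^R \$$ to be matched against a $\$\$$-pair that crosses between two distinct $LG$-factors. Both come down to counting $\#$'s and $\$$'s and exploiting the absence of $\#$ in $\mathbf{V}$-derivations, but this is exactly where the explicit bracketing in the $NG, LG, CNG, CLG$ gadgets earns its keep.
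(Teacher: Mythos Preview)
Your proposal is correct and follows essentially the same route as the paper. The paper also traces the unique top-down derivation shape, committing to $t_\alpha,t_\beta,t_\gamma$ via the special terminals $\mathtt{a_{\star}},\mathtt{b_{\star}},\mathtt{c_{\star}}$, and isolates exactly your ``structural lemma for the listing rules'' as a separate statement (their Lemma~\ref{biclique}): from $\mathbf{S}_{xy}$ one can only derive $\# s \# \,\mathbf{S}_{xy}\, \# p_2 \$\, s^R \$\, p_1 \#$ per cycle, which after $k^2$ iterations forces the left side to be $CNG(t)$ and the right side to be $CLG(t')$, with each $b_i$ appearing in the matching $LG$-block and hence witnessing an edge; the $k$-shift between $CNG$ and $CLG$ then covers all pairs in $t\times t'$, and $u\notin N(u)$ gives disjointness. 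Your anticipated obstacle about $\mathbf{V}$-derivations possibly letting $\$ b^R \$$ straddle two $LG$-factors is handled in the paper (and implicitly in your argument) by the fact that $\mathbf{V}$ derives only strings in $(\$\{\mathtt 0,\mathtt 1\}^*\$)^*$ with no $\#$, so the $\#$-bracketing of each $LG$ confines the match to a single list.
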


\begin{proof}
The derivation of $w$ must look as follows.
First we must apply the only starting rule,
$$\mathbf{S}\ \implies \ w_1 \mathtt{\ a_{\start}}  \ \mathbf{S_{\alpha\gamma}} \mathtt{\ c_{\endd}} \ w_2$$
where $\mathtt{a_{\start}}$ appears in $CG_\alpha(t_\alpha)$ for some $t_\alpha \in \mathcal{C}_k$
and $\mathtt{c_{\start}}$ appears in $CG_\gamma(t_\gamma)$ for some $t_\gamma \in \mathcal{C}_k$,
and $w_1$ is the prefix of $w$ before $CG(t_\alpha)$ and $w_2$ is the suffix of $w$ after $CG(t_\gamma)$.
Then we can get,
$$\mathbf{S_{\alpha\gamma}} \implies CNG(t_\alpha)  \ \mathbf{S^{\star}_{\alpha\gamma}} \ CLG(t_\gamma)$$
by repeatedly applying the $xy$-``listing" rules  where $xy=\alpha\gamma$ and finally terminating with the rule $\mathbf{S_{\alpha\gamma} \to S^{\star}_{\alpha\gamma}}$.
By Lemma~\ref{biclique} below, this derivation is only possible if the nodes of $t_\alpha \cup t_\gamma$ make a $2k$-clique (call this observation (*)).
Then we have to apply the derivation: 
$$\mathbf{S^{\star}_{\alpha\gamma}} \ \implies \mathtt{\ a_{\midd}} \ \mathbf{S_{\alpha\beta}} \ \mathtt{b_{\midd}} \mathbf{S_{\beta\gamma}} \ \mathtt{c_{\midd}} $$
and for some $t_\beta \in \mathcal{C}_k$ we will have,
\begin{center}
$\mathbf{S_{\alpha\beta}} \implies \ CNG(t_\alpha) \ \mathbf{S^{\star}_{\alpha\beta}} \ CLG(t_\beta), \ $ and $ \ \mathbf{S_{\beta\gamma}} \implies \ CNG(t_\beta) \ \mathbf{S^{\star}_{\beta\gamma}} \ CLG(t_\gamma)$,
\end{center}
where in both derivations we repeatedly use ``listing" rules before exiting with the $\mathbf{S}_{xy} \to \mathbf{S^{\star}}_{xy}$ rule.
Again, by Lemma~\ref{biclique} we get that the nodes of $t_\alpha \cup t_\beta$ are a $2k$ clique in $G$, and that the nodes of $t_\beta \cup t_\gamma$ are a $2k$ clique as well (call this observation (**)).
Finally, we will get the rest of $w$ using the derivations:
\begin{center}
$\mathbf{S^{\star}}_{\alpha\beta} \implies \ \mathtt{a_{\endd}} \ w_3 \  \mathtt{b_{\start}}, \ $ and $ \ \mathbf{S^{\star}}_{\beta\gamma} \implies \mathtt{\ b_{\endd}} \ w_4 \ \mathtt{c_{\start}}$,
\end{center}
where $w_3$ is the substring of $w$ between $CG(t_\alpha)$ and $CG(t_\beta)$, and similarly $w_4$ is the substring of $w$ between $CG(t_\beta)$ and $CG(t_\gamma)$.

Combining observations (*) and (**), that we got from the above derivation scheme and Lemma~\ref{biclique}, we conclude that the nodes of $t_\alpha \cup t_\beta \cup t_\gamma$ form a $3k$-clique in $G$, and we are done.

To complete the proof we will now prove Lemma~\ref{biclique}.

\begin{lemma}
\label{biclique}
If for some $t,t' \in \mathcal{C}_k$ and $xy \in \{\alpha\beta,\alpha\gamma,\beta\gamma\}$ we can get the derivation $\mathbf{S}_{xy} \implies CNG(t) \ \mathbf{S^{\star}}_{xy} \ CLG(t'), \ $ only using the ``listing" rules, then $t \cup t'$ forms a $2k$ clique in $G$.
\end{lemma}

\begin{proof}
Any sequence of derivations starting at $\mathbf{S}_{xy}$ and ending at $\mathbf{S^{\star}}_{xy}$ will have the following form.
From $\mathbf{S}_{xy}$ we can only proceed to non-terminals $\mathbf{V}$ and $\mathbf{N}_{xy}$.
The non-terminal $\mathbf{V}$ does not produce any other non-terminals.
A single instantiation of $\mathbf{S}_{xy}$ can only produce (via the second ``listing" rule) a single instantiation of $\mathbf{N}_{xy}$.
In turn, from $\mathbf{N}_{xy}$ we can only proceed to non-terminals $\mathbf{V}$ and $\mathbf{S}_{xy}$, and again, a single instantiation of $\mathbf{N}_{xy}$ can produce a single instantiation of $\mathbf{S}_{xy}$.
Thus, we produce some terminals (on the left and on the right) from the set $\mathtt{\{\#, \$, 0,1\}}$ and then we arrive to $\mathbf{S}_{xy}$ again.
This can repeat an arbitrary number of times, until we apply the rule $\mathbf{S}_{xy} \to \mathbf{S^{\star}}_{xy}$.

Thus, the derivations must look like this:
	$$
		\mathbf{S}_{xy} \implies \ell \ \mathbf{S}^{\star}_{xy} \  r
	$$
	for some strings $\ell,r$ of the form $\mathtt{\{\#,\$,0,1\}^*}$,
 and our goal is to prove that $\ell,r$ satisfy certain properties.

	It is easy to check that $\mathbf{V}$ can derive strings of the following form $p= \mathtt{(\$ \ \{0,1\}^* \ \$)^*}$, that is, it produces a list (possibly of length $0$) of binary sequences (possibly of length $0$) surrounded by $\$$ symbols (between every two neighboring binary sequences there are two $\$$). 
	A key observation is that repeated application of the fourth ``listing'' rule gives derivations $\mathbf{N}_{xy}\implies s \ \mathbf{N}_{xy}  \ s^R$, for any $s \in \mathtt{\{0,1\}^*}$. 
	Combining these last two observations, we see that when starting with $\mathbf{S}_{xy}$ we can only derive strings of the following form, or terminate via the rule $\mathbf{S}_{xy} \to \mathbf{S^{\star}}_{xy}$.
	\begin{equation} \label{eq:inlists}
		\mathbf{S}_{xy}   \implies \  \#  \ \mathbf{N}_{xy} \ \$ \ p_1 \ \#  \implies \  \#  \ s  \ \mathbf{N}_{xy}  \ s^R \ \$ \ p_1 \ \#  \implies \  \#  \ s \ \# \ \mathbf{S}_{xy} \ \# \ p_2 \ \$ \ s^R \ \$ \ p_1 \ \#
	\end{equation}
	for some $p_1,p_2$ of the form $ \mathtt{(\$ \ \{0,1\}^* \ \$)^*}$.
	
	Now consider the assumption in the statement of the lemma, and recall our constructions of ``clique node gadget" and ``clique list gadget" .
	By construction, $CNG(t)$ is composed of $k^2$ node gadgets ($NG$) separated by \# symbols, and CLG is composed of $k^2$ list gadgets ($LG$) separated by \# symbols.
	Note also that the list gadgets contain $O(n)$ node gadgets within them and those are separated by \$ symbols, and there are no \# symbols within the list gadgets.

	For every $i \in [k^2]$, let $\ell_i$ be the $i^{th}$ $NG$ in $CNG(t)$ and let $r_i$ be the $i^{th}$ $LG$ in $CLG(t')$.
	Then, for every $2\leq i \leq k$, in the derivation $\mathbf{S}_{xy} \implies CNG(t) \ \mathbf{S}^{\star}_{xy} \ CLG(t')$, we must have had the derivation 
	\[
	 \mathbf{S}_{xy} \implies \ell_1 \ \cdots \ \ell_{i-1} \ ( \ \mathbf{S}_{xy} \ ) \ r_{k^2-i+2} \ \cdots \ r_{k^2} \ \implies \ \ell_1 \ \cdots \ \ell_{i-1} \ ( \ \ell_i \ \mathbf{S}_{xy} \ r_{k^2-i+1} \ ) \ r_{k^2-i+2} \ \cdots \ r_{k^2}
	 \]
	 and by (1) this implies that the binary encoding of the node $v\in t$ that appears in the $i^{th}$ $NG$ in $CNG(t)$ must appear in one of the $NG$s that appear in the $(k^2-i+1)$ $LG$ in $CLG(t')$ which corresponds to a node $u \in t'$.
	Since $LG(u)$ contains a list of neighbors of the node $u$, this implies that $v \in N(u)$ and $\{u,v\} \in E$.
	Also note that $u \notin N(u)$ and therefore $u$ does not appear in $LG(u)$ and therefore $v$ cannot be equal to $u$ if this derivation occurs.
	
	Now, consider any pair of nodes $v \in t, u \in t'$.
	By the construction of CNG and CLG, we must have an index $i \in [k^2]$ such that the $i^{th}$ NG in $CNG(t)$ is $NG(v)$ and the $(k^2-i+1)$ LG in $CLG(t')$ is $CLG(u)$.
	By the previous argument, we must have that $u \neq v$ and $\{u,v\} \in E$ is an edge.	
	Given that $t,t'$ are $k$-cliques themselves, and any pair of nodes $v \in t, u \in t'$ must be neighbors (and therefore different), we conclude that $t \cup t'$ is a $2k$-clique.
\end{proof}
\end{proof}

\begin{claim}
\label{cfg:easy}
If $G$ contains a $3k$-clique, then $\mathcal{G}_C\implies w$.
\end{claim}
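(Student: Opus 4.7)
The plan is to start with a $3k$-clique $C$ in $G$, partition it arbitrarily into three $k$-cliques $t_\alpha, t_\beta, t_\gamma \in \mathcal{C}_k$ (so that each of the pairs $t_\alpha \cup t_\gamma$, $t_\alpha \cup t_\beta$, and $t_\beta \cup t_\gamma$ is itself a $2k$-clique), and then construct a derivation of $w$ that mirrors the scheme analyzed in the proof of Claim~\ref{cfg:hard}. First I would apply the starting rule $\mathbf{S} \to \mathbf{W}\ \mathtt{a_{\start}}\ \mathbf{S_{\alpha\gamma}}\ \mathtt{c_{\endd}}\ \mathbf{W}$, letting the two outer $\mathbf{W}$ non-terminals derive the prefix of $w$ strictly before the $\mathtt{a_{\start}}$ of $CG_\alpha(t_\alpha)$ and the suffix strictly after the $\mathtt{c_{\endd}}$ of $CG_\gamma(t_\gamma)$; this is possible because $\mathbf{W}$ derives any string in $\Sigma^*$.

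The key technical ingredient is a converse of Lemma~\ref{biclique}: if $t, t' \in \mathcal{C}_k$ and $t \cup t'$ is a $2k$-clique, then for every $xy \in \{\alpha\beta, \alpha\gamma, \beta\gamma\}$ one has $\mathbf{S_{xy}} \implies CNG(t)\ \mathbf{S^{\star}_{xy}}\ CLG(t')$ using only the ``listing'' rules. I would prove this by iterating the derivation in equation~\eqref{eq:inlists} exactly $k^2$ times: at iteration $i$, peel off the $i$-th node gadget $NG(v)$ at the left end of the remaining $CNG(t)$ together with the $(k^2 - i + 1)$-th list gadget $LG(u)$ at the right end of the remaining $CLG(t')$. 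Within each iteration, apply $\mathbf{N_{xy}} \to \sigma\ \mathbf{N_{xy}}\ \sigma$ repeatedly to spell out $\bar{v}$ on the left and the palindromic copy $\bar{v}^R$ on the right, and use the $\mathbf{V}$ non-terminal to derive the remaining neighbor entries of $LG(u)$ surrounding $\$\ \bar{v}^R\ \$$. This is feasible exactly because $v \in N(u)$ (guaranteed by $t \cup t'$ being a $2k$-clique), so the substring $\$\ \bar{v}^R\ \$$ really does occur in $LG(u)$. After $k^2$ iterations, terminate via $\mathbf{S_{xy}} \to \mathbf{S^{\star}_{xy}}$.

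With this subclaim in hand I would apply it three times and glue the pieces together. Starting from $\mathbf{S_{\alpha\gamma}}$ derive $CNG(t_\alpha)\ \mathbf{S^{\star}_{\alpha\gamma}}\ CLG(t_\gamma)$, then apply $\mathbf{S^{\star}_{\alpha\gamma}} \to \mathtt{a_{\midd}}\ \mathbf{S_{\alpha\beta}}\ \mathtt{b_{\midd}}\ \mathbf{S_{\beta\gamma}}\ \mathtt{c_{\midd}}$. Applying the subclaim to the pairs $\alpha\beta$ and $\beta\gamma$ yields $\mathbf{S_{\alpha\beta}} \implies CNG(t_\alpha)\ \mathbf{S^{\star}_{\alpha\beta}}\ CLG(t_\beta)$ and $\mathbf{S_{\beta\gamma}} \implies CNG(t_\beta)\ \mathbf{S^{\star}_{\beta\gamma}}\ CLG(t_\gamma)$. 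Finally, $\mathbf{S^{\star}_{\alpha\beta}} \to \mathtt{a_{\endd}}\ \mathbf{W}\ \mathtt{b_{\start}}$ and $\mathbf{S^{\star}_{\beta\gamma}} \to \mathtt{b_{\endd}}\ \mathbf{W}\ \mathtt{c_{\start}}$ introduce the remaining bracket terminals of $CG_\alpha(t_\alpha)$, $CG_\beta(t_\beta)$, and $CG_\gamma(t_\gamma)$, while the two inner $\mathbf{W}$s absorb the stretches of $w$ strictly between $CG_\alpha(t_\alpha)$ and $CG_\beta(t_\beta)$ and between $CG_\beta(t_\beta)$ and $CG_\gamma(t_\gamma)$. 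Concatenating the derived terminals yields exactly $w$.

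The main obstacle will be the indexing bookkeeping inside the subclaim: I must verify that as $i$ ranges over $[k^2]$, the pair $(v,u)$ it selects from $t \times t'$ -- namely the node $v \in t$ appearing in the $i$-th $NG$ of $CNG(t)$ and the node $u \in t'$ appearing in the $(k^2 - i + 1)$-th $LG$ of $CLG(t')$ -- exhausts all of $t \times t'$. This works because of the asymmetric placement of the $k$-th power in the two definitions: $CNG(t) = \bigcirc_{v \in t} NG(v)^k$ repeats each node $k$ times consecutively, while $CLG(t') = (\bigcirc_{u \in t'} LG(u))^k$ cycles through the whole list $k$ times. The induced pairing is precisely the bijection already exploited in the forward direction of Lemma~\ref{biclique}, so once that enumeration is checked the $2k$-clique hypothesis guarantees every required edge $\{v,u\}$ is present and the whole derivation goes through.
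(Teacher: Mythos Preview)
Your proposal is correct and follows exactly the approach the paper intends: the paper's own proof of Claim~\ref{cfg:easy} is a single sentence saying to ``follow the derivations in the proof of Claim~\ref{cfg:hard} with any triple $t_\alpha,t_\beta,t_\gamma \in \mathcal{C}_k$ of $k$-cliques that together form a $3k$-clique,'' and you have simply unpacked that sentence in full detail, including the converse of Lemma~\ref{biclique}. One small remark: in your final paragraph, the fact that the pairing ``exhausts all of $t\times t'$'' is what was needed for the \emph{forward} direction; for this converse direction you only need that every pair $(v,u)$ encountered satisfies $v\in N(u)$, which is immediate from the $2k$-clique hypothesis---so the bookkeeping is even easier than you suggest.
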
 

\begin{proof}
This claim follows by following the derivations in the proof of Claim~\ref{cfg:hard} with any triple $t_\alpha,t_\beta,t_\gamma \in \mathcal{C}_k$ of $k$-cliques that together form a $3k$-clique.
\end{proof}

We are now ready to prove Theorem~\ref{thm:cfg}.

\begin{reminder}{Theorem~\ref{thm:cfg}}
There is context-free grammar $\mathcal{G}_C$ of constant size such that if we can determine if a string of length $n$ can be obtained from $\mathcal{G}_C$ in $T(n)$ time, then $k$-Clique on $n$ node graphs can be solved in $O\left(T \left(n^{k/3+1} \right) \right)$ time, for any $k\geq 3$. 
Moreover, the reduction is combinatorial.
\end{reminder}

\begin{proof}
Given an instance of $3k$-Clique, a graph $G=(V,E)$ we construct the string $w$ as described above, which will have length $O(k^2 \cdot n^{k+1})$, in $O(k^2 \cdot n^{k+1})$ time.
Given a recognizer for $\mathcal{G}_C$ as in the statement, we can check whether $\mathcal{G}_C \implies w$ in $O\left(T \left(n^{k/3+1} \right) \right)$ time (treating $k$ as a constant).
By Claims~\ref{cfg:hard} and~\ref{cfg:easy}, $\mathcal{G}_C \implies w$ iff the graph $G$ contains a $3k$-clique.
\end{proof}

\section{Clique to RNA folding}
\label{sec:RNA}
In this section we prove Theorem~\ref{thm:RNA} by reducing $k$-Clique to RNA folding, defined below.

Let $\Sigma$ be an alphabet of letters of constant size.
For any letter $\sigma \in \Sigma$ there will be exactly one ``matching" letter which will be denoted by $\sigma'$.
Let $\Sigma' = \{ \sigma' \mid \sigma \in \Sigma\}$ be the set of matching letters to the letters in $\Sigma$.
Throughout this section we will say that a pair of letters $\{x,y\}$ match iff $y=x'$ or $x=y'$.

Two pairs of indices $(i_1,j_1),(i_2,j_2)$ such that $i_1<j_1$ and $i_2<j_2$ are said to ``cross" iff at least one of the following three conditions hold:
(i) $i_1=i_2$ or $i_1=j_2$, or $j_1=i_2$, or $j_1=j_2$; (ii) $i_1<i_2<j_1<j_2$; (iii) $i_2<i_1<j_2<j_1$. Note that by our definition, non-crossing pairs cannot share any indices. 

\begin{definition}[RNA Folding]
Given a sequence $S$ of $n$ letters from $\Sigma \cup \Sigma'$, what is the maximum number of pairs $A = \{(i,j) \mid i < j\text{ and }i,j \in [n]\}$ such that for every pair $(i,j) \in A$ the letters $S[i]$ and $S[j]$ match and there are no crossing pairs in $A$. 
We will denote this maximum value by $RNA(S)$.
\end{definition}

It is interesting to note that RNA can be seen as a language distance problem with respect to some easy to parse grammar.
Because of the specific structure of this grammar, our reduction from Section~\ref{sec:cfg} does not apply.
However, the ideas we introduced allow us to replace our clique detecting grammar with an easier grammar if we ask the parser to return more information, like the distance to a string in the grammar.
At a high level, this is how we get the reduction to RNA folding presented in this section.

To significantly simplify our proofs, we will reduce $k$-Clique to a more general \emph{weighted} version of RNA folding.
Below we show that this version can be reduced to the standard RNA folding problem with a certain overhead.

\begin{definition}[Weighted RNA Folding]
Given a sequence $S$ of $n$ letters from $\Sigma \cup \Sigma'$ and a weight function $w:\Sigma \to [M]$, what is the \emph{maximum weight} of a set of pairs $A = \{(i,j) \mid i < j\text{ and }i,j \in [n]\}$ such that for every pair $(i,j) \in A$ the letters $S[i]$ and $S[j]$ match and there are no crossing pairs in $A$. 
The weight of $A$ is defined as $\sum_{(i,j) \in A} w(S[i])$.
We will denote this maximum value by $WRNA(S)$.
\end{definition}

\begin{lemma}
\label{WRNA}
An instance $S$ of Weighted RNA Folding on a sequence of length $n$, alphabet $\Sigma \cup \Sigma'$, and weight function $w:\Sigma \to [M]$ can be reduced to an instance $\hat S$ of RNA Folding on a sequence of length $O(Mn)$ over the same alphabet.
\end{lemma}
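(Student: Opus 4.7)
The plan is to reduce Weighted RNA Folding to ordinary RNA Folding by expanding each letter to a block of copies. Specifically, for each position $i \in [n]$, I replace the single letter $S[i]$ by a block $B_i$ consisting of $w(S[i])$ consecutive copies of the letter $S[i]$. The resulting sequence $\hat{S}$ has length $\sum_{i=1}^{n} w(S[i]) \leq Mn$, uses the same alphabet $\Sigma \cup \Sigma'$, and is produced in time linear in its length. I extend $w$ to $\Sigma'$ by setting $w(\sigma'):=w(\sigma)$, so matched letters carry equal weight. The lemma will follow from the identity $RNA(\hat{S}) = WRNA(S)$.

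The easy direction, $RNA(\hat{S}) \geq WRNA(S)$, is proved by taking an optimal weighted non-crossing matching $A$ of $S$ and building a matching of $\hat{S}$: for every pair $(i,j) \in A$, I match the $k$-th letter of $B_i$ with the $(w(S[i])-k+1)$-th letter of $B_j$ for $k = 1,\ldots,w(S[i])$. This produces $w(S[i])$ fully nested pairs per block pair, and pairs coming from different original pairs cannot cross because the blocks $B_i$ are disjoint intervals and $A$ is itself non-crossing in $S$. The total number of pairs produced equals $\sum_{(i,j)\in A} w(S[i]) = WRNA(S)$.

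The harder direction, $RNA(\hat{S}) \leq WRNA(S)$, is proved by taking an arbitrary non-crossing matching $\hat{A}$ of $\hat{S}$ and extracting a weighted non-crossing matching $A$ of $S$ with weight at least $|\hat{A}|$. The key ingredient is a structural claim: there is an optimal $\hat{A}$ in \emph{canonical form}, meaning that for every block $B_i$, all pairs of $\hat{A}$ with an endpoint in $B_i$ have their other endpoint in a single fixed block $B_j$. Assuming this, I set $A := \{(i,j) : B_i\text{ and }B_j\text{ share a pair in }\hat{A}\}$. Then $A$ has no shared endpoints (by the canonical form), and it inherits non-crossingness from $\hat{A}$: if two pairs $(i,j),(i',j') \in A$ crossed, any witness pairs in $\hat{A}$ between $B_i,B_j$ and between $B_{i'},B_{j'}$ would cross too, since the blocks are disjoint intervals laid out in the same order. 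Its weight is $\sum_{(i,j) \in A} w(S[i]) \geq |\hat{A}|$, because each block pair $(i,j) \in A$ can contribute at most $\min(w(S[i]),w(S[j])) = w(S[i])$ pairs to $\hat{A}$.

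The main obstacle is establishing the canonical-form structural claim. The argument is by exchange. Because $\hat{A}$ is non-crossing and $B_i$ occupies consecutive positions, the pairs of $\hat{A}$ with one endpoint in $B_i$ and the other to the right of $B_i$ must nest, with the outermost pair using the leftmost position of $B_i$ and reaching the farthest target block, and deeper pairs using later positions of $B_i$ and landing in closer blocks; the symmetric statement holds for left-going pairs, and the left-matched prefix of $B_i$ must precede the right-matched suffix. If $B_i$'s right-going pairs reach two distinct target blocks, I reroute the inner pair into the outer target block, using that both targets consist of copies of the same letter complementary to $S[i]$; this preserves $|\hat{A}|$ and, by induction on the number of distinct blocks touched by $B_i$, consolidates all right-going pairs into a single block, and analogously for left-going pairs. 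A final swap handles the case where $B_i$ simultaneously has left- and right-going pairs. The technical heart of the proof is verifying that each such reroute does not introduce crossings with pairs anchored at the affected target blocks, which relies on tracking the nesting structure around those blocks.
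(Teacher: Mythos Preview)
Your construction of $\hat S$ by repeating each letter $w(S[i])$ times is exactly the paper's, and your easy direction $RNA(\hat S)\geq WRNA(S)$ matches the paper's argument. The hard direction also aims at the same canonical form (every block is matched entirely into a single other block) via an exchange argument, so the overall plan coincides with the paper's.

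However, your specific exchange step has a gap. You write that if $B_i$ has right-going pairs to two distinct targets, you ``reroute the inner pair into the outer target block'' and this preserves $|\hat A|$. But the outer target block may already be fully matched, leaving no free position to reroute into. Concretely, take $S=\sigma\,\sigma'\,\sigma\,\sigma'$ with all weights $2$, so $\hat S$ has four blocks of size~$2$ at positions $\{1,2\},\{3,4\},\{5,6\},\{7,8\}$. The optimal non-crossing matching $\hat A=\{(1,8),(2,3),(4,5),(6,7)\}$ has $B_1$ reaching both $B_2$ (inner) and $B_4$ (outer), but $B_4$ is full, so you cannot reroute $(2,3)$ into $B_4$; symmetrically $B_2$ is full, so rerouting the outer pair inward fails too. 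Your ``final swap'' for simultaneous left- and right-going pairs runs into the same obstruction. A canonical optimal matching does exist here, namely $\{(1,4),(2,3),(5,8),(6,7)\}$, but reaching it requires a more global rearrangement than a single-pair reroute.

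The paper's exchange is different: it selects a matched pair $a\in s_1$, $a'\in s_2$ of \emph{minimal distance} among those whose blocks also reach elsewhere, then fully matches $s_1$ with $s_2$ and rematches all the displaced endpoints (those formerly paired into $s_1$ or $s_2$) among themselves. In the example above this turns $(1,8),(2,3),(4,5),(6,7)$ into $(1,4),(2,3),(5,8),(6,7)$ in one step. The minimal-distance choice is what makes the displaced endpoints rematchable without loss or crossings; your per-block consolidation lacks an analogous global invariant, and you would need to supply one to close the argument.
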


\begin{proof}
	Let $S=S_1 \cdots S_n$, we set $\hat S:=S_1^{w(S_1)} \cdots S_n^{w(S_n)}$, that is, each symbol $S_i$ is repeated $w(S_i)$ times.
	First, we can check that $WRNA(S)\leq RNA(\hat{S})$. This holds because we can replace each matching pair $\{a,a'\}$ in the folding achieving weighted RNA score of $WRNA(S)$ with $w(a)$ such pairs in the (unweighted) RNA folding instance $\hat{S}$ giving the same contribution to $RNA(\hat{S})$.
	

	Now we will show that $RNA(\hat{S})\leq WRNA(S)$. 
	Suppose thet there are symbols $a$ and $a'$ in $\hat S$ that are paired.
	The symbol $a$ comes from a sequence $s_1$ of $a$ symbols of length
	$w(a)$. The sequence $s_1$ was produced from a single symbol $a$ when transforming $S$ into $\hat{S}$.
	Similarly, the symbol $a'$ comes from a sequence $s_2$ of $a'$ symbols of length $w(a)$.
	Also, assume that there exists a symbol in $s_1$ that is paired to a symbol that is outside of $s_2$ or there exists a symbol in $s_2$ that is paired to a symbol outside of $s_1$.
	While we can find such symbols $a$ and $a'$, we repeat the following procedure.
	Choose $a$ and $a'$ that satisfy the above properties. And choose them so that the number of other symbols between $a$ and $a'$ is as small as possibly. Break ties arbitrarily.
	We match all symbols in $s_1$ to their counterparts in $s_2$. Also, we rematch all symbols that were previously matched to $s_1$ or $s_2$ among themselves.
	We can check that we can rematch these symbols so that the number of matched pairs do not decrease.

	Therefore, we can assume that in some optimal folding of $\hat{S}$, for any pair $a\in \Sigma,a' \in \Sigma'$ that is matched the corresponding substrings $s_1$ and $s_2$ are completely paired up. 
	Thus, to get a folding of $S$ that achieves $WRNA(S)$ at least $RNA(\hat{S})$ we can now simply fold the corresponding symbols to $s_1$ and $s_2$, for any such pair $a,a'$.
	\end{proof}

\subsection{The Reduction}

Given a graph $G=(V,E)$ on $n$ nodes and $O(n^2)$ unweighted undirected edges, we will describe how to efficiently construct a sequence $S_G$ over an alphabet $\Sigma$ of constant size, such that the RNA score of $S_G$ will depend on whether $G$ contains a $3k$-clique.
The length of $S_G$ will be $O(k^d n^{k+c})$ for some small fixed constants $c,d>0$ independent of $n$ and $k$, and the time to construct it from $G$ will be linear in its length.
This will prove that a fast (e.g. subcubic) RNA folder can be used as a fast $3k$-clique detector (one that runs much faster than in $O(n^{3k})$ time).

Our main strategy will be to enumerate all $k$-cliques in the graph and then search for a triple of $k$-cliques that have all the edges between them.
We will be able to find such a triple iff the graph contains a $3k$-clique.
An RNA folder will be utilized to speed up the search for such a ``good" triple.
Our reduction will encode every $k$-clique of $G$ using a ``short" sequence of length $O(n^c)$ such that the RNA folding score of a sequence composed of the encodings of a triple of sequences will be large iff the triple is ``good".
Then, we will show how to combine the short encodings into our long sequence $S_G$ such that the existence of a ``good" triple affects the overall score of an optimal folding.

\paragraph{The RNA Sequence}
Our sequence $S_G$ will be composed of many smaller gadgets which will be combined in certain ways by other padding gadgets.
We construct these gadgets now and explain their useful properties.
The proofs of these properties are postponed until after we present the whole construction of $S_G$.

For a sequence $s \in \Sigma^*$ let $p(s) \in (\Sigma')^*$ be the sequence obtained from $s$ by replacing every letter $\sigma \in \Sigma$ with the matching letter $\sigma' \in \Sigma'$.
That is, if $s=s_1\cdots s_n$ then $p(s)=s_1'\cdots s_n'$.

 Our alphabet $\Sigma$ will contain the letters $\mathtt{0,1}$ and some additional symbols which we will add as needed in our gadgets.
 We will set the weights so that $w(\mathtt{0})=w(\mathtt{1})=1$, and the extra symbols we add will be more ``expensive".
 For example, we will add the $\$$ symbol to the alphabet and set $w(\$)=10\cdot \log{n}$.
 We define \emph{node} gadgets as, 
 \[ 
 NG(v) = \$^{2n} \  \bar{v} \  \$^{2n}
 \]
 and \emph{list} or \emph{neighborhood} gadgets as,
 \[
 LG(v) =   \bigcirc_{u \in N(v)} (\$ \ \bar{u} \ \$) \ \circ \ \bigcirc_{u \notin N(v)} (\$ \ \$).
 \]
 
These gadget are constructed so that for any two nodes $u,v \in V(G)$, the RNA folding score of the sequence $NG(v) \circ p\left(LG(u) \right)^R$ is large (equal to some fixed value $E_1$) if $v$ is in the neighborhood of $u$, that is $(u,v) \in E(G)$, and smaller otherwise (at most $E_1-1$).
This is because the more expensive $\$$ symbols force an optimal folding to match $\bar{v}$ with exactly one $p(\bar{w})^R$, since otherwise a $\$'$ will remain unmatched while many $\$$ are free.
The construction also allows the folding to pick any $w \in N(u)$ to fold together with $\bar{v}$, without affecting the score from the $\$$ symbols.
Then, we use the fact that $\bar{v} \circ p(\bar{w})^R$ achieves maximal score iff $v=w$.
 This is formally proved in Claim~\ref{nbrhood}.
 
Let $\ell_1 = 10 k^2 \cdot n\log{n}$ and note that $\ell_1$ is an upper bound on the total weight of all the symbols in the gadgets $NG(v)$ and $LG(v)$, for any node $v \in V(G)$.
Let $\mathcal{C}_k$ be the set of $k$-cliques in $G$ and consider some $t = \{ v_1,\ldots,v_k\} \in \mathcal{C}_k$.
We will now combine the node and list gadgets into larger gadgets that will be encoding $k$-cliques.

We will add the $\#$ symbol to the alphabet and set $w(\#)= \ell_1$, i.e. a single $\#$ letter is more expensive than entire $k^2$ node or list gadget.
We will encode a clique in two ways. The first one is,
\[
CNG(t) =  \bigcirc_{v \in t}  (\# NG(v) \#)^k 
\]
and the second one is,
\[
CLG(t) =  (\bigcirc_{v \in t}  (\# LG(v) \# ))^k .
\]
These clique gadgets are very useful because of the following property.
For any two $k$-cliques $t_1,t_2 \in \mathcal{C}_k$, the RNA folding score of the sequence $CNG(t_1) \circ p \left( CLG(t_2) \right)^R$ is large (equal to some fixed value $E_2$) if $t_1$ and $t_2$, together,  form a $2k$-clique, and is smaller otherwise (at most $E_2-1$).
That is, the RNA folding score of the sequence tells us whether any pair of nodes $u \in t_1, v \in t_2$ are connected $(u,v)\in E(G)$.
There are two ideas in the construction of these gadgets.
First, we copy the gadgets corresponding to the $k$ nodes of the cliques $k$ times, resulting in $k^2$ gadgets, and we order them in a way so that for any pair of nodes $u\in t_1,v\in t_2$ there will be a position $i$ such that the gadget of $u$ in $CNG(t_1)$ and the gadget of $v$ in $p(CLG(t_2))^R$ are both at position $i$.
Then, we use the expensive $\#$ separators to make sure that in an optimal RNA folding of $CNG(t_1)$ and $p(CLG(t_2))^R$, the gadgets at positions $i$ are folded together, and not to other gadgets - otherwise some $\#$ symbol will not be paired.
 This is formally proved in Claim~\ref{twoCG}.

Let $\ell_2 = 10 \cdot k^2 \cdot \ell_1= O(n\log{n})$ and note that it is an upper bound on the total weight of all the symbols in the $CNG(t)$ and $CLG(t)$ gadgets.
Finally, we introduce a new letter to the alphabet $\mathtt{g}$ and set its weight to $w(\mathtt{g}) = \ell_2$, which is much more expensive than the entire gadgets we constructed before, and then define our final clique gadgets.
Moreover, we will now duplicate our alphabet three times  to force only ``meaningful" foldings between our gadgets.
It will be convenient to think of $\alpha,\beta,\gamma$ as three \emph{types} such that we will be looking for three $k$-clique, one from type $\alpha$ one from $\beta$ and one from $\gamma$.
For any  pair of types $xy \in \{ \alpha\beta, \alpha\gamma, \beta\gamma\}$ we will construct a new alphabet $\Sigma_{xy} = \{ \sigma_{xy} \mid \sigma \in \Sigma \}$ in which we mark each letter with the pair of types it should be participating in.
For a sequence $s \in (\Sigma \cup \Sigma)^*$ we use the notation $[s]_{xy}$ to represent the sequence in $(\Sigma_{xy} \cup \Sigma'_{xy})^*$ in which we replace every letter $\sigma$ with the letter $\sigma_{xy}$.

We will need three types of these clique gadgets in order to force the desired interaction between them.
\[
CG_\alpha(t) =  [ \mathtt{g} \ CNG(t) \  \mathtt{g} ]_{\alpha\gamma} \  \circ \ [ \mathtt{g}' \ p(CLG(t))^R \   \mathtt{g}' ]_{\alpha\beta}
\]
\[
CG_\beta(t) = [ \mathtt{g} \ CNG(t) \  \mathtt{g} ]_{\alpha\beta} \ \circ \ [ \mathtt{g}' \ p(CLG(t))^R \   \mathtt{g}' ]_{\beta\gamma}
\]
\[
CG_\gamma(t) =  [ \mathtt{g} \ CNG(t) \  \mathtt{g} \ ]_{\beta\gamma} \ \circ   \ [ \mathtt{g}' \ p(CLG(t))^R \   \mathtt{g}' ]_{\alpha\gamma}
\]

These clique gadgets achieve exactly what we want: for any three $k$-cliques $t_\alpha,t_\beta,t_\gamma \in \mathcal{C}_k$ the RNA folding score of the sequence $CG_\alpha(t_\alpha)\circ CG_\beta(t_\beta)\circ CG_\gamma(t_\gamma)$ is large (equal to some value $E_3$) if $t_\alpha \cup t_\beta \cup t_\gamma$ is a $3k$-clique and smaller otherwise (at most $E_3-1$).
In other words, an RNA folder can use these gadgets to determine if three separate $k$-cliques can form a $3k$-clique.
This is achieved by noticing that the highest priority for an optimal folding would be to match the $\mathtt{g}_{xy}$ letters with their counterparts $\mathtt{g}'_{xy}$, which leaves us with three of sequences to fold: $S_{\alpha\gamma} = [CNG(t_\alpha) \circ p(CLG(t_\gamma))^R]_{\alpha\gamma}$ and $S_{\alpha\beta} = [p(CLG(t_\alpha))^R \circ CNG(t_\beta)]_{\alpha\beta}$ and  $S_{\beta\gamma} = [p(CLG(t_\beta))^R \circ CNG(t_\gamma)]_{\beta\gamma}$.
The maximal score ($E_2$) in each one of these three sequences can be achieved iff every pair of our $k$-cliques form a $2k$-clique which happens iff they form a $3k$-clique.
This is formally proved in Claim~\ref{CG}.

The only remaining challenge is to combine all the sequences corresponding to all the $O(n^k)$ $k$-cliques in the graph into one sequence in a way that the existence of a ``good" triple, one that makes a $3k$-clique, affects the RNA folding score of the entire sequence.
Note that if we naively concatenate all the clique gadgets into one sequence, the optimal sequence will choose to fold clique gadgets in pairs instead of triples since folding a triple makes other gadgets unable to fold without crossings.
Instead, we will use the structure of the RNA folding problem again to implement a ``selection" gadget that forces exactly three clique gadgets to fold together in any optimal folding.
We remark that the implementation of such ``selection" gadgets is very different in the three proofs in this paper: In Section~\ref{sec:cfg} we use the derivation rules, in this section we use the fact that even when folding all expensive separators in a sequence to the left or right we are left with an interval that is free to fold with other parts of the sequence, and in Section~\ref{sec:dyck} we rely on the restriction of Dyck that an opening bracket can match only to closing brackets to its right.

Towards this end,  we introduce some extremely expensive symbols $\alpha,\beta,\gamma$.
Let $\ell_3=10 \ell_2$ be an upper bound on the total weight the $CG_x(t)$ gadgets, and set $w(\alpha)=w(\beta)=w(\gamma)=\ell_3$.
Our ``clique detecting" RNA sequence is defined as follows.

\begin{align*}
S_G \ =  & \ \ \ \  \mathtt{\alpha}^{2n^{k}}  \  \bigcirc_{t \in \mathcal{C}_k } \left(  \alpha' \ CG_\alpha(t) \ \alpha' \right) \ \alpha^{2n^{k}}  \\
& \circ \  \beta^{2n^{k}}  \  \bigcirc_{t \in \mathcal{C}_k } \left(  \beta' \ CG_\beta(t) \ \beta' \right) \ \beta^{2n^{k}}  \\
& \circ \ \gamma^{2n^{k}}  \  \bigcirc_{t \in \mathcal{C}_k } \left(  \gamma' \ CG_\gamma(t) \ \gamma' \right) \ \gamma^{2n^{k}}  
\end{align*} 

The added padding makes sure that all but one $CG_\alpha$ gadget are impossible to fold without giving up an extremely valuable $\alpha,\alpha'$ pair, and similarly all but one $CG_\beta$ and one $CG_\gamma$ cannot be folded.
To see this, assume all the $\alpha'$ are paired (left or right) and note that if both $\alpha'$ symbols surrounding a clique gadget $CG_\alpha(t)$ are paired to one side (say, left) then the only non crossing pairs that the gadget could participate in are either with $\alpha$ symbols (but those cannot be matches) or within itself.
Our marking of symbols with pairs of types $xy$ make it so that a clique gadget cannot have any matches with itself.
Therefore, if all $\alpha'$ symbols are matched, then all but one $CG_\alpha(t)$ gadgets do not participate in any foldings. The argument for $\beta,\gamma$ is symmetric.
We are left with a folding of a sequence of three clique gadgets $CG_\alpha(t_\alpha),CG_\beta(t_\beta),CG_\gamma(t_\gamma)$ which can achieve maximal score iff $t_\alpha \cup t_\beta \cup t_\gamma$ is a $3k$-clique.

This proves our main claim that the (weighted) RNA folding score of our clique detecting sequence $S_G$ is large (equal to some fixed value $E_C$) if the graph contains a $3k$-clique and smaller (at most $E_C-1$) otherwise.
See Claim~\ref{cl:main} for the formal proof.

Our final alphabet $\Sigma$ has size $18$ (together with $\Sigma'$ this makes $36$ symbols).
\[ \Sigma = \  \{ \alpha,\beta,\gamma \}  \ \cup \ \bigcup_{xy \in \{\alpha\beta,\alpha\gamma,\beta\gamma\}} \{ \mathtt{0, 1} ,\$, \#, g \}_{xy} 
\]

Observe that $S_G$ can be constructed from $G$ in $O(n^{k+1})$ time, by enumerating all subsets of $k$ nodes and that it has length $O(n^{k+1})$.
The construction of $S_G$ should be seen as a heavy preprocessing and encoding of the graph, after which we only have to work with $k$-cliques.
The largest weight we use in our construction is $\ell_3 = O(k^{O(1)} n\log{n})$ and therefore using Lemma~\ref{WRNA} we can reduce the computation of the weighted RNA of $S_G$ to an instance of (unweighted) RNA folding on a sequence of length $O(|S_G| k^{O(1)} n \log{n}) = \tilde{O}(k^{O(1)} n^{k+2})$ which proves Theorem~\ref{thm:RNA}.

\paragraph{Formal Proofs}
We will start with the proof that the list and node gadgets have the desired functionality.
Let $E_1= (10 n+1) \cdot \log n$.

\begin{claim} \label{nbrhood} 
For any $xy \in \{\alpha\beta, \alpha\gamma, \beta\gamma\}$ and two nodes $u,v \in V(G)$ we have that the weighted RNA folding score 
$ WRNA( [ NG(u) \circ p(LG(v))^R ]_{xy})$  is $E_1$ if $u \in N(v)$ and at most $E_1-1$ otherwise.
\end{claim}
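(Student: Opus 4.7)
The plan is to first upper bound the weighted RNA folding score of $S := [NG(u) \circ p(LG(v))^R]_{xy}$, then identify exactly when the bound is attained.

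First I would count the ``heavy'' symbols. The sequence $S$ contains $4n$ copies of $\$_{xy}$ (all in $NG(u)$) and $2n$ copies of $\$'_{xy}$ (all in $p(LG(v))^R$), and these match only with each other, so at most $2n$ $\$$-pairs are possible. Since $w(\$) = 10\log n$ while bits contribute weight $1$ each and there are only $2\log n$ bits in $\bar{u}$, sacrificing even one $\$$-pair costs $10\log n$ weight that cannot be recovered from bit-pairs. Hence every optimal folding achieves exactly $2n$ $\$$-pairs, contributing a fixed ``base'' score independent of $u$ and $v$.

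Next I would exploit the geometry. Because all $\$_{xy}$ sit in the left half of $S$ and all $\$'_{xy}$ in the right half, the $2n$ chosen pairs are forced to be nested. Indexing the $\$$'s of $NG(u)$ as $\$_1,\ldots,\$_{4n}$ and the $\$'$'s of $p(LG(v))^R$ as $\$'_1,\ldots,\$'_{2n}$ left-to-right, any optimal matching picks a $2n$-subset $i_1<\cdots<i_{2n}$ of $\{1,\ldots,4n\}$ and pairs $\$_{i_l}$ with $\$'_{2n-l+1}$. Since $\bar{u}$ lies between $\$_{2n}$ and $\$_{2n+1}$ inside $NG(u)$, any bit-pair involving $\bar{u}$ must be nested inside the two $\$$-pairs that bracket this region. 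Because all $2n$ of the $\$'$'s are consumed, those bracketing pairs correspond to two \emph{consecutive} $\$'$'s of $p(LG(v))^R$, and the substring between two consecutive $\$'$'s is precisely the content of one slot of $LG(v)$ under $p(\cdot)^R$---either $p(\bar{u'})^R$ for some $u'\in N(v)$ or empty.

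Now the case analysis is easy. A standard bijection between non-crossing bit-matchings and common subsequences shows that the number of achievable bit-pairs between $\bar{u}$ and $p(\bar{u'})^R$ equals $\LCS(\bar{u},\bar{u'})$, which is $2\log n$ iff $u'=u$ and at most $2\log n - 1$ otherwise. If $u\in N(v)$ I would choose the nesting so that $\$_{2n}$ and $\$_{2n+1}$ both lie in $\{i_l\}$ and together bracket the $u$-slot (having $4n$ $\$$'s against only $2n$ $\$'$'s provides ample slack to place the required number of pairs on either side of $\bar{u}$); this gives $2\log n$ bit-pairs and realizes the maximum score $E_1$. If $u\notin N(v)$, every non-empty slot corresponds to some $u'\neq u$, so the LCS drops by at least $1$, yielding $\le E_1-1$. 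The main obstacle is the ``bracketing'' step---pinning down that, under any optimal $\$$-matching, $\bar{u}$ folds only with a single slot's content, and conversely that \emph{every} slot is selectable by some valid nesting when $u\in N(v)$.
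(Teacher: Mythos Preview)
Your proposal is correct and follows essentially the same approach as the paper: first argue that any optimal folding must pair all the $\$'$ symbols (since one missed $\$'$ outweighs all possible bit-pairs), then observe that this forces $\bar{u}$ to fold with the content of a single slot of $LG(v)$, and finally compare bit-scores within that slot. Your treatment is in fact more careful than the paper's: where the paper simply asserts ``the substring $\bar{u}$ can only be folding to at most one of the substrings $p(\bar{v_i})^R$'', you supply the nesting/bracketing argument explicitly, and your LCS formulation makes the bit-score comparison precise. The ``slack'' check (that $4n$ copies of $\$$ against $2n$ copies of $\$'$ lets you bracket any chosen slot) is also something the paper leaves implicit, so your concern there is well-placed but easily verified.
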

\begin{proof}
	Since all letters in the sequence we are concerned with have the same mark $xy$, we will omit the subscripts.
	If $u \in N(v)$ then $p(\bar{u})^R$ appears in $LG(v)$ and we can completely match it with the sequence $\bar{u}$ in $NG(u)$, giving a score of $2\log{n}$, then we match all the $n$ $\$'$ symbols in $LG(v)$ to some $\$$ symbols in $NG(u)$ and gain an extra score of $n \cdot 10 \log n$.
	Therefore, in this case, the weighted RNA score is $E_1=\log n + 10 n \log n$.
	
	Now we assume that $u \notin N(v)$ and show that in the optimal folding the score is at most $E_1-1$.
	First, note that the sequence $p(LG(v))^R$ has fewer  $\$'$ symbols (it has $2n$ such symbols) than the sequence $NG(u)$ (which has $4n$ such symbols).
	 By not pairing a symbol $\$'$ in $p(LG(v))^R$, we lose a score of $w(\$)$ which is much more than the entire weight of the non-$\$$ symbols in $NG(u)$.
	 Therefore, any matching which leaves some $\$'$ unmatched is clearly sub-optimal, and we can assume	 that all the $\$'$ symbols are matched. Given this, the substring $\bar{u}$ can only be folding to at most one
	of the substrings  $p(\bar{v_i})^R$ in $LG(v)$ for some $v_i \in N(v)$. This folding can only achieve score $2\log{n}-1$ because $u \notin N(v)$.
	Thus, the total score of the optimal matching is no more than $E_1-1$.
	 \end{proof}

Next, we prove that the ``clique node gadgets" and ``clique list gadgets" check that two $k$-cliques form one bigger $2k$-clique.
Let $E_2 = 2k^2 \cdot \ell_1 + k^2 \cdot E_1$.

\begin{claim} \label{twoCG} 

For any $xy \in \{\alpha\beta, \alpha\gamma, \beta\gamma\}$ and two $k$-cliques $t_1,t_2 \in \mathcal{C}_k$ we have that the weighted RNA folding score 
$ WRNA( [ CNG(t_1) \circ p(CLG(t_2))^R ]_{xy})$  is $E_2$ if $t_1 \cup t_2$ is a $2k$-clique and at most $E_2-1$ otherwise.
\end{claim}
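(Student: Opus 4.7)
The plan is to mirror the argument of Claim~\ref{nbrhood} one level up: use the heavy cost of $\#$ symbols to force a rigid nested matching of the $2k^2$ pairs of $\#/\#'$, and then use Claim~\ref{nbrhood} on each of the $k^2$ resulting sub-instances. Since all symbols carry the same mark $xy$, I will drop the subscript throughout.

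First I would observe that $w(\#) = \ell_1$ exceeds the total weight of all non-$\#$ symbols in any single $NG(\cdot)$ or $LG(\cdot)$. As there are exactly $2k^2$ $\#$ symbols (two per node gadget in $CNG(t_1)$) and exactly $2k^2$ $\#'$ symbols (two per list gadget in $CLG(t_2)$, which become $\#'$ after applying $p$), any folding that leaves even one pair unmatched loses $\ell_1$ weight, which dominates any score that could be made up elsewhere. Since all $\#$s sit in the left half and all $\#'$s in the right half, and because a non-crossing matching of $2k^2$ opening symbols on the left with $2k^2$ closing symbols on the right must be perfectly nested, the matching of $\#/\#'$ pairs is completely forced: the $r$-th $\#$ from the left of $CNG(t_1)$ must be matched with the $r$-th $\#'$ from the right of $p(CLG(t_2))^R$. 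These $2k^2$ nested matched pairs contribute exactly $2k^2 \cdot \ell_1$ to the score and partition the remaining symbols into $k^2$ independent ``windows,'' where the $r$-th window contains the $r$-th node gadget $NG(v_r^\star)$ from $CNG(t_1)$ paired with the $r$-th (reversed, primed) list gadget from $p(CLG(t_2))^R$, which corresponds to the $(k^2-r+1)$-th original list gadget $LG(u_r^\star)$ in $CLG(t_2)$.

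Next I would verify the counting: because $CNG(t_1) = \bigcirc_{v\in t_1}(\# NG(v) \#)^k$ lists each $v\in t_1$ as $k$ consecutive copies, while $CLG(t_2) = (\bigcirc_{u\in t_2}(\# LG(u)\#))^k$ cycles through $t_2$, reversing $CLG(t_2)$ and indexing from the opposite end yields that, as $r$ ranges over $\{1,\ldots,k^2\}$, the pair $(v_r^\star, u_r^\star)$ takes every value in $t_1 \times t_2$ exactly once. A direct residue calculation: if $v_r^\star = v_{\lceil r/k\rceil}$ and $u_r^\star = u_{((k^2-r)\bmod k)+1}$, then for any fixed $(a,b)$ the unique $r\in\{k(a-1)+1,\ldots,ka\}$ with $r\equiv k-b+1 \pmod k$ witnesses the pair $(v_a,u_b)$. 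Applying Claim~\ref{nbrhood} inside each of these $k^2$ independent windows, the contribution is at most $E_1$, with equality iff the two paired nodes are distinct and adjacent in $G$. Summing everything gives a total at most $2k^2\ell_1 + k^2 E_1 = E_2$, with equality iff \emph{every} pair $(v,u)\in t_1\times t_2$ satisfies $v\ne u$ and $\{u,v\}\in E(G)$, that is, iff $t_1\cup t_2$ is a $2k$-clique. The matching lower bound when $t_1\cup t_2$ is a $2k$-clique is realized by choosing the folding that pairs all $\#/\#'$ and that, within each window, picks the fold promised by Claim~\ref{nbrhood}.

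The main obstacle is rigorously arguing that no ``exotic'' folding can do better than the nested $\#/\#'$ scheme above. The concern is a folding that sacrifices one or more $\#$ matches in exchange for additional matches inside the $NG$/$LG$ contents, or one that deliberately crosses the boundary between windows. Both are ruled out by the fact that $w(\#)=\ell_1$ is strictly greater than the total weight of any single node or list gadget, combined with the fact that each gadget's contribution is in any case bounded by $E_1 \le \ell_1$; thus the cost $\ell_1$ of dropping one $\#$ pair cannot be recouped by any reshuffling of interior symbols, and the non-crossing constraint together with all $\#$s preceding all $\#'$s leaves no room other than the fully nested matching. Once this rigidity is made precise, the reduction of an arbitrary optimal folding to $k^2$ independent sub-instances---each governed by Claim~\ref{nbrhood}---delivers both the equality in the $2k$-clique case and the strict $E_2-1$ upper bound otherwise.
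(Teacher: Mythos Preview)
Your proposal is correct and follows essentially the same approach as the paper: force all $\#/\#'$ pairs to be matched (using that $w(\#)=\ell_1$ dominates the interior weight), obtain $k^2$ independent windows, and apply Claim~\ref{nbrhood} in each. You supply more detail than the paper does---in particular an explicit residue computation showing that the window pairing realizes a bijection onto $t_1\times t_2$, and a more careful justification that the fully nested $\#/\#'$ matching is the only viable option---but the argument is the same.
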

\begin{proof}
	We will omit the irrelevant $xy$ subscripts.
	First, note that the sequences $CNG(t_1)$ and $p(CLG(t_2))^R$ have the same number of $\#$ and $\#'$ symbols, respectively.
	By not pairing a single one of them with its counterpart, we lose a contribution of $w(\#)$ to the WRNA score,
	which is much more than we could gain by pairing all the symbols in all the node and list gadgets (that is, the rest of the sequence).
	Therefore, we assume that all the $\#$ and $\#'$ symbols are paired. 
	Let $t_1=\{u_1,\ldots,u_k\}$ and $t_2 = \{ v_1,\ldots,v_k\}$.
	We can now say that
	$$
		WRNA(CNG(t_1) \circ p(CLG(t_2))^R)=(2 k^2)w(\#)+\sum_{i \in [k]}\sum_{j \in [k]}WRNA(NG(u_i) \circ p(LG(v_j))^R).
	$$
	and by Claim~\ref{nbrhood} we know that $WRNA(NG(u_i) \circ p(LG(v_j))^R)=E_1$ if $u_i$ and $v_j$ are connected and less otherwise.
	Therefore, we can only get the maximal $E_2 = 2k^2\cdot \ell_1 + k^2 \cdot E_1$ iff every pair of nodes, one from $t_1$ and one from $t_2$ are connected.
	Since $u \notin N(u)$ for all $u \in V(G)$ and since $t_1,t_2$ are $k$-cliques, we conclude that $t_1 \cup t_2$ is a $2k$-clique.
\end{proof}

We are now ready to prove the main property of our clique gadgets: a sequence of three clique gadgets (one from each type) achieves maximal score iff they form a $3k$-clique together.
Let $E_3=6 \ell_2 + 3 E_2$.

\begin{claim} \label{CG} 

For any $xy \in \{\alpha\beta, \alpha\gamma, \beta\gamma\}$ and three $k$-cliques $t_\alpha,t_\beta,t_\gamma \in \mathcal{C}_k$ we have that the weighted RNA folding score 
$ WRNA( CG_\alpha(t_\alpha) \circ CG_\beta(t_\beta) \circ CG_\gamma(t_\gamma))$  is $E_3$ if $t_1 \cup t_2 \cup t_3$ is a $3k$-clique and at most $E_3-1$ otherwise.
\end{claim}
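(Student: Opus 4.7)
The plan is to use the extremely heavy weight $w(\mathtt{g})=\ell_2$ to force a canonical skeleton of six arcs among the twelve $\mathtt{g}$-symbols in $CG_\alpha(t_\alpha) \circ CG_\beta(t_\beta) \circ CG_\gamma(t_\gamma)$, after which Claim~\ref{twoCG} applied to three independent residual subproblems (one per type $\alpha\gamma,\alpha\beta,\beta\gamma$) delivers the result.

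First I would show that in any optimal folding all six $\mathtt{g}$-$\mathtt{g}'$ pairs are matched. The total weight of non-$\mathtt{g}$ content is at most a small constant fraction of $\ell_2$ per block, so half the sum of non-$\mathtt{g}$ weights (an upper bound on what any non-$\mathtt{g}$ matching can contribute) is much less than $\ell_2$; hence dropping any single $\mathtt{g}$-$\mathtt{g}'$ pair strictly decreases the WRNA. The type subscripts then dictate who matches whom: the two $\mathtt{g}_{\alpha\gamma}$'s in the first half of $CG_\alpha(t_\alpha)$ can only pair with the two $\mathtt{g}'_{\alpha\gamma}$'s in the second half of $CG_\gamma(t_\gamma)$, the two $\mathtt{g}'_{\alpha\beta}$'s in $CG_\alpha$'s second half with the two $\mathtt{g}_{\alpha\beta}$'s in $CG_\beta$'s first half, and similarly for $\beta\gamma$. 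The only non-crossing arrangement nests the two $\alpha\gamma$-arcs outermost (spanning the whole string), with the two $\alpha\beta$-arcs and two $\beta\gamma$-arcs appearing as disjoint nested sibling pairs inside. The skeleton contributes exactly $6\ell_2$ to the WRNA.

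Once the skeleton is fixed, the remaining non-$\mathtt{g}$ content partitions into three independent free regions, one per type, since the three type-disjoint alphabets preclude any cross-region matchings. The $\alpha\gamma$-region lies inside the outer arc but outside the inner one and equals $[CNG(t_\alpha) \circ p(CLG(t_\gamma))^R]_{\alpha\gamma}$, the exact form of Claim~\ref{twoCG}, contributing $E_2$ iff $t_\alpha \cup t_\gamma$ is a $2k$-clique and at most $E_2-1$ otherwise. The $\alpha\beta$- and $\beta\gamma$-regions have the ``flipped'' form $[p(CLG(\cdot))^R \circ CNG(\cdot)]_{xy}$. For these I would observe that within each region the left half lies entirely in $\Sigma'_{xy}$ and the right half entirely in $\Sigma_{xy}$, so only cross-block matchings exist; the resulting non-crossing bipartite matching problem is an LCS-type problem between the $CNG$ and the $CLG$ blocks that is symmetric in the order of its two arguments. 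Hence Claim~\ref{twoCG} applies to the flipped regions verbatim (with the roles of $t_1,t_2$ swapped), giving WRNA $= E_2$ iff $t_\alpha \cup t_\beta$, respectively $t_\beta \cup t_\gamma$, is a $2k$-clique, and at most $E_2-1$ otherwise.

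Summing skeleton and regions, the total WRNA equals $6\ell_2 + 3E_2 = E_3$ exactly when all three pairs $\{t_\alpha,t_\gamma\}$, $\{t_\alpha,t_\beta\}$, $\{t_\beta,t_\gamma\}$ form $2k$-cliques, which is equivalent to $t_\alpha \cup t_\beta \cup t_\gamma$ being a $3k$-clique; otherwise some region drops by at least one and the total is at most $E_3-1$. The main obstacle is step one: carefully justifying that no ``cheating'' folding which leaves a $\mathtt{g}$ unpaired can recover the lost $\ell_2$ by opening up extra non-$\mathtt{g}$ matchings, which requires the weight gap built into $\ell_1 \ll \ell_2$. A secondary subtlety is the clean reduction of the two flipped regions to the canonical form of Claim~\ref{twoCG} via the reverse-plus-complement symmetry of WRNA, which is what lets the same statement cover both orderings of $CNG$ and $p(CLG)^R$.
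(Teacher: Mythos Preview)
Your proposal is correct and follows essentially the same approach as the paper: use the weight of the $\mathtt{g}$-symbols to force all six $\mathtt{g}/\mathtt{g}'$ pairs to be matched, observe that this decomposes the problem into three independent regions (one per type), and invoke Claim~\ref{twoCG} on each. You are in fact more careful than the paper on one point: the paper writes the decomposition as $6\ell_2$ plus three terms all literally of the form $WRNA([CNG(\cdot)\circ p(CLG(\cdot))^R]_{xy})$ and appeals directly to Claim~\ref{twoCG}, whereas two of the actual regions appear in the ``flipped'' order $[p(CLG(\cdot))^R\circ CNG(\cdot)]_{xy}$; your explicit symmetry observation (only cross-block pairs exist, and the non-crossing bipartite matching is invariant under swapping the two blocks) is exactly what bridges this gap.
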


\begin{proof}
	If for some $xy \in \{\alpha\beta, \alpha\gamma, \beta\gamma\}$ there is a symbol $\mathtt{g}_{xy}$ which is not paired up with its counterpart, we lose
	a contribution to the $WRNA$ score that is more than we could get by pairing up all symbols that are not $g_{xy}$. Therefore, we have the equality
	\begin{align*}
		WRNA( CG_\alpha(t_\alpha) \circ CG_\beta(t_\beta) \circ CG_\gamma(t_\gamma)) = & \ 3 \cdot 2 \ell_2 \\
		& +WRNA( [ CNG(t_\alpha) \circ p(CLG(t_\gamma))^R ]_{\alpha\gamma}  ) \\
		& +WRNA( [ CNG(t_\beta) \circ p(CLG(t_\gamma))^R ]_{\beta\gamma}  ) \\
		& +WRNA( [  p(CLG(t_\alpha))^R \circ CNG(t_\beta) ]_{\alpha\beta}  )).
	\end{align*}
	By Claim~\ref{twoCG}, the last three summands are equal to $E_2$ if all our three $k$-cliques are pairwise $2k$-cliques and otherwise at least one of the summands is less than $E_2$.
	The claim follows by noticing that $t_\alpha \cup t_\beta \cup t_\gamma$ is a $3k$-clique iff the three $k$-cliques are pairwise $2k$-cliques. 
\end{proof}

We are now ready to prove our main claim about $S_G$.
This proof shows that our ``selection" gadgets achieve the desired property of having exactly one clique from each type fold in an optimal matching.
Let $N = O(n^{k})$ be the size of $\mathcal{C}_k$ which is the number of $k$-cliques in our graph and therefore the number of clique gadgets we will have from each type.
We will set $E_C=6N + E_3$.

\begin{claim} \label{cl:main} 
The weighted RNA score of $S_G$ is $E_C$ if $G$ contains a $3k$-clique and at most $E_C-1$ otherwise.
\end{claim}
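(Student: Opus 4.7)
The plan is to reduce the analysis of an arbitrary optimal folding of $S_G$ to the analysis of a folding of a single three-gadget concatenation $CG_\alpha(t_\alpha^\star)\circ CG_\beta(t_\beta^\star)\circ CG_\gamma(t_\gamma^\star)$ for some triple of $k$-cliques, and then invoke Claim~\ref{CG}. The proof rests on the fact that the padding letters $\alpha,\beta,\gamma$ have weight $\ell_3$, which exceeds the total weight of every other letter in $S_G$. Consequently, in any optimal folding, every $\alpha'$, $\beta'$, $\gamma'$ symbol must be paired: leaving even one unpaired would forfeit $\ell_3$, more than the entire remaining folding could recover. This fixes a ``baseline'' contribution of $6N\cdot\ell_3$ (the term written as $6N$ in the claim) and, since the only legal partners of an $\alpha'$ are $\alpha$s in block~1 (and similarly for $\beta',\gamma'$), confines every $(\alpha,\alpha')$-arc to block~1, and symmetrically for the other two types.

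The heart of the argument, and the step I expect to be the main obstacle, is showing that in each block at most one clique gadget can ``escape'' being trapped inside an $(\alpha,\alpha')$-arc. Focus on block~1: the $\alpha'$ symbols appear as consecutive pairs $(p_{2i-1},p_{2i})$ flanking $CG_\alpha(t_i)$, and between any two of them the sequence contains only clique-gadget letters, never a letter $\alpha$. Classify each $\alpha'$ as type L or R according to whether its partner sits to its left or to its right. A short crossing check shows that an R cannot appear before an L: the only $\alpha$s available to an L lie in the left padding, and the only $\alpha$s available to an R lie in the right padding, so any R\dots L pattern would force the two arcs to interleave. Hence along block~1 the types form a pattern L\dots L\,R\dots R, and at most one index $i^\star$ has $p_{2i^\star-1}$ of type L and $p_{2i^\star}$ of type R. Set $t_\alpha^\star:=t_{i^\star}$; every other $CG_\alpha(t_i)$ is strictly enclosed by a single $(\alpha,\alpha')$-arc entirely inside block~1. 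The same analysis produces $t_\beta^\star$ and $t_\gamma^\star$.

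It remains to show that enclosed gadgets contribute nothing, and that the three escapees together behave like the lone concatenation above. An enclosed $CG_\alpha(t)$ admits no internal matching, since by construction its $[CNG(t)]_{\alpha\gamma}$ half uses only unprimed letters of alphabet $\alpha\gamma$ and its $[p(CLG(t))^R]_{\alpha\beta}$ half uses only primed letters of alphabet $\alpha\beta$; unprimed letters do not match unprimed letters, primed letters do not match primed, and different sub-alphabets do not match at all. An enclosed gadget also cannot reach out, since the surrounding $\alpha$-arc lies entirely within block~1 while all of its potential partners are in blocks $\beta$ and $\gamma$. Hence enclosed gadgets contribute $0$, and the only contribution beyond the baseline comes from folding the three escapees across the bubbles (which are entirely transparent, since they contain only $\alpha$-, $\beta$-, $\gamma$-arcs and inert gadgets). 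The effective folding to evaluate is therefore exactly $WRNA\bigl(CG_\alpha(t_\alpha^\star)\circ CG_\beta(t_\beta^\star)\circ CG_\gamma(t_\gamma^\star)\bigr)$, which by Claim~\ref{CG} equals $E_3$ iff $t_\alpha^\star\cup t_\beta^\star\cup t_\gamma^\star$ is a $3k$-clique and at most $E_3-1$ otherwise. Maximizing over the choice of escapees, the optimal weighted score of $S_G$ equals the baseline plus $E_3$ iff some triple in $\mathcal{C}_k$ forms a $3k$-clique, and is at most $E_C-1$ otherwise, as required.
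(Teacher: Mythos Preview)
Your proof is correct and follows essentially the same approach as the paper: establish that in an optimal folding every $\alpha',\beta',\gamma'$ is matched, deduce that at most one clique gadget of each type is ``free'' to interact across blocks, and then invoke Claim~\ref{CG}. The paper reaches the same conclusion via a short counting inequality (bounding the contribution of participating gadgets by $(t_x/2+1)\cdot\ell_3/10$ against the loss of $t_x\ell_3$), whereas you spell out the structural reason explicitly through the L/R classification of the $\alpha'$ arcs; your version is more detailed but not genuinely different. One phrasing to tighten: ``more than the entire remaining folding could recover'' overstates things---the total non-padding weight in $S_G$ is far larger than $\ell_3$; what you actually need (and what your L/R analysis in fact shows) is that leaving an $\alpha'$ unpaired cannot increase the number of free gadgets beyond what is already attainable with all $\alpha'$ paired, so the marginal gain is zero while the loss is $\ell_3$.
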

\begin{proof}
	Let $x \in \{\alpha,\beta,\gamma\}$ and define $t_x \geq 0$ denote the number of  $x'$ symbols in $S_G$ that are not paired. 
	Because any clique gadget $CG_x$ can only have matches with letters from clique gadgets $CG_y$ for some $y \in \{\alpha,\beta,\gamma\}$ such that $y \neq x$, we can say that at most $t_x/2+1$ clique gadget sequences $CG_x$ can have letters that participate in the folding.

	Recall that by definition of our weights, the total weight of any clique gadget is much less than $\ell_3/10$ where $\ell_3$ is the weight of a letter $\alpha,\beta,\gamma$ and recall the definition of $N=| \mathcal{C}_k|$.
	We will use the inequalities:
	$$
		WRNA(S_G)\leq ((t_\alpha/2+1)+(t_\beta/2+1)+(t_\beta/2+1))\cdot \ell_3/10 + ((2 N -t_a)+(2N-t_b)+(2N-t_c))\ell_3,
	$$
	and:
		$$
		WRNA(S_G)\geq ((2 N -t_a)+(2N-t_b)+(2N-t_c))\ell_3.
	$$
	Since $\ell_3 \gg \ell_3/10$, we must have that $t_\alpha=t_\beta=t_\gamma=0$ in any optimal folding of $S_G$.
	Now we get that:
	$$
		WRNA(S_G)=6N \ell_3 + WRNA(CG_\alpha(t_\alpha) \circ CG_\beta(t_\beta) \circ CG_\gamma(t_\gamma))
	$$
	for some $k$-cliques $t_\alpha, t_\beta, t_\gamma \in \mathcal{C}_{k}$.
	By Claim~\ref{CG}, the last summand can be equal to $E_3$ 
	iff the graph has a $3k$-clique, and must be at most $E_3-1$ otherwise.
	This, and the fact that $E_C = 6N \ell_3 + E_3$ completes the proof.
\end{proof}

We are now ready to show that the construction of $S_G$ from  graph $G$ proves Theorem~\ref{thm:RNA}.

\begin{reminder}{Theorem~\ref{thm:RNA}}
If RNA Folding on a sequence of length $n$ can be solved in $T(n)$ time, then $k$-Clique on $n$ node graphs can be solved in $O\left(T \left(n^{k/3+O(1)} \right) \right)$ time, for any $k\geq 3$. 
Moreover, the reduction is combinatorial.
\end{reminder}

\begin{proof}
Given a graph $G$ on $n$ nodes we construct the sequence $S_G$ as described above.
The sequence can be constructed in $O(k^{O(1)}\cdot n^{k+1})$ time, by enumerating all subsets of $k$ nodes and that it has length $O(k^{O(1)}\cdot n^{k+1})$.
The largest weight we use in our construction is $\ell_3 = O(k^{O(1)} n\log{n})$ and therefore using Lemma~\ref{WRNA} we can reduce the computation of the weighted RNA of $S_G$ to an instance of (unweighted) RNA folding on a sequence of length $O(|S_G| k^{O(1)} n \log{n}) = \tilde{O}(k^{O(1)} n^{k+2})$.
Thus, an RNA folder as in the statement returns the weighted RNA folding score of $S_G$ in $T(n^{k/3+2})$ time (treating $k$ as a constant) and by Claim~\ref{cl:main} this score determines whether $G$ contains a $3k$-clique.
All the steps in our reduction are combinatorial.
\end{proof}

\section{Clique to Dyck Edit Distance}
\label{sec:dyck}
In this section we prove Theorem~\ref{thm:dyck} by reducing $k$-Clique to the Dyck Edit Distance problem, defined below.

The Dyck grammar is defined over a fixed size alphabet of opening brackets $\Sigma$ and of closing brackets $\Sigma' = \{ \sigma' \mid \sigma \in \Sigma \}$, such that $\sigma$ can only be closed by $\sigma'$.
A string $S$ belongs to the Dyck grammar if the brackets in it are well-formed.
More formally, the Dyck grammar is defined by the rules $\mathbf{S} \to \mathbf{S}\mathbf{S}$ and $\mathbf{S} \to \sigma \ \mathbf{S} \ \sigma'$ for all $\sigma \in \Sigma$ and $\mathbf{S} \to \eps$.
This grammar defines the Dyck context free language (which can be parsed in linear time).

The Dyck Edit Distance problem is: given a string $S$ over $\Sigma \cup \Sigma'$ find the minimum edit distance from $S$ to a string in the Dyck CFL.
In other words, find the shortest sequence of substitutions and deletions that is needed to convert $S$ into a string that belongs to Dyck.
We will refer to this distance as the Dyck \emph{score} or \emph{cost} of $S$.

Let us introduce alternative ways to look at the Dyck Edit Distance problem that will be useful for our proofs.
Two pairs of indices $(i_1,j_1),(i_2,j_2)$ such that $i_1<j_1$ and $i_2<j_2$ are said to ``cross" iff at least one of the following three conditions hold
\begin{itemize}
	\item $i_1=i_2$ or $i_1=j_2$, or $j_1=i_2$, or $j_1=j_2$;
	\item $i_1<i_2<j_1<j_2$;
	\item $i_2<i_1<j_2<j_1$.
\end{itemize}
Note that by our definition, non-crossing pairs cannot share any indices. 
We define an \emph{alignment} $A$ of a sequence $S$ of length $n$ to be a set of non-crossing pairs $(i,j), i<j, i,j \in [n]$.
If $(i,j)$ is in our alignment we say that letter $i$ and letter $j$ are \emph{aligned}.
We say that an aligned pair is a \emph{match} if $S[i] = \sigma$ for some $\sigma \in \Sigma$ and $S[j]=\sigma'$, i.e. an opening bracket and the corresponding closing bracket.
Otherwise, we say that the aligned pair is a \emph{mismatch}. Mismatches will correspond to substitutions in an edit distance transcript.
A letter at an index $i$ that does not appear in any of the pairs in the alignment is said to be \emph{deleted}.
We define the cost of an alignment to be the number of mismatches plus the number of deleted letters.
One can verify that any alignment of cost $E$ corresponds to an edit distance transcript from $S$ to a string in Dyck of cost $E$, and vice versa.

\subsection{The Reduction}

Given a graph $G=(V,E)$ on $n$ nodes and $O(n^2)$ unweighted undirected edges, we will describe how to efficiently construct a sequence $S_G$ over an alphabet $\Sigma$ of constant size, such that the Dyck score of $S_G$ will depend on whether $G$ contains a $3k$-clique.
The length of $S_G$ will be $O(k^d n^{k+c})$ for some small fixed constants $c,d>0$ independent of $n$ and $k$, and the time to construct it from $G$ will be linear in its length.
This will prove that a fast (e.g. subcubic) algorithm for Dyck Edit Distance can be used as a fast $3k$-clique detector (one that runs much faster than in $O(n^{3k})$ time).

As in the other sections, our main strategy will be to enumerate all $k$-cliques in the graph and then search for a triple of $k$-cliques that have all the edges between them.
We will be able to find such a triple iff the graph contains a $3k$-clique.
A Dyck Edit Distance algorithm will be utilized to speed up the search for such a ``good" triple.
Our reduction will encode every $k$-clique of $G$ using a ``short" sequence of length $O(n^c)$ such that the Dyck score of a sequence composed of the encodings of a triple of sequences will be large iff the triple is ``good".
Then, we will show how to combine the short encodings into our long sequence $S_G$ such that the existence of a ``good" triple affects the overall score of an optimal alignment.

\paragraph{The Sequence}
Our sequence $S_G$ will be composed of many smaller gadgets which will be combined in certain ways by other padding gadgets.
We construct these gadgets now and explain their useful properties.
The proofs of these properties are postponed until after we present the whole construction of $S_G$.

Recall that we associate every node in $V(G)$ with an integer in $[n]$ and let $\bar{v}$ denote the encoding of $v$ in binary and we will assume that it has length exactly $2\log{n}$ for all nodes.
We will use the fact that there is no node with encoding $\bar{0}$.
For a sequence $s \in \Sigma^*$ let $p(s) \in (\Sigma')^*$ be the sequence obtained from $s$ by replacing every letter $\sigma \in \Sigma$ with the closing bracket $\sigma' \in \Sigma'$.
That is, if $s=s_1\cdots s_n$ then $p(s)=s_1'\cdots s_n'$.

 Our alphabet $\Sigma$ will contain the letters $\mathtt{0,1}$ and some additional symbols which we will add as needed in our gadgets like $\$,\#$.
 We will use the numbers $\ell_0,\ldots,\ell_5$ such that $\ell_i = (1000\cdot n^{2})^{i+1}$, which can be bounded by $n^{O(1)}$.
  We define \emph{node} gadgets as, 
 \[ 
 NG(v) = \$^{\ell_1} \  \bar{v} \  \$^{\ell_1}
 \]
 and \emph{list} or \emph{neighborhood} gadgets as,
 \[
 LG(v) =   \bigcirc_{u \in N(v)} (\$^{\ell_0} \ \bar{u} \ \$^{\ell_0}) \ \circ \ \bigcirc_{u \notin N(v)} (\$^{\ell_0} \ \bar{0}\ \$^{\ell_0}).
 \]
 
These gadget are constructed so that for any two nodes $u,v \in V(G)$, the Dyck score of the sequence $NG(v) \circ p\left(LG(u) \right)^R$ is small (equal to some fixed value $E_1$) if $v$ is in the neighborhood of $u$, that is $(u,v) \in E(G)$, and larger otherwise (at least $E_1+1$).
This is proved formally in Claim~\ref{claim:neighb}, by similar arguments as in Section~\ref{sec:RNA}.

Note that $\ell_2$ is an upper bound on the total length of all the symbols in the gadgets $NG(v)$ and $LG(v)$, for any node $v \in V(G)$.
Let $\mathcal{C}_k$ be the set of $k$-cliques in $G$ and consider some $t = \{ v_1,\ldots,v_k\} \in \mathcal{C}_k$.
We will now combine the node and list gadgets into larger gadgets that will be encoding $k$-cliques.
%
We will encode a clique in two ways. The first one is,
\[
CNG(t) =  \bigcirc_{v \in t} \  (\#^{\ell_2}  \ NG(v) \ \#^{\ell_2})^k 
\]
and the second one is,
\[
CLG(t) =  (\bigcirc_{v \in t} \  (\#^{\ell_2} \ LG(v) \  \#^{\ell_2} ))^k .
\]

Note that $\ell_3 $ is an upper bound on the total length of all the symbols in the $CNG(t)$ and $CLG(t)$ gadgets.
We will add the symbol $\mathtt{g}$ to the alphabet.
Moreover, we will now duplicate our alphabet three times  to force only ``meaningful" alignments between our gadgets.
It will be convenient to think of $\alpha,\beta,\gamma$ as three \emph{types} such that we will be looking for three $k$-cliques, one from type $\alpha$ one from $\beta$ and one from $\gamma$.
For any  pair of types $xy \in \{ \alpha\beta, \alpha\gamma, \beta\gamma\}$ we will construct a new alphabet $\Sigma_{xy} = \{ \sigma_{xy} \mid \sigma \in \Sigma \}$ in which we mark each letter with the pair of types it should be participating in.
For a sequence $s \in (\Sigma \cup \Sigma')^*$ we use the notation $[s]_{xy}$ to represent the sequence in $(\Sigma_{xy} \cup \Sigma'_{xy})^*$ in which we replace every letter $\sigma$ with the letter $\sigma_{xy}$.

We will need three types of these clique gadgets in order to force the desired interaction between them.
\begin{align*}
CG_\alpha(t) \ \  =  \ \  \mathtt{a^{\ell_4} \ (x_\alpha')^{\ell_5}} \ \  &[ \mathtt{g}^{\ell_3} \ CNG(t) \  \mathtt{g}^{\ell_3} ]_{\alpha\gamma} \  & &  \circ \  & & [ \mathtt{g}^{\ell_3} \ p(CNG(t))^R \   \mathtt{g}^{\ell_3} ]_{\alpha\beta} \ \ \mathtt{y_\alpha^{\ell_5} \ (a')^{\ell_4}}\\
CG_\beta(t)  \ \ =  \ \ \mathtt{b^{\ell_4} \ (x_\beta')^{\ell_5}} \ \  &[ (\mathtt{g}')^{\ell_3} \ CLG(t) \  (\mathtt{g}')^{\ell_3} ]_{\alpha\beta} \  & & \circ \  & & [ \mathtt{g}^{\ell_3} \ p(CNG(t))^R \   \mathtt{g}^{\ell_3} ]_{\beta\gamma} \ \ \mathtt{y_\beta^{\ell_5} \ (b')^{\ell_4}}\\
CG_\gamma(t)  \ \ = \ \  \mathtt{c^{\ell_4} \ (x_\gamma')^{\ell_5}}  \ \ &[ (\mathtt{g}')^{\ell_3} \ CLG(t) \  (\mathtt{g}')^{\ell_3} \ ]_{\beta\gamma} \ & &  \circ   \  & & [ (\mathtt{g}')^{\ell_3} \ p(CLG(t))^R \   (\mathtt{g}')^{\ell_3} ]_{\alpha\gamma} \ \ \mathtt{y_\gamma^{\ell_5} \ (c')^{\ell_4}}
\end{align*}

These clique gadgets achieve exactly what we want: for any three $k$-cliques $t_\alpha,t_\beta,t_\gamma \in \mathcal{C}_k$ the Dyck score of the sequence $CG_\alpha(t_\alpha)\circ CG_\beta(t_\beta)\circ CG_\gamma(t_\gamma)$ is small (equal to some value $E_3$) if $t_\alpha \cup t_\beta \cup t_\gamma$ is a $3k$-clique and larger otherwise (at least $E_3+1$).
This is formally proved in Claim~\ref{claim:clique}, again by similar arguments as in Section~\ref{sec:RNA} (but more complicated because of the possible mismatches).

The main difference over the proof of Section~\ref{sec:RNA} is the way we implement the ``selection" gadgets.
We want to combine all the clique gadgets into one sequence in a way that the existence of a ``good" triple, one that makes a $3k$-clique, affects the Dyck score of the entire sequence.
The ideas we used in the RNA proof do not immediately work here because of ``beneficial mismatches" of the separators we add with themselves and because in Dyck $(\sigma,\sigma')$ match but $(\sigma',\sigma)$ do not (while in RNA we do not care about the order). 
We will use some new ideas.

%
Our ``clique detecting" sequence is defined as follows.

\begin{align*}
S_G \ =  & \ \ \ \  \mathtt{x_\alpha}^{\ell_5}  \  \left(\bigcirc_{t \in \mathcal{C}_k }  CG_\alpha(t)  \right) \ \mathtt{y_\alpha'}^{\ell_5}  \\
& \circ \  \mathtt{x_\beta}^{\ell_5}  \  \left(\bigcirc_{t \in \mathcal{C}_k }   CG_\beta(t)  \right) \ \mathtt{y_\beta'}^{\ell_5}  \\
& \circ \  \mathtt{x_\gamma}^{\ell_5}  \  \left(\bigcirc_{t \in \mathcal{C}_k }    CG_\gamma(t)  \right) \ \mathtt{y_\gamma'}^{\ell_5} 
\end{align*} 

As the $\mathtt{x_\alpha,y'_\alpha}$ symbols are very rare and	 ``expensive" an optimal alignment will match them to some of their counterparts within the $\alpha$ part of the sequence.
However, when the $\mathtt{x'_\alpha,y_\alpha}$ letters are matched, we cannot match the adjacent $\mathtt{a,a'}$ symbols - which are also quite expensive.
Therefore, the optimal behavior is to match the $\mathtt{x_\alpha,y'_\alpha}$ from \emph{exactly one} ``interval" from the $\alpha$ part.
A similar argument holds for the $\beta,\gamma$ parts.
This behavior leaves exactly one clique gadget from each type to be aligned freely with each other as a triple.
By the construction of these gadgets, an optimal score can be achieved iff there is a $3k$-clique.

%
This proves our main claim that the Dyck score of our clique detecting sequence $S_G$ is small (equal to some fixed value $E_C$) if the graph contains a $3k$-clique and larger (at least $E_C+1$) otherwise.
See Claim~\ref{claim:main} for the formal proof.

When $k$ is fixed, $S_G$ can be constructed from $G$ in $O(n^{k+O(1)})$ time, by enumerating all subsets of $k$ nodes and that it has length $O(n^{k+O(1)})$.
This proves Theorem~\ref{thm:dyck}.

Our final alphabet $\Sigma$ has size $24$ (together with $\Sigma'$ this makes $48$ symbols).
\[ \Sigma = \  \{ \mathtt{a,b,c} \}  \ \cup \ \bigcup_{xy \in \{\alpha\beta,\alpha\gamma,\beta\gamma\}} \{ \mathtt{0, 1} ,\$, \#, \mathtt{g,x,y} \}_{xy} 
\]

\paragraph{Formal Proofs}

Let $E_1=\log{n}\cdot(n-1)+(2\ell_1-n\cdot 2\ell_0)/2$.

\begin{claim} \label{claim:neighb}
	For any $xy \in \{\alpha\beta,\alpha\gamma,\beta\gamma\}$, if $v \in N(u)$, then
	$$
		Dyck([NG(v)\circ p(LG(u))]_{xy})=E_1
	$$
	and $>E_1$ otherwise.
\end{claim}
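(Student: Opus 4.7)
The plan is to recast Dyck edit distance as an optimization. Every alignment of $S := [NG(v) \circ p(LG(u))]_{xy}$ consists of matched pairs (cost $0$), mismatched pairs (cost $1$) and deletions (cost $1$); if we manage to pair up every letter of $S$ without crossings, then the cost is just $|S|/2 - M$, where $M$ is the number of matches. Since $|S|$ is even and by $\ell_1 \gg n\ell_0$ such a total pair-up is always achievable (e.g.\ pair the $2(\ell_1 - n\ell_0 - (n-1)\log n)$ excess opening $\$$'s of $NG(v)$ as adjacent mismatch pairs at its leftmost end and cross-pair everything else), it suffices to determine the maximum $M$ of any non-crossing matching between opening letters of $NG(v)$ and closing letters of $p(LG(u))$.

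I will bound $M$ by separating the two kinds of letters. The $\$$-matches can contribute at most $n\cdot 2\ell_0$ (the number of closing $\$'$'s available), and the binary matches can contribute at most $2\log n$ (the length of $\bar v$). Both upper bounds can be simultaneously saturated iff $\bar v$ admits a full non-crossing match against a single binary block of $p(LG(u))$: the flanking $\$^{\ell_1}$ runs of $NG(v)$ then have ample room to absorb all $\$'$'s of $p(LG(u))$ around that binary match. On the other hand, spreading the matches of $\bar v$ across two distinct slots of $LG(u)$ would leave the entire separator between those slots, containing at least $2\ell_0$ closing $\$'$'s, with no valid partner: by non-crossing, the $\$$'s to the left of $\bar v$ in $NG(v)$ must cover $\$'$'s lying strictly later than the $\bar v$-match and the $\$$'s to its right must cover $\$'$'s lying strictly earlier, so the separator's $\$'$'s fall in between and are unreachable. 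This trades at least $2\ell_0$ $\$$-matches for at most $2\log n$ binary matches, which is strictly worse since $\ell_0 \gg \log n$ in the construction.

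Consequently the optimum $M$ equals $n\cdot 2\ell_0 + 2\log n$ precisely when some slot of $LG(u)$ supplies the binary block needed to match $\bar v$ in the relevant non-crossing orientation -- which by construction of $LG(u)$ is the condition $v \in N(u)$ -- and is at most $n\cdot 2\ell_0 + 2\log n - 1$ otherwise. Plugging into $|S|/2 - M$ yields cost $= \ell_1 + n\ell_0 + (n+1)\log n - 2n\ell_0 - 2\log n = \ell_1 - n\ell_0 + (n-1)\log n = E_1$ when $v \in N(u)$, and cost $\ge E_1 + 1$ otherwise.

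The main obstacle will be the ``single-slot'' step: rigorously exploiting the scale separation $\ell_1 \gg \ell_0 \gg \log n$ to rule out every alignment that tries to buy extra binary matches by sacrificing $\$$-matches, and verifying in each case that the leftover unmatched letters can still be paired as non-crossing mismatches using the freedom of placing the $\$$-matches within the long $\$^{\ell_1}$ runs of $NG(v)$. Everything else is a routine counting exercise.
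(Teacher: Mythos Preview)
Your approach is correct and essentially the same as the paper's: both arguments exploit the scale separation $\ell_1 \gg n\ell_0 \gg n\log n$ to force $\bar v$ into a single slot of $LG(u)$ (sacrificing a separator's $\$'$-block is costlier than any possible gain in binary matches), after which equality holds iff that slot's encoding coincides with $v$, i.e.\ $v\in N(u)$. The only cosmetic difference is that the paper counts costs of the stranded $\$$'s, $\$'$'s, and $(n-1)$ leftover binary blocks directly, while you recast the same bookkeeping as maximizing the number $M$ of true matches in $|S|/2 - M$.
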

\begin{proof}
We will omit the subscripts $xy$ since they do not matter for the proof.
	We want to claim that the binary sequence $\bar{v}$ is aligned to at most one binary sequence $p(\bar{z}^R)$. If this is not so, then there are $2\ell_0$ symbols $\$'$ from $p(LG(u))$ that will be mismatched or deleted thus contributing at least $S_1=\ell_0$ to the $Dyck$ score. There will be at least $2\ell_1-(n-1)2\ell_0$ symbols $\$$ that are not matched to their counterparts $\$'$. Those will contribute at least $S_2=\ell_1-(n-1)\ell_0$ to the $Dyck$ score. We get that the $Dyck$ score is at least $S_1+S_2>E_1$. Now we assume that the binary sequence $\bar{v}$ is aligned to exactly one other binary sequence which we denote by $p(\bar{z_1}^R)$. There are $S_2=2\log(n)\cdot (n-1)$ symbols from binary sequences from $p(LG (u))$ that are not matched to their counterparts. Also, there are $S_3=2\ell_1-n\cdot 2\ell_0$ symbols $\$$ that are not matched to their counterparts. The contribution of the unmatched symbols is $\geq (S_2+S_3)/2=E_1$ to the $Dyck$ cost. We want to show that the equality can be achieved iff $v \in N(u)$. From the proof it is clear that, if $v \in N(u)$, then we can achieve equality by choosing $z_1$ to be the element from $N(u)$ that is equal to $v$. Also, if we achieve equality, the symbols corresponding to $S_2$ and $S_3$ contribute $\geq (S_2+S_3)/2$ to the $Dyck$ score. These symbols contribute $(S_2+S_3)/2$ to the $Dyck$ cost iff the mismatches happen only between  themselves and all symbols $\$'$ are matched to their counterparts. The only remaining symbols that could potentially contribute to the score correspond to $v$ and $z_1$. They contribute $0$ to the $Dyck$ score iff $\bar{v}=\bar{z_1}^R$, that is, $v \in N(u)$.
\end{proof}

For the next proofs we will use the following definition.
\begin{definition}
	Given two sequences $P$ and $T$, we define
	$$
		pattern(P,T):=\min_{\substack{Q\text{ is a contiguous}\\ \text{subsequence of }T}}Dyck(P\circ Q).
	$$
\end{definition}

Let $E_2=k^2\cdot E_1$.
\begin{claim} \label{claim:biclique}
	For any $xy \in \{\alpha\beta,\alpha\gamma,\beta\gamma\}$ and two $k$-cliques $t_1,t_2 \in \mathcal{C}_k$, if $t_1 \cup t_2$ is a $2k$ clique then
	$$
		Dyck([CNG(t_1)\circ CLG(t_2)]_{xy})= E_2
	$$
	and $> E_2$ otherwise.
\end{claim}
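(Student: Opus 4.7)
I would prove this by mimicking the strategy of Claim~\ref{claim:neighb} one level up in the length-scale hierarchy, now using the $\#$ blocks (of length $\ell_2$) as the ``expensive markers'' that force a canonical block-level pairing. The gist is that these separators partition the Dyck alignment problem into $k^2$ independent neighborhood-testing subproblems, one per ordered pair $(v,u) \in t_1 \times t_2$, each of which is an instance of Claim~\ref{claim:neighb}. (Here I interpret the sequence of the claim with $CNG(t_1)$ on the opening side and the appropriately oriented closing image of $CLG(t_2)$ on the right, in direct analogy with the $NG \circ p(LG)$ shape of Claim~\ref{claim:neighb}.)

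The first step is to argue that any optimal Dyck alignment must pair, bracket by bracket, each of the $2k^2$ runs of $\#^{\ell_2}$ coming from $CNG(t_1)$ with its natural run of $(\#')^{\ell_2}$ on the closing side. Since $\ell_i = (1000\,n^2)^{i+1}$ gives $\ell_2 \geq 1000\,n^2 \cdot \ell_1$, the length of a single $\#$-run exceeds the total length of all $\$$ and binary symbols in any single $NG$ or $LG$ gadget. Hence failing to match even one $\#$-run costs at least $\Omega(\ell_2)$, which far exceeds the target value $E_2 = k^2 E_1 = O(k^2 n \ell_1)$. Non-crossingness of the alignment then forces the unique nested Dyck pairing: the $i$-th leftmost $\#$-block of $CNG(t_1)$ matches the $i$-th rightmost $\#'$-block on the other side.

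The second step is to observe that this pinning decomposes the residual alignment into $k^2$ independent sub-instances indexed by $i \in [k^2]$. By the construction $CNG(t_1) = \bigcirc_{v \in t_1}(\#^{\ell_2} NG(v) \#^{\ell_2})^k$ and $CLG(t_2) = (\bigcirc_{u \in t_2} \#^{\ell_2} LG(u) \#^{\ell_2})^k$---one enumerates each $v \in t_1$ with $k$ consecutive repetitions while the other cycles $k$ times through the whole of $t_2$---as $i$ ranges over $[k^2]$, the pair of node-gadgets occupying slot $i$ ranges exactly once over every element of $t_1 \times t_2$. Each slot is (after orientation) an instance of $NG(v) \circ p(LG(u))$, and Claim~\ref{claim:neighb} gives that it contributes at least $E_1$ to the Dyck cost, with equality iff $\{v,u\} \in E$. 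Summing over the $k^2$ slots yields total cost $\geq k^2 E_1 = E_2$, with equality iff every cross-pair is an edge---equivalently, since $t_1,t_2$ are $k$-cliques and no node is adjacent to itself, iff $t_1 \cup t_2$ is a $2k$-clique. The matching upper bound, when the clique exists, is obtained constructively by stitching the canonical separator matching together with the per-slot alignments guaranteed by Claim~\ref{claim:neighb}.

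The main technical obstacle I expect is making Step~1 precise: formally ruling out exotic alignments that sacrifice some $\#/\#'$ pairs in exchange for savings via mismatches or rematchings among the cheap interior $\$$ and binary symbols. The argument will rely on the polynomial-scale gaps baked into $\ell_i = (1000\,n^2)^{i+1}$, which ensure that the total ``savings potential'' of all interior symbols is strictly smaller than the cost of a single lost $\#$-block, so any deviation from the canonical nesting is strictly suboptimal.
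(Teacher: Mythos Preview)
Your proposal is correct and follows essentially the same route as the paper: use the $\#^{\ell_2}$ separators (whose length dominates $E_2$) to force the block structure, decompose into $k^2$ independent $NG(v)\circ p(LG(u))$ subproblems indexed by all pairs in $t_1\times t_2$ via the ``$k$-consecutive versus $k$-cyclic'' layout of $CNG$ and $CLG$, and then invoke Claim~\ref{claim:neighb}. The only cosmetic difference is that the paper packages Step~1 through an auxiliary $\mathit{pattern}(P,T)$ lower bound rather than asserting a full canonical $\#/\#'$ matching, but this is the same separator argument you describe.
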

\begin{proof}
We will omit the subscripts $xy$ since they do not matter for the proof.
	We have that
	$$
		Dyck(CNG (t_1)\circ CLG (t_2))
		\geq\sum_{v \in t_1} k \cdot pattern(NG (v),CLG (t_2)).
	$$
	Suppose that for some $v \in t_1$, $NG (v)$ is aligned to more than one gadget $p(LG (u))$.
	Then symbols $\#$ between these gadgets $p(LG (u))$ will be substituted or deleted. The cost of these
	operations is $\geq \ell_2> E_2$. Therefore, we have that any one of $k^2$ gadgets $NG (v)$ is aligned to at most one gadget
	$p(LG (u))$ for some $u \in t_2$. By the construction of $CNG$ and $CLG$, we have that
	\begin{align*}
		&Dyck(CNG (t_1)\circ CLG (t_2))\\
		\geq &\sum_{v \in t_1} \sum_{u \in t_2} pattern(NG (v),(\#')^{2\ell_2} p(LG (u)) (\#')^{2\ell_2})\\
		=&\sum_{v \in t_1} \sum_{u \in t_2} Dyck(NG (v)\circ p(LG (u))),
	\end{align*}
	where the last equality follows because $\#$ does not appear among symbols of $NG (v)$.
	Now we have that
	$$
		Dyck(CNG (t_1)\circ CLG (t_2)) \geq \sum_{v \in t_1} \sum_{u \in t_2} E_1=k^2\cdot E_1= E_2,
	$$
	where we use Claim \ref{claim:neighb}.
	If we have equality, it means that we have equality in all $k^2$ invocations of Claim \ref{claim:neighb}, which implies that $v \in N(u)$ for all $v \in t_1$, $u \in t_2$. And this gives that there is a biclique between vertices of $t_1$ and $t_2$. Also, it is possible to verify that we can achieve the equality if there is a biclique.
\end{proof}

Let $E_3=3({\ell_4}+ E_2)$. 

\begin{claim} \label{claim:clique}
	For any triple of $k$-cliques $t_\alpha,t_\beta,t_\gamma \in \mathcal{C}_k$, the union $t_\alpha \cup t_\beta \cup t_\gamma$ is a $3k$-clique, then
	\begin{align*}
		Dyck(
			& \ \ \ \mathtt{x_\alpha}^{{\ell_5}} CG_\alpha(t_\alpha) (\mathtt{y_\alpha}')^{{\ell_5}} \\
			&\circ \ \mathtt{x_\beta}^{{\ell_5}} CG_\beta(t_\beta) (\mathtt{y_\beta}')^{{\ell_5}} \\
			&\circ \ \mathtt{x_\gamma}^{{\ell_5}} CG_\gamma(t_\gamma) (\mathtt{y_\gamma}')^{{\ell_5}}
			\ \ \ )=E_3
	\end{align*}
	and $>E_3$ otherwise.
\end{claim}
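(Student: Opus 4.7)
My plan exploits the rapidly growing hierarchy $\ell_0 \ll \ell_1 \ll \cdots \ll \ell_5$ (each step is a factor of at least $1000 n^2$), so that a single deviation at level $\ell_{i+1}$ swamps any saving achievable at levels $\leq \ell_i$. A recurring observation I will use repeatedly is that a monochromatic block of $\ell$ same-direction brackets can do no better than pair-mismatch internally at cost $\ell/2$.

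For the upper bound, assume $t_\alpha \cup t_\beta \cup t_\gamma$ is a $3k$-clique. I construct an alignment layer by layer. At the outermost level, for each color $x \in \{\alpha,\beta,\gamma\}$, I nest-match the outer $\mathtt{x}_x^{\ell_5}$ to the interior $(\mathtt{x}_x')^{\ell_5}$ of $CG_x$, and the interior $\mathtt{y}_x^{\ell_5}$ to the trailing $(\mathtt{y}_x')^{\ell_5}$; these consume all level-$\ell_5$ brackets at zero cost but trap the two $\ell_4$-blocks $\mathtt{a}^{\ell_4}/(\mathtt{a}')^{\ell_4}$ (resp.\ $\mathtt{b}$, $\mathtt{c}$ blocks) inside the innermost matches. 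Each trapped block contributes $\ell_4/2$ by pair-mismatching (optimal since the trap contains only same-direction brackets), for a total of $\ell_4$ per color and $3\ell_4$ overall. The remaining middle blocks sit outside these wrappings and decompose across the three marked alphabets $\alpha\beta, \alpha\gamma, \beta\gamma$; after matching the $\ell_3$-level $\mathtt{g}$-wrappers nestedly, each of the three sub-instances reduces to the setting of Claim~\ref{claim:biclique} (up to the involutions $p(\cdot)^R$) and achieves cost $E_2$ because the corresponding pair $t_{x_1},t_{x_2}$ is a $2k$-clique. Summing yields $3\ell_4 + 3E_2 = E_3$.

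For the lower bound, I argue that the canonical pattern above is essentially forced. Each $\ell_5$-block $\mathtt{x}_x^{\ell_5}, (\mathtt{x}_x')^{\ell_5}, \mathtt{y}_x^{\ell_5}, (\mathtt{y}_x')^{\ell_5}$ has a unique same-type counterpart in the sequence; leaving a constant fraction of them unmatched costs $\Omega(\ell_5) \gg E_3$, so up to negligible perturbation the $\ell_5$-matches must be the nested ones described above. Non-crossing then compels the $\ell_4$-blocks to be trapped, each contributing at least $\ell_4/2$ (any attempt to liberate a trap by partially breaking an $\ell_5$-match pays $\Omega(\ell_5)$ to save at most $O(\ell_4)$). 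The same hierarchy argument promotes the $\ell_3$-sized $\mathtt{g}$-wrappers of each middle block to their canonical partners. What remains in each marked alphabet is exactly the setting of Claim~\ref{claim:biclique}, so the middle contributes at least $E_2$ per alphabet, with equality iff the corresponding pair of $k$-cliques forms a $2k$-clique; equality in all three pairs is equivalent to $t_\alpha \cup t_\beta \cup t_\gamma$ being a $3k$-clique.

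The main obstacle will be the rigidity argument: at each level $\ell_i$ I must rule out clever alignments that trade away high-level structure for lower-level freedom. The gap $\ell_{i+1}/\ell_i \geq 1000 n^2$ is calibrated precisely for this, but formalizing it requires a layered accounting in which every would-be saving at level $i$ is charged against a strictly larger penalty at level $i+1$. A secondary subtlety is applying Claim~\ref{claim:biclique} to the $\alpha\beta$ and $\beta\gamma$ pairs, where the $CNG$ and $p(CLG)^R$ fragments appear on opposite sides of the intermediate clique-gadget material; once the $\mathtt{g}$-wrappers are peeled off, I will have to verify that the residual structure is indeed the one covered by Claim~\ref{claim:biclique}, modulo the $p(\cdot)^R$ symmetries introduced by the definitions of $CG_\beta$ and $CG_\gamma$.
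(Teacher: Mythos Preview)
Your proposal is correct and follows essentially the same route as the paper's proof. Both arguments (i) absorb each block $\mathtt{x}_x^{\ell_5}\,\mathtt{a}^{\ell_4}\,(\mathtt{x}_x')^{\ell_5}$ (and its $\mathtt{y}$-counterpart) as a self-contained unit contributing exactly $\ell_4/2$, (ii) peel off the $\mathtt{g}$-wrappers using the $\ell_3 \gg 3E_2$ gap and the disjointness of the three marked alphabets, and (iii) invoke Claim~\ref{claim:biclique} three times. The paper carries out exactly the ``layered accounting'' you anticipate as the main obstacle: for each block $j$ it lets $r$ be the number of symbols of $j$ aligned outside and shows the contribution is at least $\ell_4/2 + \lceil r/2\rceil$, so $r=0$ is optimal---this is the precise version of your ``breaking an $\ell_5$-match to liberate an $\mathtt{a}$ is never profitable'' remark (and it handles the sub-constant-fraction case your $\Omega(\ell_5)$ phrasing glosses over).
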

\begin{proof}
	We need to lower bound
	\begin{align*}
		Dyck( \ \ & \mathtt{x_\alpha}^{{\ell_5}} \mathtt{a}^{\ell_4} (\mathtt{x_\alpha}')^{{\ell_5}} \mathtt{g}_{\alpha\gamma}^{\ell_3} & & CNG_{\alpha\gamma}(t_\alpha) \mathtt{g}_{\alpha\gamma}^{\ell_3} \mathtt{g}_{\alpha\beta}^{\ell_3} & & CNG_{\alpha\beta}(t_\alpha) \mathtt{g}_{\alpha\beta}^{\ell_3} \mathtt{y_\alpha}^{{\ell_5}} (\mathtt{a'})^{\ell_4} (\mathtt{y_\alpha}')^{{\ell_5}} \\
			 & \mathtt{x_\beta}^{{\ell_5}} \mathtt{b}^{\ell_4} (\mathtt{x_\beta}')^{{\ell_5}}( \mathtt{g}_{\alpha\beta}')^{\ell_3} & &  CLG_{\alpha\beta}(t_\beta) (\mathtt{g}_{\alpha\beta}')^{\ell_3} \mathtt{g}_{\beta\gamma}^{\ell_3} & & CNG_{\beta\gamma}(t_\beta) \mathtt{g}_{\beta\gamma}^{\ell_3} \mathtt{y_\beta}^{{\ell_5}} (\mathtt{b'})^{\ell_4} (\mathtt{y_\beta}')^{{\ell_5}} \\
			 & \mathtt{x_\gamma}^{{\ell_5}} \mathtt{c}^{\ell_4} (\mathtt{x_\gamma}')^{{\ell_5}}( \mathtt{g}_{\beta\gamma}')^{\ell_3} & & CLG_{\beta\gamma}(t_\gamma) (\mathtt{g}_{\beta\gamma}')^{\ell_3}( \mathtt{g}_{\alpha\gamma}')^{\ell_3} & & CLG_{\alpha\gamma}(t_\gamma) (\mathtt{g}_{\alpha\gamma}')^{\ell_3} \mathtt{y_\gamma}^{{\ell_5}} (\mathtt{c'})^{\ell_4} (\mathtt{y_\gamma}')^{{\ell_5}} \ \ ).
	\end{align*}
	Assume that some symbol $\mathtt{a}$ is aligned to a symbol to the right of $\mathtt{x_\alpha}'$. Then sequence $(\mathtt{x_\alpha}')^{{\ell_5}}$ will contribute $\geq {\ell_5}/2>E_3$ to the Dyck score and we are done. (We prove later that we can achieve Dyck score $E_3$ if there is a clique.)
	Now let $r$ denote the number of symbols from sequence $j=\mathtt{x_\alpha}^{{\ell_5}} \mathtt{a}^{\ell_4} (\mathtt{x_\alpha}')^{{\ell_5}}$ that are aligned to a symbol that does not belong to sequence $j$. Let $s$ denote the set of these $r$ symbols. Let $l$ denote symbols $\mathtt{x_\alpha}$ that are aligned to a symbol $\mathtt{x_\alpha}'$. There are $2{{\ell_5}}+{\ell_4}-r-2l$ symbols from $j$ that are not considered yet. These symbols are not matched to their counterparts and, therefore, contribute at least $\lceil(2{{\ell_5}}+{\ell_4}-r-2l)/2\rceil$ to the Dyck score (we divide by $2$ because the symbols can be mismatched among themselves in pairs). We have that $\lceil(2{{\ell_5}}+{\ell_4}-r-2l)/2\rceil+r\geq {\ell_4}/2+\lceil r/2 \rceil$ by the definition of $l$ (it implies that $l\leq {{\ell_5}}$). We note that ${\ell_4}/2=Dyck(j)$. $Dyck(j)\leq {\ell_4}/2$ can be obtained by aligning symbols $\mathtt{x_\alpha}$ with symbols $\mathtt{x_\alpha}'$ and mismatching symbols $\mathtt{a}$ in pairs. The reverse inequality follows by observing that all symbols $\mathtt{a}$ will be mismatched. Also we note that, if we mismatch symbols from $s$ among themselves, this costs $\lceil r/2 \rceil$. All this gives that we can assume that symbols in $j$ do not interact with symbols that are not in $j$ when we want to bound the Dyck score. Similarly, we can argue when $j=\mathtt{y_\alpha}^{{\ell_5}} (\mathtt{a}')^{\ell_4} (\mathtt{y_\alpha}')^{{\ell_5}}, \mathtt{x_\gamma}^{{\ell_5}} \mathtt{b}^{\ell_4} (\mathtt{x_\gamma}')^{{\ell_5}}, \mathtt{y_\beta}^T (\mathtt{b}')^{\ell_4} (\mathtt{y_\beta}')^{{\ell_5}}, \mathtt{x_\gamma}^{{\ell_5}} \mathtt{c}^{\ell_4} (\mathtt{x_\gamma}')^{{\ell_5}}, \mathtt{y_\gamma}^{{\ell_5}} (\mathtt{c}')^{\ell_4} (\mathtt{y_\gamma}')^{{\ell_5}}$. Thus, we need to show that 
	\begin{align*}
		Dyck( \ \ \ & \mathtt{g}_{\alpha\gamma}^{\ell_3}  & & CNG_{\alpha\gamma}(t_\alpha) & &  \mathtt{g}_{\alpha\gamma}^{\ell_3}  & & \mathtt{g}_{\alpha\beta}^{\ell_3}  & & CNG_{\alpha\beta}(t_\alpha) & &  \mathtt{g}_{\alpha\beta}^{\ell_3} \\
			 & (\mathtt{g}_{\alpha\beta}')^{\ell_3}  & & CLG_{\alpha\beta}(t_\beta)  & & (\mathtt{g}_{\alpha\beta}')^{\ell_3} & &  \mathtt{g}_{\beta\gamma}^{\ell_3}  & & CNG_{\beta\gamma}(t_\beta) & &  \mathtt{g}_{\beta\gamma}^{\ell_3} \\
			 & (\mathtt{g}_{\beta\gamma}')^{\ell_3}  & & CLG_{\beta\gamma}(t_\gamma) & &  (\mathtt{g}_{\beta\gamma}')^{\ell_3} & &  (\mathtt{g}_{\alpha\gamma}')^{\ell_3}  & & CLG_{\alpha\gamma}(t_\gamma) & &  (\mathtt{g}_{\alpha\gamma}')^{\ell_3} \ \ \ )\geq 3 E_2
	\end{align*}
	with equality iff there is a biclique between vertices of $t_\alpha$ and $t_\beta$,
	between vertices of $t_\alpha$ and $t_\gamma$ and between vertices of $t_\beta$ and $t_\gamma$. Let $h$ be the argument to Dyck function, that is, we want to show that $Dyck(h)\geq 3 E_2$ with the stated condition for the equality.

	Consider three gadgets $CNG$ and three gadgets $CLG$ as above. 
	We can assume that no symbol of any of these six gadgets is aligned to any symbol $\mathtt{g}_{xy}$ or $\mathtt{g}_{xy}'$. Assume that it is not the case. Then we can delete all symbols from the gadgets that are aligned to symbols $\mathtt{g}_{xy}$ or $\mathtt{g}_{xy}'$. After this, we rematch $\mathtt{g}_{xy}$ or $\mathtt{g}_{xy}'$ among themselves. We can check that we can always make this rematching of symbols $\mathtt{g}_{xy}$ or $\mathtt{g}_{xy}'$ so that the cost do not increase.
	Furthermore, if some $CNG_{xy}$ gadget is aligned with a $CNG_{x'y'}$ (or $CLG_{x'y'}$) for $(x,y)\neq (x',y')$, then there are two substrings of the type $\mathtt{g}_{ab}$ or $\mathtt{g}_{ab}'$ that don't have their counterpart between $CNG_{xy}$ and $CNG_{x'y'}$ (or $CLG_{x'y'}$). Hence, their contribution to the Dyck score is at least $2\ell_3/2={\ell_3}\gg 3 E_2$. Thus for all $(x,y)$ the only gadget that $CNG_{xy}$ can be aligned with is $CLG_{xy}$ and vice versa.
	This means that we can assume that all the $g$ and $g'$ symbols are completely aligned.
	
We have shown that the Dyck cost of the string is exactly 

	\begin{align*}
		& Dyck(CNG_{\alpha\gamma}(t_\alpha)\circ CLG_{\alpha\gamma}(t_\alpha)) \\
		+ & Dyck(CNG_{\alpha\beta}(t_\beta)\circ CLG_{\alpha\beta}(t_\beta)) \\
		+ & Dyck(CNG_{\beta\gamma}(t_\gamma)\circ CLG_{\beta\gamma}(t_\gamma)). 
	\end{align*}

We want to show that this is $\geq 3E_2$ with equality iff $t_\alpha\cup t_\beta\cup t_\gamma$ is a $3k$-clique.
This was shown in Claim \ref{claim:biclique}. 
\end{proof}

We now turn to the proof of the claim about the behavior of our ``selection gadgets".
Let $E_C$ be a fixed integer to be defined later that depends on $n$, $N$, and $k$.

\begin{claim}
\label{claim:main}
	If the $G$ contains a $3k$-clique, then $Dyck(S_G)=E_C$ and $Dyck(S_G)>E_C$ otherwise.
\end{claim}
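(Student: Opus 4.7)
The plan is to reduce the computation of $Dyck(S_G)$ to a ``selection plus triple'' structure and then invoke Claim~\ref{claim:clique}. The key intuition is that the boundary paddings $\mathtt{x_x}^{\ell_5}$ and $\mathtt{y_x'}^{\ell_5}$ are so long compared to everything else that they must be fully matched, and their matching ``pulls out'' exactly one clique gadget from each type block to be aligned freely as a triple.

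The first step is to show that in any alignment of cost less than, say, $3N\ell_5+6\ell_4+E_3+\ell_5/2$ (much less than $\ell_5$), every symbol of each boundary block $\mathtt{x_x}^{\ell_5}$ and $\mathtt{y_x'}^{\ell_5}$ must be matched. Since the only counterparts for $\mathtt{x_x}$ lie inside the $(x_x')^{\ell_5}$ sub-blocks of the gadgets $CG_x(t)$, the nested Dyck structure forces $\mathtt{x_x}^{\ell_5}$ to be matched entirely inside a single $CG_x(t_x^*)$; otherwise, spreading the matching across two or more gadgets would force the $\mathtt{a}^{\ell_4}$ or $(a')^{\ell_4}$ separators of at least one intermediate gadget to sit strictly inside the $\mathtt{x,x'}$ span with no partner of the opposite orientation available on the same side, creating a cost $\geq \ell_4$ that the alternative (choosing a single gadget) avoids. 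By the symmetric argument, $\mathtt{y_x'}^{\ell_5}$ matches entirely inside a single $CG_x(t_x^{**})$; a small extra case analysis shows that there is an optimum with $t_x^* = t_x^{**}$, since if $t_x^* \neq t_x^{**}$ there are two trapped $\mathtt{a}^{\ell_4}$ and $(a')^{\ell_4}$ blocks instead of one, and the extraneous clique gadgets between them can be rearranged without loss.

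Next, I compute the cost contribution of the \emph{non-selected} gadgets. For each $t' \neq t_x^*$, the $(x_x')^{\ell_5}$ and $\mathtt{y_x}^{\ell_5}$ blocks of $CG_x(t')$ have no matching partners left, so they must be either deleted or pairwise mismatched among themselves (both cost $\ell_5$ per pair); the cheapest is to mismatch each $(x_x')$ with a $\mathtt{y_x}$ inside the same gadget. The internal $\mathtt{a}^{\ell_4}$, $(a')^{\ell_4}$ separators of each such gadget match internally at zero cost, and the internal CNG/CLG/$g$ structure can be completely self-aligned (since $g$ and $g'$ symbols of each subscript type appear in matching numbers within a non-selected gadget pair). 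This gives a fixed total cost $F$ depending only on $n,k,N$ and independent of the selected triple.

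The final step is to observe that after accounting for $F$, the remaining cost in an optimal alignment equals the Dyck cost of the concatenation $\mathtt{x_\alpha}^{\ell_5} CG_\alpha(t_\alpha^*) \mathtt{y_\alpha'}^{\ell_5} \circ \mathtt{x_\beta}^{\ell_5} CG_\beta(t_\beta^*) \mathtt{y_\beta'}^{\ell_5} \circ \mathtt{x_\gamma}^{\ell_5} CG_\gamma(t_\gamma^*) \mathtt{y_\gamma'}^{\ell_5}$, which is exactly the object bounded by Claim~\ref{claim:clique}. Thus the total optimum is $F + E_3$ iff there exists a triple $t_\alpha^*, t_\beta^*, t_\gamma^*$ forming a $3k$-clique, and strictly larger otherwise; setting $E_C := F + E_3$ proves the claim. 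The main obstacle is the case analysis in the first step: ruling out alignments that spread $\mathtt{x_x}^{\ell_5}$ across several gadgets, or that pair $(x_x')$ with $\mathtt{y_x}$ across different gadgets or across type blocks in ways that might save cost; here the strict geometric separation $\ell_0 \ll \ell_1 \ll \cdots \ll \ell_5$ and the $xy$-typing of the alphabet together ensure no alternative alignment beats the canonical selection.
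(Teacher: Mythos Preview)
Your overall strategy---select one clique gadget per type via the $\mathtt{x}/\mathtt{y}$ padding, pay a fixed cost for the non-selected gadgets, then reduce to Claim~\ref{claim:clique}---is exactly the paper's. But your Step~2 contains a factual error that would throw off the computation of $F$ (and hence $E_C$). You assert that in a non-selected gadget ``the internal CNG/CLG/$g$ structure can be completely self-aligned (since $g$ and $g'$ symbols of each subscript type appear in matching numbers within a non-selected gadget pair).'' This is false: inspect $CG_\alpha(t)$ and you will see it contains only \emph{opening} $\mathtt{g}_{\alpha\gamma}$ and $\mathtt{g}_{\alpha\beta}$ symbols, with no $\mathtt{g}'$ of either subscript; $CG_\gamma(t)$ has only closing $\mathtt{g}'$; and $CG_\beta(t)$ has $\mathtt{g}'_{\alpha\beta}$ and $\mathtt{g}_{\beta\gamma}$, which carry different subscripts and hence cannot match. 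So inside any single non-selected gadget the $g$-blocks (and likewise the CNG/CLG contents) can only be \emph{mismatched} in pairs, not matched. The paper makes exactly this point: its Claim~``close'' computes the per-gadget contribution as $E_l=(|CG_\beta(t)|-2\ell_4)/2$, precisely because everything between the $\mathtt{b}$'s and $\mathtt{b}'$'s lacks internal counterparts.

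Your Step~1 is also looser than it needs to be. The sentence ``any alignment of cost less than $3N\ell_5+6\ell_4+E_3+\ell_5/2$ (much less than $\ell_5$)'' is internally inconsistent, since the optimum itself is of order $3N\ell_5\gg\ell_5$; you cannot conclude ``every'' boundary symbol is matched from a global cost bound alone. The paper instead proves a chain of local exchange arguments (Claims ``existmatches'', ``neighborgadgets'', ``exactone'', supported by Lemmas ``allmismatches'' and ``rightleft'') that show \emph{some} optimal alignment has the single-gadget selection structure, without ever claiming all boundary symbols are matched. Your sketch would go through if you replaced the global cost-threshold reasoning with these exchange arguments and corrected the per-gadget cost in Step~2.
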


The proof of this claim will require several claims and lemmas.
We start with some lemmas about general properties of Dyck Edit Distance. 

	\begin{lemma} \label{allmismatches}
		Let $Z_1$ be a substring of sequence $Z$. Assume that $Z_1$ is of even length. 
		If all symbols symbols from $Z_1$ participate in mismatches and deletions only, then we can modify the alignment so that there is no symbol in $Z_1$ that is aligned to a symbol that is not in $Z_1$.
	\end{lemma}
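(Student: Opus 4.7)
The plan is to modify the alignment so that no position of $Z_1$ is paired outside $Z_1$, without increasing the total cost. Partition the positions of $Z_1$ into $L$ (paired with something left of $Z_1$), $R$ (paired with something right of $Z_1$), $M$ (paired inside $Z_1$), and $D$ (deleted). Let $J_L$ and $J_R$ denote the external partners of $L$ and $R$, respectively. Since every pair touching $Z_1$ is a mismatch (by hypothesis), the cost contribution from $Z_1$ is $|L|+|R|+|M|/2+|D|$, and from $|Z_1|$ even together with $|M|$ even we get that $|L|+|R|+|D|$ is even.

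Two structural facts drive the construction. First, every position of $L \cup R$ must lie at the top level of the nesting tree of $M$-pairs: if $i \in L$ were inside some $M$-pair $(m_1,m_2)$, the pair $(j^L,i)$ and $(m_1,m_2)$ would cross, since $j^L < Z_1 \leq m_1 < i < m_2$. Second, every position of $L$ precedes every position of $R$ in $Z_1$, by an analogous non-crossing argument applied to an $L$-pair versus an $R$-pair. The main modification pairs the positions of $L$ consecutively in left-to-right order, and analogously for $R$; being at the top level, these new internal pairs are automatically non-crossing with all $M$-pairs, with each other, and with every pair lying entirely outside $Z_1$. Symmetrically, I pair the external partners in $J_L$ consecutively among themselves (mismatches to the left of $Z_1$), and $J_R$ likewise to the right. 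The cost tally gives $|L|/2 + |R|/2 + |L|/2 + |R|/2 = |L|+|R|$ new mismatches, exactly replacing the $|L|+|R|$ straddling mismatches of the original alignment, so cost is preserved in this case.

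If $|L|$ and $|R|$ are both odd, I pair the leftover $L$-position with the leftover $R$-position (legal since $L$ precedes $R$ at the top level), and analogously pair the leftover elements of $J_L$ and $J_R$ across $Z_1$; choosing the innermost leftover on each side makes this spanning pair nested inside all other external structure, hence non-crossing. If exactly one of $|L|,|R|$ is odd then $|D|$ must also be odd, and the single internal leftover has to be absorbed by a deletion. When some $d \in D$ sits at the top level, simply pair it with the leftover; otherwise every deletion is nested inside some $M$-pair, and I perform a local rotation: pick an $M$-pair $(m_1,m_2)$ containing such a $d$ and adjacent to the leftover $\ell$, and replace $(m_1,m_2)$ by the two pairs $(m_1,d)$ and $(m_2,\ell)$. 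The two new pairs are non-crossing, and the cost exchange trades one $M$-mismatch, one deletion, and one external mismatch (total $3$) for two internal mismatches and one external deletion (total $3$). The main obstacle is carrying out this rotation safely in all configurations, which I expect to require a careful recursion down the nesting tree of $M$-pairs to locate an appropriate pair $(m_1,m_2)$ adjacent to the leftover and containing a usable $d$; the rest of the proof is a straightforward case analysis of crossings and a direct cost tally.
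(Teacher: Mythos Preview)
Your structural observations are sound: the $L$- and $R$-positions do lie at the top level of the $M$-forest, and every $L$-position precedes every $R$-position. But you make the problem harder than necessary by insisting on preserving the $M$-pairs. The paper's proof simply re-pairs \emph{all} of $Z_1$ consecutively, as $(z_1,z_2),(z_3,z_4),\ldots,(z_{|Z_1|-1},z_{|Z_1|})$, at cost at most $|Z_1|/2$; since $|Z_1|$ is even by hypothesis, no leftover remains inside $Z_1$ at all. The $l=|L|+|R|$ external partners are then paired among themselves (with one deletion if $l$ is odd) at cost $\lceil l/2\rceil$. One checks directly that
\[
\frac{|Z_1|}{2}+\Big\lceil \frac{l}{2}\Big\rceil \;\le\; l+\Big\lceil\frac{|Z_1|-l}{2}\Big\rceil,
\]
where the right-hand side lower-bounds the original contribution of $s\cup Z_1$: the $l$ straddling pairs are mismatches, and the remaining $|Z_1|-l$ symbols of $Z_1$ contribute at least $\lceil(|Z_1|-l)/2\rceil$ via $|M|/2+|D|$. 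This dispenses with the parity case analysis entirely---whether $|L|$ or $|R|$ is odd is irrelevant once you stop protecting $M$.

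Your rotation argument, by contrast, is both unnecessary and (as you yourself note) incomplete. The obstruction you flag is genuine: if the deletion $d$ sits several levels deep, replacing an outer $(m_1,m_2)$ by $(m_1,d)$ will cross the intermediate $M$-pairs. One can push the rotation down the nesting tree level by level, but that amounts to dismantling the $M$-structure along a root-to-leaf path---strictly more work than discarding $M$ from the start. The non-crossing verification for pairing up the external partners $J_L$ and $J_R$ consecutively is the same in both approaches, so nothing is gained by your more delicate internal handling.
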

	\begin{proof}
		Let $l$ denote number of symbols from $Z_1$ that are aligned to symbols that are outside of $Z_1$. Let $s$ denote the set of all these symbols outside of $Z_1$ that are aligned to $Z_1$. There are two cases to consider.
			\begin{itemize}
				\item 
					$l$ is even. Then the $Dyck$ cost induced by symbols in $s$ and in $Z_1$ is at least $S_1:=l+(|Z_1|-l)/2$ by the properties from the statement of the lemma. We do the following modification to the alignment. We align all symbols in $s$ among themselves in pairs. This induces cost $S_2:=l/2$. We align all symbols in $Z_1$ among themselves. This induces cost $S_3=|Z_1|/2$. The total induced cost after the modification is $S_2+S_3\leq S_1$ and we satisfy the requirement in the lemma.
				\item
					$l$ is odd. Then the $Dyck$ cost induced by symbols in $s$ and in $Z_1$ is at least $S_1:=l+(|Z_1|-l+1)/2$. We do the following modification to the alignment. We align all symbols in $s$ among themselves in pairs except one symbol, which we delete (remember that $l$ is odd). This induces cost $S_2:=(l+1)/2$. We align all symbols in $Z_1$ among themselves. This induces cost $S_3=|Z_1|/2$. The total induced cost after the modification is $S_2+S_3\leq S_1$ and we satisfy the requirement in the lemma.
			\end{itemize}
	\end{proof}
	
	Consider optimal alignment of some string $w \in (\Sigma \cup \Sigma')^*$.
	\begin{lemma} \label{rightleft}
		Let $Z$ be a maximal substring of $w$ consisting entirely of symbols $z$ (for some symbol $z$ appearing in $w$).
		Let $Z_0$ be a maximal substring of $w$ consisting entirely of symbols $z_0$.
		Let $|Z|=|Z_0|$ and let there are matches between $Z$ and $Z_0$.
		Also, there are no other maximal substrings of $w$ containing $z$ other than $Z$.
		Then we can increase the $Dyck$ score by at most $2$ by modifying the alignment and get that the symbols of $Z_0$,
		that are matched to $z$, form a substring of $Z_0$ and the substring is suffix or prefix of $Z_0$ (we can choose whether it is a suffix of a prefix). We can also assume that the rest of symbols in $Z_0$ are deleted or mismatched among themselves.
	\end{lemma}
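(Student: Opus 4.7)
Without loss of generality, suppose $z$ is an opening bracket and $z_0=z'$ is its closing counterpart, so $Z$ must appear before $Z_0$ in $w$ (the opening must precede the closing in any match). Write $a$ for the starting index of $Z_0$ and $L=|Z|=|Z_0|$. Enumerate the existing matches between $Z$ and $Z_0$ as $(i_k,j_k)$ for $k=1,\ldots,m$, with $i_1<\cdots<i_m$ in $Z$; non-crossing then forces $j_1>j_2>\cdots>j_m$ in $Z_0$. The suffix and prefix cases are symmetric, so I will describe the construction for the suffix and at the very end choose whichever option yields a smaller cost increase.

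The plan is to replace each match $(i_k,j_k)$ by $(i_k,a+L-k)$, thereby pushing the matched positions of $Z_0$ to the contiguous suffix $S=\{a+L-m,\ldots,a+L-1\}$. A short case analysis shows that these new (still nested) matches do not cross any pair of the old alignment whose endpoints avoid $Z_0$: the key use of the hypothesis ``$Z$ is the unique block of $z$'s'' is that any old pair $(x,y)$ with $x\in Z$ and $y$ outside $Z_0$ was forced to satisfy either $y<i_k$ or $x<i_k<j_k<y$ for every $k$, and both conditions are preserved when $j_k$ moves rightward to $a+L-k$. After installing the new matches I would remove every old pair having an endpoint in $Z_0$, re-pair the positions of $Z_0\setminus S$ among themselves as mismatches (with at most one deletion if $L-m$ is odd), and handle the external endpoints freed by these removals: where possible, a freed external to the left of $Z_0$ is paired with a freed external to the right (forming a new mismatched pair nested inside the outermost new match, so that no crossing is created), and any remaining unpaired externals are deleted.

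The cost analysis is then a bookkeeping argument. Each old pair with an endpoint in $Z_0$ that is removed contributed cost $1$; in the new alignment its contribution is replaced by (i) a mismatch inside $Z_0\setminus S$ of cost $1$, shared between two old $Z_0$ positions, together with (ii) either a re-pairing of two freed externals into a single mismatched pair of cost $1$, or a deletion of a freed external of cost $1$. This exchange preserves the total cost position by position, except at the two ``extreme'' externals (the leftmost external paired into $Z_0$ from the left, and the rightmost one from the right) which cannot be matched to any counterpart without crossing one of the new matches and must therefore be deleted at an extra cost of $1$ each. This yields a total increase of at most $2$; symmetrizing to the prefix option and taking the better of the two finishes the argument. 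The main obstacle is exactly this last step: one must show that, aside from these at-most-two boundary externals, every freed external admits a non-crossing partner, and I expect to prove this by invoking the nested ordering that non-crossing with the new matches imposes on external pairs, combined with the equality $|Z|=|Z_0|$ which guarantees enough room in $Z_0\setminus S$ to absorb the internal re-pairings.
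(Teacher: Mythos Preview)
Your approach diverges from the paper's, and the divergence is exactly where your argument breaks.

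The paper does not try to re-pair freed externals by hand. Instead it writes $Z_0=Z_1Z_2Z_3$, where $Z_2$ is the minimal subinterval containing every position of $Z_0$ that is matched to $Z$. Because $Z$ is the \emph{only} block of $z$'s, no symbol of $Z_1$ or $Z_3$ is matched; they are all mismatched or deleted. The paper then deletes at most one symbol from $Z_1$ (together with its partner) to make $|Z_1|$ even, and likewise for $Z_3$ --- this is the entire source of the ``$+2$''. Now Lemma~\ref{allmismatches} applies to each of $Z_1,Z_3$ and makes them self-contained at no further cost; the external re-pairing you are worried about is absorbed into that lemma. Finally the matches are slid to the suffix, which costs nothing because every non-matched position of $Z_0$ is already paired only inside $Z_0$.

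Your explicit scheme has a concrete flaw. You propose to pair a freed external on the left of $Z_0$ with one on the right, ``nested inside the outermost new match''. But any freed external $e$ to the right of $Z_0$ satisfies $e>a+L-1$, so a pair containing it can never be nested inside $(i_1,a+L-1)$; it must either lie entirely to the right or strictly contain that match. Likewise, there can be freed externals $e<i_1$ (these arise from mismatches into the suffix part $Z_3$ of $Z_0$: a pair $(e,p)$ with $p>j_1$ is forced by non-crossing with $(i_1,j_1)$ to have $e<i_1$). These, too, cannot sit inside $(i_1,a+L-1)$. Hence your left/right pairing is not available in general, and the claim that only two ``extreme'' externals fail is unjustified. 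A further point: the lemma promises that the \emph{caller} may choose suffix or prefix, so you cannot take the cheaper of the two; you must bound each by $2$.

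The fix is precisely what the paper does: instead of managing all freed externals at once, isolate the two wings $Z_1,Z_3$, pay at most $1$ each for parity, and invoke Lemma~\ref{allmismatches}.
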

	\begin{proof}
		Wlog, we will show that we can make the substring to be the suffix.
		We write $Z_0=Z_1 Z_2 Z_3$ so that the first symbol of $Z_2$ is the first symbol of $Z_0$ that is aligned to some symbol $z$ and the last symbol of $Z_2$ is the last symbol of $Z_0$ that is aligned to some symbol $z$. First, we modify the alignment as follows. If the length of $Z_1$ is even, we don't do anything. Otherwise, consider the first symbol of $Z_1$. If it is aligned to some symbol, delete the first symbol of $Z_1$ and the symbol aligned to it by increasing the $Dyck$ cost by $1$. Similarly delete the last symbol of $Z_3$ and the symbol aligned to it if $Z_3$ is of odd length. Now, by Lemma \ref{allmismatches}, we can assume that all symbols in $Z_1$ and $Z_3$ are mismatched among themselves (except, possibly, the first symbol of $Z_1$ and the last symbol of $Z_3$). Now we are at the state when all symbols from $Z_0$ are mismatched among themselves except few that are matched with symbols $z$ from $Z$. Now we can rematch these symbols from $Z$ with symbols $z_0$ from $Z_0$ so that $z_0$ come from suffix of $Z_0$. We see that this do not increase the $Dyck$ score besides two possible deletions.
	\end{proof}

	Now we prove some claims about the properties of the optimal alignment of $S_G$.
	These claims essentially show that any ``bad behaviour", in which we do not align exactly one clique gadget of each type, is suboptimal.
	
	Let  $E_l$ be some value that depends only on $n$ and $k$.
	\begin{claim} \label{close}
		For any gadget $CG_\beta(t)$ from the sequence,
		if none of the $\mathtt{x}_\beta', \mathtt{y}_\beta$ symbols are matched to their counterparts, then
		all the $\mathtt{b}$ symbols from $CG_\beta(t)$ must be matched to their counterparts $\mathtt{b}'$ from $CG_\beta(t)$, and in this case, the gadget contributes $E_l$ to the $Dyck$ score.
		Analogous claims hold for $\alpha,\gamma$.
	\end{claim}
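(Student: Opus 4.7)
My plan is to prove Claim~\ref{close} by working from the outside of $CG_\beta(t)$ inward in three stages. First, I handle the blocks $(\mathtt{x}_\beta')^{\ell_5}$ and $\mathtt{y}_\beta^{\ell_5}$. The only opening bracket that matches $\mathtt{x}_\beta'$ is $\mathtt{x}_\beta$, and by construction $\mathtt{x}_\beta$ appears only in the external padding $\mathtt{x}_\beta^{\ell_5}$ at the start of the $\beta$-section, never inside any $CG_\beta$. Since by hypothesis no $\mathtt{x}_\beta'$ in this gadget matches its counterpart, every one of the $\ell_5$ symbols of $(\mathtt{x}_\beta')^{\ell_5}$ participates only in mismatches or deletions. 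The block is of even length, so Lemma~\ref{allmismatches} applies: we may modify the alignment (without increasing its cost) so that $(\mathtt{x}_\beta')^{\ell_5}$ is self-contained, contributing exactly $\ell_5/2$. The symmetric argument handles $\mathtt{y}_\beta^{\ell_5}$, for a combined outer contribution of $\ell_5$.

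Second, I show that every $\mathtt{b}$ of $CG_\beta(t)$ is matched with a $\mathtt{b}'$ of the same gadget. Since $\mathtt{b}$ and $\mathtt{b}'$ occur exclusively in $\beta$-type clique gadgets --- always as a $\mathtt{b}^{\ell_4}$ prefix paired with a $(\mathtt{b}')^{\ell_4}$ suffix --- the global arrangement of these symbols in the $\beta$-section is a strictly alternating sequence of blocks $\mathtt{b}^{\ell_4}, (\mathtt{b}')^{\ell_4}, \mathtt{b}^{\ell_4}, (\mathtt{b}')^{\ell_4}, \ldots$, one pair per gadget. Suppose for contradiction that some $\mathtt{b}$ in $CG_\beta(t)$ matches a $\mathtt{b}'$ in some different $CG_\beta(t')$. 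By non-crossing, the interior of this cross-gadget match must contain at least one entire $(\mathtt{b}')^{\ell_4}$ block appearing before some entire $\mathtt{b}^{\ell_4}$ block, and a stack-depth / prefix-balance count on the $\mathtt{b},\mathtt{b}'$ subsequence then shows that at some prefix of the interior the number of closings strictly exceeds the number of openings; thus at least one $\mathtt{b}$ and one $\mathtt{b}'$ in the interior must remain unmatched (incurring at least one additional mismatch). There is no offsetting gain elsewhere: in both the canonical and the deviating matching, the outermost $\mathtt{b}/\mathtt{b}'$ bracket confines the middle symbols of each involved gadget identically. Hence the canonical within-gadget matching is strictly cheaper, and in particular every $\mathtt{b}$ in $CG_\beta(t)$ is matched to a $\mathtt{b}'$ in $CG_\beta(t)$.

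Third, combined with Stage~1, the $\mathtt{b}$-nesting forces the entire middle substring
\[
M \;=\; [(\mathtt{g}')^{\ell_3}\,CLG(t)\,(\mathtt{g}')^{\ell_3}]_{\alpha\beta}\,[\mathtt{g}^{\ell_3}\,p(CNG(t))^R\,\mathtt{g}^{\ell_3}]_{\beta\gamma}
\]
to be self-contained. A letter-by-letter inspection shows that no pair of positions inside $M$ can form a valid bracket match: the opening letters inside $[CLG(t)]_{\alpha\beta}$ only have closing counterparts of the form $\#'_{\alpha\beta}, \$'_{\alpha\beta}, \mathtt{0}'_{\alpha\beta}, \mathtt{1}'_{\alpha\beta}$ (which live only in $CG_\alpha$-gadgets), the closing letters of $[p(CNG(t))^R]_{\beta\gamma}$ have their opening counterparts only in $CG_\gamma$-gadgets, and the $\mathtt{g}'_{\alpha\beta}$ and $\mathtt{g}_{\beta\gamma}$ separators similarly have counterparts only outside. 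So every symbol in $M$ must be mismatched or deleted, and the optimal self-contained Dyck cost is exactly $|M|/2$, a value depending only on $n$ and $k$ (since $|CLG(t)|$ and $|CNG(t)|$ do). Summing the three contributions gives $E_l = \ell_5 + \tfrac{1}{2}|M|$, independent of $t$ and of the rest of $S_G$. The analogous claims for $CG_\alpha$ and $CG_\gamma$ follow by verbatim symmetric arguments.

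The main obstacle I anticipate is making the prefix-balance argument of Stage~2 fully rigorous: one must enumerate the possible cross-gadget configurations of $\mathtt{b}$-to-$\mathtt{b}'$ matches (including partial matches where some $\mathtt{b}$ is deleted or mismatched with a non-$\mathtt{b}'$), quantify the imbalance in each case, and verify that no rearrangement of the middle-gadget $\mathtt{g}, \mathtt{g}'$ symbols can compensate. The size hierarchy $\ell_4 \gg \ell_3$ baked into the definition of the $\ell_i$'s provides precisely the slack needed to rule out such compensation.
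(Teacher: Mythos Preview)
Your Stage~1 (self-containing the $(\mathtt{x}_\beta')^{\ell_5}$ and $\mathtt{y}_\beta^{\ell_5}$ blocks via Lemma~\ref{allmismatches}) and your Stage~3 (no internal matches are possible in the middle $M$, so its cost is $|M|/2$) are correct and match the paper. The divergence, and the gap, is in Stage~2.

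Your Stage~2 argues by contradiction only against the single deviation ``some $\mathtt{b}$ in $CG_\beta(t)$ is matched to a $\mathtt{b}'$ in a different $CG_\beta(t')$''. This leaves two substantial cases unaddressed. First, some $\mathtt{b}$ (or $\mathtt{b}'$) in $CG_\beta(t)$ may simply be deleted or mismatched with a non-$\mathtt{b}'$ symbol; your prefix-balance sketch says nothing about why that is suboptimal. Second, and more seriously, you never rule out the configuration in which a symbol of the middle $M$ is matched to something \emph{outside} the gadget while the $\mathtt{b}$'s go unmatched. Your sentence ``the outermost $\mathtt{b}/\mathtt{b}'$ bracket confines the middle symbols of each involved gadget identically'' quietly assumes that such an outermost $\mathtt{b}$--$\mathtt{b}'$ pair already exists, which is exactly the conclusion you are trying to establish. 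The hypothesis of the claim (that the $\mathtt{x}_\beta',\mathtt{y}_\beta$ of this gadget are unmatched) enters your argument only through Stage~1, yet it is precisely what distinguishes $CG_\beta(t)$ from the ``selected'' gadget in the overall optimal alignment, where the $\mathtt{b}$'s are legitimately \emph{not} matched within-gadget.

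The paper handles Stage~2 by a direct local modification rather than a global structural comparison. Let $Z$ be the middle (your $M$). Case~1: if $Z$ has any match with a symbol outside the gadget, then by non-crossing none of this gadget's $\mathtt{b}$'s can be matched to this gadget's $\mathtt{b}'$'s; Lemma~\ref{allmismatches} then self-contains the $\mathtt{b}^{\ell_4}$ and $(\mathtt{b}')^{\ell_4}$ blocks, and one \emph{deletes} all of $Z$ together with its external partners and rematches $\mathtt{b}$ with $\mathtt{b}'$ --- a strict improvement because $\ell_4 \geq 100|Z|$. Case~2: if $Z$ has no external matches, it is already self-contained, and a short case analysis on the number $l$ of external symbols aligned with the $\mathtt{b}^{\ell_4}\cup(\mathtt{b}')^{\ell_4}$ blocks (distinguishing $l=0$, $l\geq 1$ with no matches among them, and $l\geq 1$ with a match, and in the last case parity of $l$) shows that rematching $\mathtt{b}$ to $\mathtt{b}'$ within the gadget weakly decreases cost. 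This two-case swap argument uses the size hierarchy $\ell_4 \gg \ell_3$ once and avoids the enumeration of cross-gadget configurations you anticipate in your final paragraph.
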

	\begin{proof}
		By Lemma \ref{allmismatches} we can assume that all symbols $\mathtt{x}_\beta'$ are mismatched among themselves. The same we can say about symbols $\mathtt{y}_\beta$. Let $Z$ be the substring of the gadget between symbols $\mathtt{x}_\beta'$ and $\mathtt{y}_\beta$. If $Z$ has matches with some symbols, then, by the construction of $w$, no symbol $\mathtt{b}$ or $\mathtt{b}'$ is matched to its counterpart and by Lemma \ref{allmismatches}, we get that $\mathtt{b}$ and $\mathtt{b}'$ are mismatched within themselves. But now we can decrease the $Dyck$ score by deleting all symbols from $Z$ and all symbols that $Z$ is aligned to outside the gadget. This increases the $Dyck$ score but then we can decrease it substantially by matching all symbols $\mathtt{b}$ to their counterparts $\mathtt{b}'$ from the gadget. In the end we get smaller $Dyck$ score because $\ell_4\geq 100 |Z|$. Now it remains to consider the case when symbols in $Z$ do not participate in matches. But then by Lemma~\ref{allmismatches}, we again conclude that symbols in $Z$ participate only in mismatches and only among themselves. 
		Let $s$ be the union of all symbols $\mathtt{b}$ and $\mathtt{b}'$ of the gadget and all symbols that these symbols are aligned to outside the gadget. Let $l$ denote the number of symbols from $s$ that are not coming from the gadget. Consider two cases.
		\begin{itemize}
			\item $l=0$. We satisfy requirements of the claim by matching all symbols $\mathtt{b}$ to their counterparts $\mathtt{b}'$.
			\item There is no symbol among the $l\geq 1$ symbols that participate in a match. We have that all symbols in $s$ contribute at least $l$ to the $Dyck$ score. We modify the alignment as follows. We match all symbols $\mathtt{b}$ to their counterparts $\mathtt{b}'$. We match the rest $l$ symbols from $s$ among themselves. If there is odd number of them, we delete one. This contributes at most $S_1:=(l+1)/2$ to the Dyck score after the modification. We have that $S_1\leq l$ for $l\geq 1$.
			\item Complement of the previous two cases: there is a symbol among the $l \geq 1$ symbols that participates in a match.
			Wlog, symbols $\mathtt{b}$ from the gadget participate in at least one matching. Then all symbols $\mathtt{b}'$ from the gadget do not participate in any matching and by Lemma~\ref{allmismatches} we have that all symbols $\mathtt{b}'$ from the gadget are mismatched among themselves. Therefore, we can assume that $l\leq \ell_4$. We need to consider two subcases.
			\begin{itemize}
				\item 
					$l$ is even. The symbols in $s$ contribute at least $S_1=(2\ell_4-l)/2$ to the Dyck score. We do the following modification to the algorithm. We match all symbols $\mathtt{b}$ to their counterparts $\mathtt{b}'$. We mismatch $l$ symbols in pairs among themselves. After the modification, the $Dyck$ contribution of symbols from $s$ is $S_2:=l/2$. We see that $S_2\leq S_1$.
				\item
					$l$ is odd. The symbols in $s$ contribute at least $S_1=(2\ell_4-l+1)/2$ to the Dyck score (at least one symbol is deleted because $2\ell_4-l$ is odd). We do the following modification to the algorithm. We match all symbols $\mathtt{b}$ to their counterparts $\mathtt{b}'$. We mismatch $l$ symbols in pairs among themselves except that we delete one symbol. After the modification, the $Dyck$ contribution of symbols from $s$ is $S_2:=(l+1)/2$. We see that $S_2\leq S_1$.
			\end{itemize}
		\end{itemize}
		Because symbols in between $\mathtt{b}$ and $\mathtt{b}'$ don't have their counterparts among themselves, the gadget contributes $E_l:=(|CG_\beta(t)|-2\ell_4)/2$ to the $Dyck$ cost. This quantity only depends on $n$ and $k$ and this can be verified from the construction of $S_G$.
	\end{proof}
	
	\begin{claim} \label{existmatches}
		In some optimal alignment, we can assume that there is some symbol $\mathtt{x}_\beta$ that is aligned to its counterpart $\mathtt{x}_\beta'$. Analogous statements can be proved about symbols $\mathtt{y}_\beta$, $\mathtt{y}_\beta'$, $\mathtt{x}_\alpha$, $\mathtt{x}_\alpha'$, $\mathtt{y}_\alpha$, $\mathtt{y}_\alpha'$, $\mathtt{x}_\gamma$, $\mathtt{x}_\gamma'$, $\mathtt{y}_\gamma$, $\mathtt{y}_\gamma'$.
	\end{claim}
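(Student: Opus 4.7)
The plan is to prove Claim~\ref{existmatches} by contradiction. Suppose $A$ is an optimal alignment of $S_G$ in which no symbol of the unique $\mathtt{x}_\beta^{\ell_5}$ block at the start of the $\beta$-section is matched with any $\mathtt{x}_\beta'$ symbol. I will construct an alignment $A'$ of strictly smaller cost, contradicting optimality of $A$, and conclude that any optimal alignment must contain some matched $\mathtt{x}_\beta$--$\mathtt{x}_\beta'$ pair.

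First, under our hypothesis every symbol of $B_1 := \mathtt{x}_\beta^{\ell_5}$ participates only in mismatches or deletions, since a match would require being paired with some $\mathtt{x}_\beta'$. Thus Lemma~\ref{allmismatches} lets us modify $A$, without increasing its cost, so that no symbol of $B_1$ is aligned outside $B_1$. Applying the same reasoning to the first $(\mathtt{x}_\beta')^{\ell_5}$ block $B_2$ inside $CG_\beta(t_1)$---whose symbols also participate only in mismatches or deletions under our hypothesis---we may assume $B_2$ also interacts only with itself. Hence the combined contribution of $B_1$ and $B_2$ to the Dyck cost of $A$ is at least $\ell_5 - 1$ (mismatching within each block in pairs, possibly deleting one symbol per block).

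Now define $A'$ from the modified $A$ by deleting all the internal alignments of $B_1$ and $B_2$ and instead pairing the $i$-th symbol of $B_1$ with the $i$-th symbol of $B_2$. These $\ell_5$ new pairs are all matches and contribute $0$. Between $B_1$ and $B_2$ in $S_G$ lies exactly the block $\mathtt{b}^{\ell_4}$ at the start of $CG_\beta(t_1)$; any symbol of that block that was previously matched to a $(\mathtt{b}')$ in the $(\mathtt{b}')^{\ell_4}$ block at the end of $CG_\beta(t_1)$ would now cross the new $B_1$--$B_2$ pairing, so I break those alignments and mismatch both $\mathtt{b}^{\ell_4}$ and $(\mathtt{b}')^{\ell_4}$ among themselves, adding at most $\ell_4 + 1$ to the cost. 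All other alignments of $A$ are retained in $A'$, and this is consistent because nothing outside $[B_1,B_2]$ touches anything inside after Lemma~\ref{allmismatches} is applied.

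The net change in cost is at most $-(\ell_5 - 1) + (\ell_4 + 1)$, which is strictly negative since $\ell_5/\ell_4 = 1000\,n^2$. Thus $A'$ strictly improves upon $A$, the desired contradiction. The remaining pairs $(\mathtt{y}_\beta,\mathtt{y}_\beta')$, $(\mathtt{x}_\alpha,\mathtt{x}_\alpha')$, $(\mathtt{y}_\alpha,\mathtt{y}_\alpha')$, $(\mathtt{x}_\gamma,\mathtt{x}_\gamma')$, $(\mathtt{y}_\gamma,\mathtt{y}_\gamma')$ are handled by an identical argument applied to the analogous pair of adjacent blocks; for a $\mathtt{y}$-pair we use the last $\mathtt{y}$-block inside the last $CG$-gadget together with the terminal $\mathtt{y}'$-block of the corresponding section, so the interior consists only of a single $(\mathtt{b}')^{\ell_4}$-type block of length $\ell_4$. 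The main subtlety is verifying that applying Lemma~\ref{allmismatches} simultaneously to both blocks $B_1$ and $B_2$ really isolates them so that the full $\ell_5 - O(1)$ savings can be realized, and that the only "useful work" disturbed by the swap is the $\mathtt{b}$--$\mathtt{b}'$ pairing, whose total size $\ell_4$ is dwarfed by $\ell_5$.
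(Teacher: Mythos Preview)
Your approach is essentially the same as the paper's: assume no $\mathtt{x}_\beta$--$\mathtt{x}_\beta'$ match, use Lemma~\ref{allmismatches} to isolate the outer $\mathtt{x}_\beta^{\ell_5}$ block and the first $(\mathtt{x}_\beta')^{\ell_5}$ block, then pair them up and argue the gain of roughly $\ell_5$ dwarfs the $O(\ell_4)$ damage done to the intervening $\mathtt{b}$ block. However, there is a small but genuine gap in your accounting.

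You write ``nothing outside $[B_1,B_2]$ touches anything inside after Lemma~\ref{allmismatches} is applied.'' This is not justified: Lemma~\ref{allmismatches} was applied only to $B_1$ and $B_2$, not to the $\mathtt{b}^{\ell_4}$ block sitting between them. A symbol $\mathtt{b}$ in that block may well be aligned (as a mismatch, or as a match to some $\mathtt{b}'$ in a \emph{later} $CG_\beta$ gadget) to a position to the right of $B_2$, or even to the left of $B_1$; any such pair would cross your new $B_1$--$B_2$ pairs. Your fix ``mismatch both $\mathtt{b}^{\ell_4}$ and $(\mathtt{b}')^{\ell_4}$ among themselves, adding at most $\ell_4+1$'' only accounts for the case where the $\mathtt{b}$'s were matched to the $\mathtt{b}'$'s of \emph{this same} gadget, and silently orphans any other partners of the $\mathtt{b}$'s (and of the $\mathtt{b}'$'s), leaving their deletion cost unaccounted for.

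The repair is easy and is exactly what the paper does: instead of trying to re-pair, simply delete every symbol of the $\mathtt{b}^{\ell_4}$ block together with whatever it was aligned to. This increases the cost by at most $2\ell_4$, after which the $B_1$--$B_2$ matching is consistent and saves $\ell_5$; since $\ell_5 \gg 2\ell_4$ the contradiction goes through. With that correction your argument is complete and coincides with the paper's proof.
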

	\begin{proof}
		Suppose that $\mathtt{x}_\beta$ is not aligned to any symbol $\mathtt{x}_\beta'$.
		We will modify the alignment so that all symbols $\mathtt{x}_\beta$ are aligned to $\mathtt{x}_\beta'$ coming from the first substring $f$ of $S_G$ consisting entirely of $\mathtt{x}_\beta'$.
		From the statement we have that all symbol $\mathtt{x}_\beta$ and all symbols from $f$ are mismatched or deleted. Therefore, by Lemma \ref{allmismatches} we get that all symbols $\mathtt{x}_\beta$ are mismatched among themselves and all symbols in $f$ are mismatched among themselves. Now we modify the alignment as follows to achieve our goal. We delete all symbols $b$ between $\mathtt{x}_\beta$ and $f$. Also, we delete all the symbols that the deleted $b$s were aligned to before the deletion. This increases $Dyck$ cost by at most $2\ell_4$. Then we match all symbols $\mathtt{x}_\beta$ to $\mathtt{x}_\beta'$ in pairs. This decreases $Dyck$ cost by $\ell_5$. As a result we decreases the $Dyck$ cost because $\ell_5\gg 2\ell_4$.
	\end{proof}
	
	\begin{claim} \label{neighborgadgets}
		In some optimal alignment, there is a symbol $\mathtt{x}_\beta'$ that is matched to a symbol $\mathtt{x}_\beta$ and
		there is a symbol $\mathtt{y}_\beta$ that is matched to a symbol $\mathtt{y}_\beta'$ so that
		the symbols $\mathtt{x}_\beta'$ and $\mathtt{y}_\beta$ come from the same gadget $CG_\beta(t)$.
		Analogous statements can be proved about symbols $\mathtt{x}_\alpha$, $\mathtt{x}_\alpha'$, $\mathtt{y}_\alpha$, $\mathtt{y}_\alpha'$, $\mathtt{x}_\gamma$, $\mathtt{x}_\gamma'$, $\mathtt{y}_\gamma$, $\mathtt{y}_\gamma'$.
	\end{claim}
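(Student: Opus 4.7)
The plan is to combine Claim~\ref{existmatches} with a non-crossing analysis and, when the matches are not yet in the same gadget, a cost-preserving alignment modification. The argument is identical for $\alpha$ and $\gamma$, so I focus on $\beta$. All prefix $\mathtt{x}_\beta$'s sit in a single leftmost block and all suffix $\mathtt{y}_\beta'$'s sit in a single rightmost block, while the $\mathtt{x}_\beta'$ and $\mathtt{y}_\beta$ letters appear only inside the clique gadgets $CG_\beta(t)$; this asymmetry is what drives the argument.

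First I invoke Claim~\ref{existmatches} twice to obtain, in some optimal alignment, a matched pair $(\mathtt{x}_\beta, \mathtt{x}_\beta')$ whose closing bracket lies in a gadget $CG_\beta(t_1)$ and a matched pair $(\mathtt{y}_\beta, \mathtt{y}_\beta')$ whose opening bracket lies in a gadget $CG_\beta(t_2)$. Denote the four positions by $a<b$ and $c<d$. Because $a$ is in the prefix block and $d$ in the suffix block, $a<c$ and $b<d$, so non-crossing forces $b<c$; since within every gadget the $\mathtt{x}_\beta'$ block precedes the $\mathtt{y}_\beta$ block, this means $t_1$ is equal to $t_2$ or strictly to its left in the gadget order. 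If $t_1=t_2$ we are done, so assume $t_1$ strictly precedes $t_2$.

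Next I reduce to a "clean" sub-case. If some symbol of $t_1$'s $\mathtt{y}_\beta$ block happens to be matched to a suffix $\mathtt{y}_\beta'$, then $t_1$ already witnesses the claim; symmetrically for $t_2$ via its $\mathtt{x}_\beta'$ block. So I may assume that no $\mathtt{y}_\beta$ of $t_1$ is matched to a suffix $\mathtt{y}_\beta'$ and no $\mathtt{x}_\beta'$ of $t_2$ is matched to a prefix $\mathtt{x}_\beta$. In this case I modify the alignment without increasing its cost. Let $m$ be the number of $\mathtt{x}_\beta'$'s of $t_1$ currently matched to prefix symbols. Unpair them, then match the same $m$ prefix $\mathtt{x}_\beta$'s to the first $m$ symbols of $t_2$'s $\mathtt{x}_\beta'$ block in the (unique) nested way; the cross-checking from Step~2 applied to the new pairs $(a,b_{\text{new}})$ and $(c,d)$ still gives $b_{\text{new}}<c$ because $t_2$'s $\mathtt{x}_\beta'$ block precedes $t_2$'s $\mathtt{y}_\beta$ block, and no other pair in the alignment crosses the new pairs since by assumption no other symbol of $t_2$'s $\mathtt{x}_\beta'$ block was matched outside itself. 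Now $t_2$'s $\mathtt{x}_\beta'$ block and $t_2$'s $\mathtt{y}_\beta$ block both contain matched symbols, witnessing the claim.

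The main obstacle is the bookkeeping: I have to argue that freeing the $m$ matches in $t_1$'s $\mathtt{x}_\beta'$ block and inserting $m$ new matches in $t_2$'s $\mathtt{x}_\beta'$ block does not inflate the Dyck cost. The two blocks have exactly the same length $\ell_5$, so by Lemma~\ref{allmismatches} the $\ell_5-m$ leftover symbols in $t_2$'s block (previously mismatched among themselves) and the $\ell_5$ symbols in $t_1$'s block (now all free) can each be internally re-paired with at most $\lceil \ell_5/2 \rceil$ mismatches, exactly matching what the two blocks jointly contributed before the swap. Any interior symbols in gadgets strictly between $t_1$ and $t_2$ are untouched because they contain no $\mathtt{x}_\beta'$ outside their own gadget and the separators $\mathtt{b},\mathtt{b}'$ are unaffected. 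This shows the modified alignment is still optimal and satisfies the desired property.
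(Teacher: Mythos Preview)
Your argument has a genuine gap in the non-crossing analysis. The modification you propose---transferring the $m$ prefix-to-$t_1$ matches to prefix-to-$t_2$ matches---enlarges the span of each such pair, and you have not ruled out pairs in the original alignment that were consistent with the old span but cross the new one. Your justification ``no other pair in the alignment crosses the new pairs since by assumption no other symbol of $t_2$'s $\mathtt{x}_\beta'$ block was matched outside itself'' only controls pairs with an endpoint \emph{in} $t_2$'s $\mathtt{x}_\beta'$ block; it says nothing about pairs whose endpoints straddle the new span elsewhere.

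Concretely, nothing you assumed precludes an intermediate gadget $t_3$ strictly between $t_1$ and $t_2$ having some $\mathtt{x}_\beta'$ matched to a prefix $\mathtt{x}_\beta$. Such a pair $(p',q')$, with $p'$ in the prefix and $q'$ in $t_3$'s $\mathtt{x}_\beta'$ block, was nested around each old pair $(p_i,q_i^{\mathrm{old}})$ (since $q_i^{\mathrm{old}}$ lies in $t_1$, to the left of $q'$), which forces $p'<p_i$. After you move $q_i$ to $t_2$'s block you have $q_i^{\mathrm{new}}>q'$, giving $p'<p_i<q'<q_i^{\mathrm{new}}$, a crossing. The same obstruction arises from an intermediate gadget with a matched $\mathtt{y}_\beta$, from a match between an $\alpha$-gadget's $[\cdot]_{\alpha\beta}$ part and a $\beta$-gadget lying between $t_1$ and $t_2$, or from $t_2$'s own $\mathtt{b}^{\ell_4}$ block (which sits to the left of $t_2$'s $\mathtt{x}_\beta'$ block) if any of those $\mathtt{b}$'s are aligned to the right. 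You also have not argued that the $m$ positions you occupy in $t_2$'s block, and the $\ell_5-m$ positions you free in $t_1$'s block, were previously self-contained rather than mismatched to outside symbols, so the cost bookkeeping is not yet justified either.

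The paper's proof sidesteps all of this with two ingredients your argument does not use. First, it invokes Claim~\ref{close} to strip away every intermediate gadget $t_3$ between $t_1$ and $t_2$, so after the reduction the two gadgets are adjacent and there is nothing in between to be crossed. Second, rather than attempting a cost-neutral transfer, it exploits the $\mathtt{b},\mathtt{b}'$ padding: in the configuration $t_1\neq t_2$ the $\mathtt{b}$'s of $t_1$ cannot be matched to the $\mathtt{b}'$'s of $t_1$ (they are blocked by the $\mathtt{x}_\beta$--$\mathtt{x}_\beta'$ matches), so after emptying $t_1$'s $\mathtt{x}_\beta'$ block one can match them and gain $\ell_4$, which swamps the $O(\ell_3)$ damage from the transfer. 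This strict decrease contradicts optimality, so $t_1=t_2$. The $\mathtt{b},\mathtt{b}'$ slack is the key mechanism, and it is absent from your attempt.
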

	\begin{proof}
		By Claim \ref{existmatches}, $\mathtt{x}_\beta$ is aligned to some sequence consisting of $\mathtt{x}_\beta'$. Suppose that the sequence comes from gadget $CG_\beta(t_1)$ for some $t_1$.
		Also, by Claim \ref{existmatches}, $\mathtt{y}_\beta'$ is aligned to some sequence consisting of $\mathtt{y}_\beta$. Suppose that the sequence comes from gadget $CG_\beta(t_2)$ for some $t_2$.
		We want to prove that $t_1=t_2$. Suppose that this is not the case and $CG_\beta(t_1)$ comes to the left of $CG_\beta(t_2)$ (the order can't be reverse by the construction of $S_G$ and because the alignments can't cross).
		Suppose that there is some other gadget $CG_\beta(t_3)$ in between $CG_\beta(t_1)$ and $CG_\beta(t_2)$. Then we can verify that $CG_\beta(t_1)$ satisfy the conditions of Claim \ref{close} and we can assume that all symbols in $CG_\beta(t_3)$ are aligned with symbols in $CG_\beta(t_3)$. Therefore, we can remove gadget $CG_\beta(t_3)$ from $S_G$ (because it does not interact with symbols outside it) and this decreases the $Dyck$ cost by $E_{l}$. We do that until $CG_\beta(t_1)$ is to the left from $CG_\beta(t_2)$ and they are neighboring. Now we will change the alignment so that $\mathtt{x}_\beta$ is aligned to a symbol $\mathtt{x}_\beta'$ from $CG_\beta(t_2)$ and as a result we will decrease $Dyck$ cost. Now we can verify from the construction of $S_G$ that symbols $\mathtt{b}$, $\mathtt{y}_\beta$, $\mathtt{b}'$ from $CG_\beta(t_1)$ do not participate in matches. Also, symbols $\mathtt{b}$ and $\mathtt{x}_\beta'$ from $CG_\beta(t_2)$ do not participate in matches. By Lemma \ref{allmismatches}, we conclude that all these symbols have mismatches among themselves. 
		Let $g$ be the sequence between symbols $\mathtt{x}_\beta'$ and $\mathtt{y}_\beta$ in $CG_\beta(t_1)$.
		Now we modify the alignment so that the symbols in sequence $g$ have mismatches only among themselves and the symbols that were aligned to symbols in $g$ are deleted. This increases $Dyck$ cost by at most $10|g|\leq 100\ell_3=:S_1$. Some symbols $\mathtt{x}_\beta'$ are aligned to symbols outside $CG_\beta(t_1)$. We transfer these alignmets of symbols $\mathtt{x}_\beta'$ from $CG_\beta(t_1)$ to symbols $\mathtt{x}_\beta'$ in $CG_\beta(t_2)$ so that we only have mismatches among $\mathtt{x}_\beta'$ in $CG_\beta(t_1)$ and we don't change the $Dyck$ cost. Now we can align all symbols $\mathtt{b}$ in $CG_\beta(t_1)$ to $\mathtt{b}'$ in $CG_\beta(t_1)$ in pairs. This decreases the $Dyck$ cost by $\ell_4$. In the end, we decreased the $Dyck$ cost by $\ell_4-S_1>0$ and we proved what we wanted.
	\end{proof}
	
	\begin{claim} \label{exactone}
		In some optimal alignment, the only symbols $\mathtt{x}_\beta'$ that symbols $\mathtt{x}_\beta$ are matched to, come from the same gadget $CG_\beta(t)$, and
		the only symbols $\mathtt{y}_\beta$ that symbols $\mathtt{y}_\beta'$ are matched to, come from the same gadget $CG_\beta(t)$.
		In both cases it is the same gadget $CG_\beta(t)$.
		Analogous statements can be proved about symbols $\mathtt{x}_\alpha$, $\mathtt{x}_\alpha'$, $\mathtt{y}_\alpha$, $\mathtt{y}_\alpha'$, $\mathtt{x}_\gamma$, $\mathtt{x}_\gamma'$, $\mathtt{y}_\gamma$, $\mathtt{y}_\gamma'$.
	\end{claim}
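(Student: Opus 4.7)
The plan is to leverage Claim~\ref{neighborgadgets} to pin down a canonical gadget $CG_\beta(t)$ that contains both an externally matched $\mathtt{x}_\beta'$ and an externally matched $\mathtt{y}_\beta$, and then to show that any further external $\mathtt{x}_\beta$ or $\mathtt{y}_\beta'$ match reaching a \emph{different} gadget can be rewired at strictly smaller Dyck cost, contradicting optimality. Concretely, fix the $\mathtt{x}$-anchor $(i,j)$ and $\mathtt{y}$-anchor $(i',j')$ promised by Claim~\ref{neighborgadgets}, both lying inside $t$ and satisfying $i<j<i'<j'$.

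First I would use pure non-crossing reasoning to localise where ``extra'' matches can live: any $\mathtt{x}_\beta$ paired with an $\mathtt{x}_\beta'$ outside $t$ must land in a gadget $t'$ strictly to the left of $t$, since a match to the right would sit inside $[i',j']$ and cross the $\mathtt{y}$-anchor. Symmetrically, any extra $\mathtt{y}_\beta'$ match must fall into a gadget $t''$ strictly to the right of $t$. A further non-crossing check shows that a left-lying $t'$ cannot also have its $\mathtt{y}_\beta$ block externally matched (that would cross the $\mathtt{x}$-anchor), and likewise for $t''$. So the only ``extra'' configuration to rule out is a pure $\mathtt{x}_\beta\to\mathtt{x}_\beta'$ match into some $t'\neq t$ to the left of $t$ (the $\mathtt{y}_\beta'$ case is entirely symmetric).

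The decisive step is the swap. Given such $t'$ with $r\geq 1$ external $\mathtt{x}_\beta'$ matches, I would delete all $r$ of them and reroute the freed $\mathtt{x}_\beta$ symbols into unused slots in $t$'s $(\mathtt{x}_\beta')^{\ell_5}$ block. Since both blocks have width exactly $\ell_5$ there is always room, and applying Lemma~\ref{rightleft} to both blocks so the matched positions form compatible prefixes/suffixes, the rerouted matches are non-crossing. The rewiring is exactly cost-neutral in the $\mathtt{x}_\beta/\mathtt{x}_\beta'$ budget, because the total number of unmatched closing brackets across $t$ and $t'$ is preserved. What it buys is a full internal pairing of $t'$'s $\mathtt{b}^{\ell_4}$ with its $(\mathtt{b}')^{\ell_4}$: previously any such $(p,q)$ would have been crossed by one of the deleted $(i_2,j_2)$, but after the deletion it is unobstructed. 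Pairing these two blocks saves $\ell_4$ in Dyck cost, while the only extra price is at most $O(\ell_3)$ from making the interior of $t'$ self-contained (the bound coming from the same Claim~\ref{close}-style accounting already used in the earlier proofs). Since $\ell_4 = (1000\,n^2)\cdot\ell_3\gg \ell_3$ by construction, the modified alignment is strictly cheaper, delivering the contradiction. The $\mathtt{y}_\beta'$ case is symmetric, and the same scheme works verbatim for the $\alpha$ and $\gamma$ variants.

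The main obstacle is carrying out the swap cleanly: one must verify that rewiring $r$ matches from $t'$ to $t$ does not accidentally invalidate any other match we are keeping, and that fully pairing $\mathtt{b}\leftrightarrow\mathtt{b}'$ across $t'$ is compatible with any external matches still used by its interior — for instance $\mathtt{g}_{\beta\gamma}$ symbols in $t'$ reaching into a $\gamma$-gadget, which would now be crossed by the long span $(p,q)$. Both issues are controlled by the hierarchy $\ell_0\ll\ell_1\ll\cdots\ll\ell_5$ already used throughout the section: any interior external match we are forced to drop costs at most $O(\ell_3)$ in substitutions/deletions, and this is dwarfed by the $\ell_4$ savings. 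The bookkeeping therefore mirrors that of Claims~\ref{close} and \ref{neighborgadgets}, and no new idea beyond the swap is needed.
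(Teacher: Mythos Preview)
Your approach is essentially the same as the paper's: both argue by contradiction, pick two gadgets receiving $\mathtt{x}_\beta$--$\mathtt{x}_\beta'$ matches, transfer all such matches from the earlier gadget into the later one, and then cash in the $\ell_4$ savings by pairing that earlier gadget's $\mathtt{b}^{\ell_4}$ block with its $(\mathtt{b}')^{\ell_4}$ block. The paper frames it as ``take any two such gadgets $t_1,t_2$ and swap from $t_1$ into $t_2$,'' while you first invoke Claim~\ref{neighborgadgets} to anchor the canonical $t$ and then localise the extra gadget $t'$ to its left via non-crossing with the $\mathtt{y}$-anchor; these are equivalent.

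One remark: the obstacle you flag at the end --- that interior symbols of $t'$ (e.g.\ $\mathtt{g}_{\beta\gamma}$) might still be matched externally and would be crossed by the new $\mathtt{b}$--$\mathtt{b}'$ span --- is actually a non-issue, and the paper exploits this. In the \emph{original} alignment, any match from $t'$'s interior to the $\alpha$ or $\gamma$ sections already crosses the $\mathtt{x}$-anchor $(i,j)$ into $t$ (since $t'\subset(i,j)$ but those sections lie outside $(i,j)$), and any match to elsewhere in the $\beta$ section crosses the match $(i_2,j_2)$ into $t'$ that you are about to delete. So before the swap there are no interior-to-exterior matches in $t'$ at all, and after the swap you are free to pair $\mathtt{b}$ with $\mathtt{b}'$ with zero cleanup cost. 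The $O(\ell_3)$ cushion you budget for is not needed; the rewiring is cost-neutral up to the $O(1)$ parity corrections from Lemma~\ref{rightleft}, exactly as the paper records.
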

	\begin{proof}
		Suppose that $\mathtt{x}_\beta$ is matched to symbols $\mathtt{x}_\beta'$ coming from two different gadgets $CG_\beta(t_1)$ and $CG_\beta(t_2)$. $CG_\beta(t_1)$ comes earlier in $S_G$ than $CG_\beta(t_2)$. Assume that there is no gadget $CG_\beta(t_3)$ in between $CG_\beta(t_1)$ and $CG_\beta(t_2)$ in sequence $S_G$. We can make this assumption because otherwise we can remove $CG_\beta(t_3)$ as in Claim \ref{close}. We can check that all symbols $x_\beta'$ and $y_\beta$ from $CG_\beta(t_3)$ do not participate in matches and thus we satisfy the requirements of Claim~\ref{close}. Now we can check that symbols $\mathtt{b}$, $\mathtt{y}_\beta$, $\mathtt{b}'$ in $CG_\beta(t_1)$ do not participate in matches. Also, symbols between $\mathtt{x}_\beta'$ and $\mathtt{y}_\beta$ in $CG_\beta(t_1)$ do not participate in matches. Also, symbols $\mathtt{b}'$ do not participate in matches. Therefore, by Lemma \ref{allmismatches} we conclude that all these symbols participate in mismatches only among themselves. Let $X_1$ denote sequence of $\mathtt{x}_\beta'$ from $CG_\beta(t_1)$ and $X_2$ denote sequence of $\mathtt{x}_\beta'$ from $CG_\beta(t_2)$. By Lemma \ref{rightleft}, we can assume that symbols $\mathtt{x}_\beta'$ from $X_1$ and $X_2$ that are matched to $\mathtt{x}_\beta$ form suffix in both sequences and the rest of symbols in both sequences are mismatched among themselves or deleted. The corresponding modifications increases the $Dyck$ cost by at most $\leq 4=:S_1$. Let $Z_1$ be the suffix of $X_1$ and $Z_2$ be the suffix of $X_2$. $|Z_1|+|Z_2|$ is less or equal to the total number of symbols $\mathtt{x}_\beta$ in $S_G$ by the construction of $S_G$. Suppose that $|Z_1|$ is even. We mismatch all symbols in $Z_1$ among themselves and match resulting unmatched $|Z_1|$ symbols $\mathtt{x}_\beta$ to $\mathtt{x}_\beta'$ from $X_2$. This does not change $Dyck$ cost. Suppose that $|Z_1|$ is odd, then there is a deletion among symbols in $X_1$ that are not in $Z_1$. We do mismatches among symbols $Z_1$ and the one deleted. We match the resulting unmatched $|Z_1|$ symbols $\mathtt{x}_\beta$ to $\mathtt{x}_\beta'$ from $X_2$. We can check that we can do this matching so that the $Dyck$ cost do not increase. Now we can match all symbols $\mathtt{b}$ from $CG_\beta(t_1)$ to their counterparts $\mathtt{b}'$ in $CG_\beta(t_1)$. This decreases the $Dyck$ cost by $S_2:=\ell_4$. In total, we decrease the $Dyck$ cost by $\geq S_2-S_1>0$.
	\end{proof}

	We note that in the proof of Claim \ref{exactone} we remove $3N-3$ cliques from the graph ($N$ is the number of $k$-cliques in the graph), each removal costing $E_l$. After all the removals, we arrive to a sequence of form as required in Claim \ref{claim:clique}. Thus, we set $E_C:=(3N-3)E_l+E_c$ and our proof for Claim~\ref{claim:main} is finished.

	We are now ready to show that the construction of $S_G$ from the graph $G$ proves Theorem~\ref{thm:dyck}.

\begin{reminder}{Theorem~\ref{thm:dyck}}
If Dyck edit distance on a sequence of length $n$ can be solved in $T(n)$ time, then $3k$-Clique on $n$ node graphs can be solved in $O\left(T \left(n^{k+O(1)} \right) \right)$ time, for any $k\geq 1$. 
Moreover, the reduction is combinatorial.
\end{reminder}

\begin{proof}
Given a graph $G$ on $n$ nodes we construct the sequence $S_G$ as described above.
The sequence can be constructed in $O(k^{O(1)}\cdot n^{k+O(1)})$ time, by enumerating all subsets of $k$ nodes and that it has length $O(k^{O(1)}\cdot n^{k+O(1)})$.
Thus, an algorithm for Dyck Edit Distance as in the statement returns $Dyck$ score of $S_G$ in $T(n^{k/3+O(1)})$ time (treating $k$ as a constant) and by Claim~\ref{claim:main} this score determines whether $G$ contains a $3k$-clique.
All the steps in our reduction are combinatorial.
\end{proof}

\medskip
\paragraph{Acknowledgements.}
We would like to thank Piotr Indyk for a discussion that led to this work, and
Roy Frostig for introducing us to many modern works on CFG parsing.
We also thank Alex Andoni, Ryan Williams, and the anonymous reviewers for comments.
A.B. was supported by NSF and Simons Foundation.
 A.A. and V.V.W. were supported by a Stanford School of Engineering Hoover Fellowship, NSF Grant CCF-1417238, NSF Grant CCF-1514339, and BSF Grant BSF:2012338.

\bibliographystyle{alpha}
\bibliography{ref}

\end{document}